\documentclass[submission,copyright,creativecommons]{eptcs}
\providecommand{\event}{AiML 2026} 

\usepackage{aiml26}

\usepackage{amsmath}
\usepackage{amsthm}
\usepackage{amssymb}
\usepackage{colonequals}
\usepackage{proof}
\usepackage{multirow}

\makeatletter
\newtheorem*{rep@theorem}{\rep@title}
\newcommand{\newreptheorem}[2]{%
\newenvironment{rep#1}[1]{%
 \def\rep@title{#2 \ref{##1}}%
 \begin{rep@theorem}}%
 {\end{rep@theorem}}}
\makeatother

\newreptheorem{theorem}{Theorem}

\usepackage{hyperref}

\usepackage{virginialake}
\vlnosmallleftlabels
\vlnostructuressyntax

\usepackage{tikz}
\usepackage[inline]{enumitem}
\usepackage{multicol}

\usepackage{ifthen}
\usepackage{xparse}

\newcommand{\paaras}[1]
{\begin{color}{blue}(Paaras: #1)\end{color}}
\newcommand{\sonia}[1]
{\begin{color}{teal}(Sonia: #1)\end{color}}
\newcommand{\todo}[1]
{\begin{color}{purple}(Todo: #1)\end{color}}

\newcommand{\revpaaras}[1]
{\begin{color}{blue}(Paaras: #1)\end{color}}
\newcommand{\review}[1]
{\begin{color}{red}(Todo: #1)\end{color}}

\newcommand{\tuple}[1]{(#1)}

\newcommand{\AND}{\mathbin{\wedge}}
\newcommand{\OR}{\mathbin{\vee}}
\newcommand{\IMPLIES}{\mathbin{\to}}
\newcommand{\IMP}{\IMPLIES}
\newcommand{\NOT}{\mathord{\neg}}
\newcommand{\DIA}[1][]{\Diamond_{#1}}
\newcommand{\BOX}[1][]{\Box_{#1}}
\newcommand{\bottom}{\mathord{\bot}}
\newcommand{\BOT}{\mathord{\bottom}}

\newcommand{\BNFequals}{\coloncolonequals}
\newcommand{\proves}[3][]{#1 \vdash_{#2} #3}
\newcommand{\notproves}[3][]{#1 \not\vdash_{#2} #3}
\newcommand{\langjust}{\mathcal{L}_{\logic{J}}}
\newcommand{\langmod}{\mathcal{L}_{\BOX}}
\newcommand{\langmodann}{\mathcal{L}_{\BOX}^\mathsf{ann}}

\newcommand{\logic}[1]{\mathsf{#1}}
\newcommand{\PA}{\logic{PA}}
\newcommand{\HA}{\logic{HA}}

\newcommand{\LP}{\logic{LP}}
\newcommand{\iLP}{\logic{iLP}}
\newcommand{\CS}{\logic{CS}}

\newcommand{\JLCS}{\JIKCS}

\newcommand{\JIK}{\logic{JIK}}
\newcommand{\JIKCS}{\JIK_{\CS}}

\newcommand{\sfour}{\logic{S4}}
\newcommand{\isfour}{\logic{iS4}}

\newcommand{\CK}{\logic{CK}}
\newcommand{\IK}{\logic{IK}}
\newcommand{\ISfour}{\logic{IS4}}
\newcommand{\ISfive}{\logic{IS5}}

\newcommand{\IPL}{\logic{IPL}}

\newcommand{\kax}[1][]{\mathsf{{k}_{#1}}}

\newcommand{\taxb}{\mathsf{t}_{\BOX}}
\newcommand{\taxd}{\mathsf{t}_{\DIA}}

\newcommand{\faxb}{\mathsf{4}_{\BOX}}
\newcommand{\faxd}{\mathsf{4}_{\DIA}}

\newcommand{\jkax}[1][]{\mathsf{{jk}_{#1}}}

\newcommand{\jtaxb}{\mathsf{jt}_{\BOX}}
\newcommand{\jtaxd}{\mathsf{jt}_{\DIA}}
\newcommand{\jfaxb}{\mathsf{j4}_{\BOX}}
\newcommand{\jfaxd}{\mathsf{j4}_{\DIA}}
\newcommand{\jsumaxl}{\mathsf{j}{\jsum{}{}}_l}
\newcommand{\jsumaxr}{\mathsf{j}{\jsum{}{}}_r}
\newcommand{\juniaxl}{\mathsf{j}{\suni{}{}}_l}
\newcommand{\juniaxr}{\mathsf{j}{\suni{}{}}_r}

\newcommand{\modus}{\hyperlink{rule:MP}{\mathsf{mp}}}
\newcommand{\axrule}{\hyperlink{rule:ax}{\mathsf{ax}}}
\newcommand{\idrule}{\hyperlink{rule:id}{\mathsf{id}}}
\newcommand{\nec}{\hyperlink{rule:nec}{\mathsf{nec}}}
\newcommand{\can}{\hyperlink{rule:can}{\mathsf{can}}}

\newcommand{\Prop}{\mathsf{Prop}}

\newcommand{\IPLax}{\logic{Int}}
\newcommand{\BaseAx}{\logic{Base}}

\newcommand{\IKax}{\logic{Mod}}
\newcommand{\axinst}[1]{#1^{I}}

\newcommand{\tsign}{\mathsf{T}}
\newcommand{\fsign}{\mathsf{F}}
\newcommand{\T}[1]{\tsign{#1}}
\newcommand{\F}[1]{\fsign{#1}}

\newcommand{\term}[1]{\mathtt{#1}}
\newcommand{\tm}[1]{\term{#1}}
\newcommand{\just}[2]{[\tm{#1}] #2} 
\newcommand{\sat}[2]{\langle \tm{#1} \rangle #2} 
\newcommand{\pt}[1]{\tm {#1}}
\newcommand{\prfterms}{\mathsf{PrfTm}}
\newcommand{\prftm}{\prfterms}
\newcommand{\satterms}{\mathsf{SatTm}}
\newcommand{\prfvars}{\mathsf{PrfVar}}
\newcommand{\satvars}{\mathsf{SatVar}}
\newcommand{\terms}{\mathsf{Tm}}
\newcommand{\pvar}[1][]{\tm {x_{\mathnormal{#1}}}}
\newcommand{\svar}[1][]{\tm {a_{\mathnormal{#1}}}}

\newcommand{\pconst}[1][]{\tm c_{#1}}
\newcommand{\prfconst}{\mathsf{PrfConst}}
\newcommand{\jsum}[2]{\term{#1} \mathbin{+} \term{#2}}
\newcommand{\jappl}[2]{\term{#1} \mathbin{\cdot} \term{#2}}
\newcommand{\bangsymb}{{!}}
\newcommand{\jbang}[2][]{\bangsymb^{\mathnormal{#1}}\term{#2}}
\newcommand{\suni}[2]{\term{#1} \mathbin{\sqcup} \term{#2}}
\newcommand{\sappl}[2]{\term{#1} \mathbin{\star} \term{#2}}
\newcommand{\jupdt}[2]{\term{#1} \mathbin{\triangleright} \term{#2}} 

\newcommand{\set}[2][]{%
	\ifthenelse{\equal{#1}{}}%
	{\{#2\}}%
	{\{#1 \mid #2\}}%
}
\newcommand{\sub}{\sigma}
\newcommand{\subst}[2][\sub]{#2 #1}
\newcommand{\function}[3]{#1 : #2 \rightarrow #3}

\newcommand{\forget}[2][\forgetmap]{{#2}^{#1}}

\newcommand{\nat}{\mathbb{N}}

\newcommand{\UNI}{\cup}

\newcommand{\dom}[1]{\text{dom}(#1)}
\newcommand{\sset}{\subseteq}
\newcommand{\ssetop}{\supseteq}

\newcommand{\pset}[1]{\mathcal{P}(#1)}
\newcommand{\cartprod}{\times}
\newcommand{\inv}[2]{\tm #1^{-1}{#2}}

\newcommand{\modeltext}[1]{\mathcal{#1}}
\newcommand{\satsymb}{\Vdash}
\newcommand{\basicval}{*}

\newcommand{\model}[1][M]{\modeltext{#1}}
\newcommand{\ikmodel}[1][M]{\modeltext{#1}}
\newcommand{\val}[3][\basicval]{{#2}^{#1}_{#3}}
\newcommand{\ikval}[3][\basicval]{#2^{#1}_{#3}}
\newcommand{\satisfy}[3][\model]{#1 , #2 \satsymb #3}
\newcommand{\iksatisfy}[3][\ikmodel]{#1 , #2 \satsymb #3}

\renewcommand{\models}{\vDash}
\newcommand{\basicmod}{\vDash_\textsf{b}}
\newcommand{\modulmod}{\vDash_\textsf{m}}

\newcommand{\iksat}[1]{\vDash #1}
\newcommand{\ikcsqsem}[2]{#1 \models #2}

\newcommand{\notsatisfy}[3][\model]{#1 , #2 \not\satsymb #3}
\newcommand{\worlds}{W}
\newcommand{\ikworlds}{W}
\newcommand{\ikirel}{\leq}
\newcommand{\ikrel}{R}
\newcommand{\rel}{R}
\newcommand{\irel}{\leq}
\newcommand{\irelop}{\geq}

\newcommand{\cansymb}{c}
\newcommand{\canmodel}[1][M]{\model[#1]^\cansymb}
\newcommand{\canworlds}{\worlds^\cansymb}
\newcommand{\canrel}{\rel^\cansymb}
\newcommand{\canirel}{\irel^\cansymb}

\newcommand{\canbasicval}{\basicval^\cansymb}
\newcommand{\canval}[2]{\val[\canbasicval]{#1}{#2}}

\newcommand{\modmodel}[1][]{\model[#1]_{\BOX}}
\newcommand{\canmodmodel}{\canmodel_{\BOX}}
\newcommand{\BC}{BC}
\newcommand{\FC}{FC}

\newcommand{\primesetsymb}[1]{%
	\ifnum#1=0 \Gamma
    \else\ifnum#1=1 \Delta%
	\else\ifnum#1=2 \Pi%
	\else\ifnum#1=3 \Sigma%
	\else\ifnum#1=4 \Omega%
	\else\Delta%
	\fi\fi\fi\fi\fi
}
\NewDocumentCommand{\primeset}{O{0} O{}}{\primesetsymb{#1}_{#2}}
\NewDocumentCommand{\hashprime}{O{0} O{}}{\primeset[#1][#2]^{\sharp}}
\NewDocumentCommand{\flatprime}{O{0} O{}}{\primeset[#1][#2]^{\flat}}

\makeatletter
\DeclareFontFamily{OMX}{MnSymbolE}{}
\DeclareSymbolFont{MnLargeSymbols}{OMX}{MnSymbolE}{m}{n}
\SetSymbolFont{MnLargeSymbols}{bold}{OMX}{MnSymbolE}{b}{n}
\DeclareFontShape{OMX}{MnSymbolE}{m}{n}{
    <-6>  MnSymbolE5
   <6-7>  MnSymbolE6
   <7-8>  MnSymbolE7
   <8-9>  MnSymbolE8
   <9-10> MnSymbolE9
  <10-12> MnSymbolE10
  <12->   MnSymbolE12
}{}
\DeclareFontShape{OMX}{MnSymbolE}{b}{n}{
    <-6>  MnSymbolE-Bold5
   <6-7>  MnSymbolE-Bold6
   <7-8>  MnSymbolE-Bold7
   <8-9>  MnSymbolE-Bold8
   <9-10> MnSymbolE-Bold9
  <10-12> MnSymbolE-Bold10
  <12->   MnSymbolE-Bold12
}{}

\let\llangle\@undefined
\let\rrangle\@undefined
\DeclareMathDelimiter{\llangle}{\mathopen}%
                     {MnLargeSymbols}{'164}{MnLargeSymbols}{'164}
\DeclareMathDelimiter{\rrangle}{\mathclose}%
                     {MnLargeSymbols}{'171}{MnLargeSymbols}{'171}
\makeatother

\newcommand{\qrelbr}[1]{\llangle #1 \rrangle}
\newcommand{\qrel}[1]{\qrelbr{#1}}
\newcommand{\prelbr}[1]{\llbracket #1 \rrbracket}
\newcommand{\prel}[1]{\prelbr{#1}}

\newcommand{\syp}[1]{\DIA[w]}
\newcommand{\jyb}[1]{\BOX[w]}

\newcommand{\realfunction}{r}
\newcommand{\real}[1]{\realfunction(#1)}

\title{Intuitionistic Justification Logic, Semantically}
\author{Sonia Marin
\qquad\qquad 
Paaras Padhiar 
\qquad\qquad 
Ian Shillito\thanks{The third author has been supported by the UKRI Future Leaders Fellowship, ‘Structure vs Invariants in Proofs’, project reference MR/S035540/1.}
\email{\qquad s.marin@bham.ac.uk \ 
pxp367@student.bham.ac.uk \ 
i.b.p.shillito@bham.ac.uk}
\institute{School of Computer Science\\
University of Birmingham\\
Birmingham, United Kingdom}
}

\newcommand{\titlerunning}{Intuitionistic Justification Logic, Semantically}
\newcommand{\authorrunning}{S. Marin, P. Padhiar \& I. Shillito}

\hypersetup{
  bookmarksnumbered,
  pdftitle    = {\titlerunning},
  pdfauthor   = {\authorrunning},
  pdfsubject  = {EPTCS},               
  pdfkeywords = {Justification logic, Intuitionistic modal logic, Modular models, Realisation theorem} 
}

\begin{document}
\maketitle

\begin{abstract}
Justification logics are explicit versions of modal logic. In the classical setting, this means boxes are refined with explicit proof terms and interact with each other through proof operations. This exercise was extended to intuitionistic modal logic with native diamonds. In this setting, diamonds are refined to satisfier terms and come equipped with additional operations.

Justification logic enjoys a connection to its corresponding modal logic through a realisation theorem. In the classical setting, this is achieved through either proof-theoretic or semantic methodology. So far, intuitionistic justification logic with satisfiers has only been presented syntactically with a proof-theoretic realisation theorem.

We present two classes of semantics for intuitionistic justification logic with soundness and completeness results: basic modular models, which extend possible world semantics for intuitionistic propositional logic; modular models which contain Kripke-style machinery to promote ``backwards compatibility'' to modal logic. Using modular models, we present a realisation theorem to establish a connection between intuitionistic justification logic and its corresponding intuitionistic modal logic.
\end{abstract}

\section{Introduction}
Justification logic makes knowledge explicit by attaching reasons to statements.
It refines modalities in standard modal logic: the usual $\BOX A$, which can mean ``$A$ is \emph{provable} or \emph{known}", is replaced with $\just{t}{A}$ for some proof term $\pt{t}$, interpreted as ``$\pt{t}$ is a \emph{proof} or \emph{evidence} of $A$".

\textbf{Realisation theorem.}
The first proposed justification logic is the Logic of Proofs~($\LP$), introduced by Artemov~\cite{artemov_operational_1995}, which makes the modal logic $\sfour$ explicit.
This is achieved formally via a \emph{realisation theorem} translating each theorem of $\sfour$ into its corresponding theorem in $\LP$ by \emph{realising} every $\BOX$ modality into a suitable proof term.
The realisation theorem was first proved by Artemov using a proof-theoretic (and constructive) methodology~\cite{artemov_explicit_2001},
and later on via semantic techniques by Fitting~\cite{fitting_logic_2005}.

\textbf{Semantics for justification logic.}
Originally, the logic of proofs was interpreted using proofs in Peano Arithmetic~($\PA$)~\cite{artemov_operational_1995}.
Later, simple and flexible semantic models akin to epistemic (possible-world) models were created,
in which justification terms represent evidence rather than
explicit arithmetic proofs~\cite{mkrtychev_models_1997, fitting_logic_2005}.
These models helped apply justification logic in a more general fashion
to the concepts of knowledge and evidence,
better understand what justification terms mean,
and prove important logical properties like decidability~\cite{bucheli_decidability_2011} and complexity bounds~\cite{kuznets_complexity_2000}.

\textbf{Intuitionistic justification logic.}
Since $\LP$ interprets proofs in classical arithmetic, an intuitionistic version of the Logic of Proofs ($\iLP$) was sought to represent proofs in Heyting Arithmetic~($\HA$)~\cite{artemov_basic_2007,dashkov_arithmetical_2011}.
Though $\iLP$ is more than just $\LP$ over intuitionistic logic:
to match the proof interpretation in $\HA$, additional axioms representing admissible rules must be included.
The base logic (without these additional axioms) was shown to be an explicit justification version of intuitionistic modal logic $\isfour$~\cite{artemov_unified_2002}.
Its semantics was studied in~\cite{marti_intutionistic_2016}, where they provide intuitionistic Kripke-style models and prove completeness, and its proof theory in~\cite{hill_analytic_2019}.
Intermediate variants of~$\LP$ were also studied by Pischke~\cite{pischke_intermediate_2023} providing in particular an adaptation of the semantic realisation technique.

\textbf{Intuitionistic diamonds.}
In classical modal logic, $\BOX$ and $\DIA$ are dual, i.e.,
$\DIA A$ can be defined as~$\NOT\BOX\NOT A$,
making the behaviour of $\DIA$ automatically determined by the one of $\BOX$.
However, the weaker negation of intuitionistic logic breaks this duality:
$\DIA$ must be given its own independent semantics and axioms,
as it cannot simply be defined in terms of $\BOX$.
Furthermore, the addition of certain $\DIA$-axioms produces logics which are not conservative over the $\DIA$-free language~\cite{das2023intuitionistic,GroShiClo25,DasGroShi25-blog}, which justifies the interest in intuitionistic variants of modal logics with both modalities~\cite{fischer_servi_modal_1977,simpson_proof_1994}.

\textbf{Diamonds in justification logic.}
In justification logic, even intuitionistic variants, $\DIA$ was never considered as a first-class citizen. 
The first justification logic which makes the $\DIA$ modality explicit was postulated by~\cite{kuznets_justification_2021} as a justification counterpart to \emph{constructive modal logic}~$\CK$~\cite{bellin_extended_2001} with a syntactic realisation result.
This work was then extended to provide a justification counterpart to \emph{intuitionistic modal logic}~$\IK$~\cite{fischer_servi_modal_1977} syntactically in~\cite{marin_justification_2025} with a proof-theoretic procedure based on nested sequents.

\textbf{Contributions.}
We continue this line of work with a semantic exploration of intuitionistic justification logics (with $\DIA$).
We extend the definition of basic intuitionistic modular models and intuitionistic modular models of~\cite{marti_intutionistic_2016}, and establish soundness and completeness results with respect
to these semantics for $\JIK$, the justification counterpart of intuitionistic modal logic $\IK$. 
Building on these results, our investigation is crowned by an adaptation of the semantic realisation procedure in~\cite{pischke_intermediate_2023} to $\JIK$.

\textbf{Outline.}
In Section~\ref{sec:prelim} we introduce the preliminaries on $\IK$
and intuitionistic justification logic $\JIK$, 
notably providing some key properties of the latter logic
our work relies on.
In Section~\ref{sec:basic} we introduce basic intuitionistic modular models which extend~\cite{marti_internalized_2018,pischke_intermediate_2023} to incorporate the $\DIA$ operator, and prove that they are still enough to ensure completeness with $\JIK$.
In Section~\ref{sec:modular} we expand on these to get intuitionistic modular models.
They require more adaptation due to the addition of $\DIA$ on the one hand, 
and to the complexity inherent to the semantics for $\IK$ on the other hand.
Indeed, the confluence conditions traditionally expected in the semantics for $\IK$ are superfluous for justification $\JIK$ (given that persistence already holds in the basic semantics) but they are needed to ensure backward compatibility with $\IK$ in the realisation theorem.
In Section~\ref{sec:real} we are indeed able to prove the realisation theorem, that $\IK$ and $\JIK$ formally correspond, by adapting the semantic method~\cite{fitting_logic_2005,pischke_intermediate_2023} to $\IK$ and the intuitionistic $\DIA$.
We conclude in Section~\ref{sec:concl} with some discussions on avenues for taking these ideas further.

\section{Preliminaries}\label{sec:prelim}

\subsection{Intuitionistic Modal Logic}
Let $\Prop$ be a countable set of propositional variables.
The language of modal logic~$\langmod$ is defined by
$$
    G \BNFequals p\in\Prop \mid \BOT \mid G \AND G \mid G \OR G \mid G \IMP G \mid \BOX G \mid \DIA G
$$

Let $\IPLax$ be a finite set of axioms for $\IPL$, e.g.~\cite{troelstra_basic_2000}.
We add a superscript $I$ to denote the set of axiom instances of a 
given set of axioms,
e.g.~$\axinst{\IPLax}$ for the set of axiom instances of $\IPLax$.
We define $\IKax$ to be the set of axioms extending $\IPLax$ with the \emph{modal} axioms presented below~\cite{plotkin_framework_1986}.
\begin{center}
$
    \begin{array}{rcl}
        \kax[1] & : & \BOX (G \IMP H) \IMP (\BOX G \IMP \BOX H) \\
        \kax[2] & : & \BOX (G \IMP H) \IMP (\DIA G \IMP \DIA H) \\
    \end{array}
    \quad
    \begin{array}{rcl}
        \kax[3] & : & \DIA (G \OR H) \IMP (\DIA G \OR \DIA H) \\
        \kax[4] & : & (\DIA G \IMP \BOX H) \IMP \BOX (G \IMP H) \\
    \end{array}
    \quad
    \begin{array}{rcl}
        \kax[5] & : & \DIA \BOT \IMP \BOT
    \end{array}
$
\end{center}

    \begin{definition}\label{def:ax-sys-mod}
    The logic $\IK$ is the set of consecutions $\proves[\Gamma]{}{G}$,
    where $\Gamma=\set{G_1,\dots,G_n}$ is a finite set of formulas,
    derivable in the system described by the following rules.
    \begin{center}
    \begin{tabular}{c@{\hspace{1.2cm}}c @{\hspace{1.2cm}}c @{\hspace{1.2cm}}c}
    \hypertarget{rule:ax}{
    $
    \inferLineSkip=3pt
    \infer[\axrule]{\proves[\Gamma]{}{G}}{G\in\axinst{\IKax}}
    $}
    &
    \hypertarget{rule:id}{
    $
    \inferLineSkip=3pt
    \infer[\idrule]{\proves[\Gamma]{}{G_i}}{1\leq i \leq n}
    $}
    &
    \hypertarget{rule:nec}{
    $
    \inferLineSkip=3pt
    \infer[\nec]{\proves[\Gamma]{}{\BOX G}}{\proves[\emptyset]{}{G}}
    $}
    &
    \hypertarget{rule:MP}{$
    \inferLineSkip=3pt
    \infer[\modus]{\proves[\Gamma]{}{H}}{
    	\proves[\Gamma]{}{G}
    	&
    	\proves[\Gamma]{}{G\IMP H}}
    $}
    \end{tabular}
    \end{center}
    \vspace{0.2cm}
    
    If the consecution $\proves[\Gamma]{}{G}$ is provable in the system for $\IK$, we write $\proves[\Gamma]{\IK}{G}$.
    We omit the curly brackets when clear from context and write $\proves[G_1, \dots, G_n]{\IK}{G}$ for $\proves[\set{G_1, \dots, G_n}]{\IK}{G}$.
    For a potentially infinite set of formulas $\primeset$, we write $\proves[\primeset]{\IK}{G}$ if there are formulas $G_1, \dots, G_n \in \Gamma$ such that $\proves[G_1, \dots, G_n]{\IK}{G}$.
    We write $\proves[\primeset, H]{\IK}{G}$ for $\proves[\primeset \UNI \set{H}]{\IK}{G}$, and $\proves[]{\IK}{G}$ for $\proves[\emptyset]{\IK}{G}$.
    \end{definition}

    \begin{remark}
    The fact that the premise of the $\nec$ rule requires an empty context 
    entails that we capture the \emph{local} modal logic $\IK$, understood
    as a consequence relation.
    A benefit of this rule is that the deduction theorem holds~\cite{hakli_does_2012}, a result implicitly used 
    in some of the results below about $\IK$.
    \end{remark}
With the syntax and axiomatic system of $\IK$ defined,
we turn to its (bi)relational semantics~\cite{fischer_servi_modal_1977,fischer_servi_semantics_1980}.
\begin{definition}[Frames]\label{def:frames}
    A \emph{(birelational) frame} is a tuple $\tuple{\ikworlds, \ikirel, \ikrel}$, where $W$ is a non-empty set of worlds, $\ikirel$ is a preorder on~$\ikworlds$ and $\ikrel$ is a binary relation on~$\ikworlds$ such that:
    \begin{itemize}
        \item[($\BC$)] for all worlds $w, v, v'$ with $w \ikrel v \ikirel v'$, then there exists a world $w'$ with $w \ikirel w' \ikrel v'$;

        \item[($\FC$)] for all worlds $w, w', v$ with $w \ikirel w'$ and $w \ikrel v$, there exists a world $v'$ with $w' \ikrel v'$ and $v \ikirel v'$.
    \end{itemize}
    \begin{center}
    \begin{tabular}{c @{\hspace{3cm}} c}
    \begin{tikzpicture}[thick, every node/.style={scale=0.9}]
		\tikzstyle{access}=[->]
		\tikzstyle{future}=[->]
		
		\node (3) at (5,3) [] {$\BC$};
		
		\node (3) at (4,2) {$w$};
		\node (4) at (6,2) {$v$};
		\node (10) at (4,4){$w'$};
		\node (11) at (6,4) {$v'$};
		
		\draw[access] (3) edge[below] node {$R$} (4);
		\draw[access,dotted] (10) -- (11);

		\draw[future,dotted] (3) -- (10);
		\draw[future] (4) edge[right] node {$\le$}  (11);
	\end{tikzpicture}
    & 
    \begin{tikzpicture}[thick, every node/.style={scale=0.9}]
	\tikzstyle{access}=[->]
	\tikzstyle{future}=[->]
	
	\node[] (3) at (5,3) [] {$\FC$};
	
	\node (3) at (4,2) {$w$};
	\node (4) at (6,2) {$v$};
	\node (10) at (4,4) {$w'$};
	\node (11) at (6,4) {$v'$};
	
	\draw[access] (3) edge[below] node {$R$} (4);
	\draw[access,dotted] (10) -- (11);
	
	\draw[future] (3) edge[left] node {$\le$} (10);
	\draw[future,dotted] (4) -- (11);
\end{tikzpicture}
    \end{tabular}
    \end{center}
\noindent Note that $\BC$ stands for \emph{B}ackward \emph{C}onfluence, while
$\FC$ stands for \emph{F}orward \emph{C}onfluence.
\end{definition}

\begin{definition}[Models]
    A \emph{(birelational) model} for $\IK$, 
    is a tuple $\ikmodel = \tuple{\ikworlds, \ikirel, \ikrel, \basicval}$ with $\tuple{\ikworlds, \ikirel, \ikrel}$ a frame equipped with a map $\function{\basicval}{\Prop \cartprod \ikworlds}{\set{0, 1}}$ 
    such that $w \ikirel v$ and $\basicval(p,w)=1$ entail~$\basicval(p,v)=1$.
    We henceforth use the notation $\ikval{p}{w}$ for $\basicval(p,w)$.
    For a model $\ikmodel$, a point $w\in W$ and a formula $G$,
    the \emph{truth of $G$ at $w$ in $\ikmodel$} is recursively defined on the structure of $G$:
    $$
        \begin{array}{lcl}
            \iksatisfy{w}{p} & \text{iff} & \ikval{p}{w} = 1 \\
            \iksatisfy{w}{\BOT} & & \text{never} \\
            \iksatisfy{w}{G \AND H} & \text{iff} & \text{$\iksatisfy{w}{G}$ and $\iksatisfy{w}{H}$} \\
            \iksatisfy{w}{G \OR H} & \text{iff} & \text{$\iksatisfy{w}{G}$ or $\iksatisfy{w}{H}$} \\
            \iksatisfy{w}{G \IMP H} & \text{iff} & \text{for all worlds $w'$ with $w \ikirel w'$, if $\iksatisfy{w'}{G}$ then $\iksatisfy{w'}{H}$} \\
            \iksatisfy{w}{\BOX G} & \text{iff} & \text{for all worlds $w', v'$ with $w \ikirel w' \ikrel v'$, $\iksatisfy{v'}{G}$} \\
            \iksatisfy{w}{\DIA G} & \text{iff} & \text{there exists world $v$ with $w \ikrel v$, $\iksatisfy{v}{G}$}
        \end{array}
    $$
    We write $\iksatisfy{w}{\Gamma}$ if $\iksatisfy{w}{G}$ for all $G \in \Gamma$.
    We write $\iksat{G}$ if $\iksatisfy{w}{G}$ for any~$\ikmodel$ and~$w\in\ikworlds$.
\end{definition}

The following relates the semantic and axiomatic definitions of $\IK$.
\begin{theorem}[Soundness and Completeness~\cite{fischer_servi_axiomatizations_1984,plotkin_framework_1986,simpson_proof_1994}]
    $\proves{\IK}{G} \iff \ikcsqsem{}{G}$. 
\end{theorem}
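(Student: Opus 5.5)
Soundness ($\Rightarrow$) goes by induction on the derivation of $\proves{\IK}{G}$, proving the stronger statement that $\proves[\Gamma]{\IK}{G}$ implies that every world satisfying $\Gamma$ satisfies $G$. First we establish \emph{persistence}: if $w \ikirel w'$ and $\iksatisfy{w}{G}$ then $\iksatisfy{w'}{G}$. This is by induction on $G$. The propositional cases follow from the monotonicity of $\basicval$ and the transitivity of $\ikirel$. The $\BOX$ case also uses transitivity, because its clause quantifies over $\ikirel$-successors. The $\DIA$ case uses $\FC$: given $w \ikrel v$ with $v \satsymb G$ and $w \ikirel w'$, $\FC$ gives $v'$ with $w' \ikrel v'$ and $v \ikirel v'$, and the induction hypothesis transfers $G$ to $v'$.

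With persistence in hand, the rules are routine. The $\idrule$ rule is immediate. For $\modus$ we use reflexivity of $\ikirel$. For $\nec$, a formula valid everywhere is in particular true at every $R$-successor. The $\IPLax$ instances hold by the usual intuitionistic Kripke argument. The axioms $\kax[1]$, $\kax[2]$, $\kax[3]$, $\kax[5]$ are direct unfoldings of the clauses, using persistence to pass to later worlds. The one genuinely frame-dependent axiom is $\kax[4]$. Assume $w \satsymb \DIA G \IMP \BOX H$, and take $w \ikirel w' \ikrel v' \ikirel v''$ with $v'' \satsymb G$; we must show $v'' \satsymb H$. By $\BC$ there is $w''$ with $w' \ikirel w'' \ikrel v''$. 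Then $w'' \satsymb \DIA G$, so $w'' \satsymb \BOX H$, and reflexivity gives $v'' \satsymb H$.

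Completeness ($\Leftarrow$) goes through a canonical model. Its worlds are the prime theories of $\IK$: consistent, deductively closed sets with the disjunction property. Set $\Gamma \canirel \Delta$ iff $\Gamma \sset \Delta$. The modal relation is defined by $\Gamma \canrel \Delta$ iff $\set[G]{\BOX G \in \Gamma} \sset \Delta$ and $\set[\DIA G]{G \in \Delta} \sset \Gamma$. Atoms are true exactly where they belong. The key steps, in order, are:
\begin{itemize}
\item a Lindenbaum lemma: if $\notproves[\Gamma]{\IK}{G}$ then $\Gamma$ extends to a prime theory avoiding $G$, using countability of the language and the deduction theorem;
\item the truth lemma, $\Gamma \satsymb G$ iff $G \in \Gamma$, by induction on $G$;
\item verification of $\FC$ and $\BC$ for the canonical model.
\end{itemize}
If $\notproves{\IK}{G}$, Lindenbaum yields a world refuting $G$, which finishes the proof.

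The main obstacle is the modal part: the $\BOX$ and $\DIA$ cases of the truth lemma together with the two confluence conditions. For the $\DIA$ case we must build, from $\DIA G \in \Gamma$, a prime $\Delta$ with $G \in \Delta$ and $\Gamma \canrel \Delta$. We would start from $\set[H]{\BOX H \in \Gamma} \UNI \set{G}$ and run a Lindenbaum argument that avoids every $H$ with $\DIA H \notin \Gamma$. This requires $\kax[2]$ and $\kax[5]$ to rule out a clash, and $\kax[3]$ with the primeness of $\Gamma$ to keep the set of excluded formulas closed under disjunction. For the $\BOX$ case with $\BOX G \notin \Gamma$, we first use $\kax[4]$ to find an extension $\Gamma' \ssetop \Gamma$ and an $R$-successor refuting $G$. $\FC$ and $\BC$ are proved by similar constructions. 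For $\BC$, given $\Gamma \canrel \Delta \sset \Delta'$, we take a prime extension of $\Gamma \UNI \set[\DIA G]{G \in \Delta'}$ whose boxes land in $\Delta'$. That consistency check again rests on $\kax[4]$ and is the most delicate point; we would follow Simpson's thesis~\cite{simpson_proof_1994} closely here.
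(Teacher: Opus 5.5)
The paper does not prove this theorem; it imports it from Fischer Servi, Plotkin--Stirling and Simpson. Your outline is the standard argument from those sources and is correct: persistence via (FC), $\mathsf{k_4}$ via (BC), and the prime-theory canonical model with $\Gamma R \Delta$ iff $\{G \mid \Box G \in \Gamma\} \subseteq \Delta$ and $\{\Diamond G \mid G \in \Delta\} \subseteq \Gamma$.

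The one step you leave vague is the $\Box$ case of the truth lemma. It goes through by taking $\Gamma'$ maximal among the prime extensions of $\Gamma$ that omit $\Box G$. Then each $\Diamond H \notin \Gamma'$ gives $\Gamma' \vdash \Diamond H \to \Box G$, hence $\Box(H \to G) \in \Gamma'$ by $\mathsf{k_4}$. So any prime extension of $\{F \mid \Box F \in \Gamma'\}$ omitting $G$ is automatically an $R$-successor of $\Gamma'$. This is the same maximality device the paper uses in Lemma~\ref{lem:boxexist}.
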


Later in the paper, we make use of the following consequence of soundness.
\begin{corollary}
    $\IK$ is consistent.
\end{corollary}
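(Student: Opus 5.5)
We argue by contradiction from soundness: assume $\IK$ is inconsistent, i.e.\ $\proves{\IK}{\BOT}$. By the left-to-right direction of the Soundness and Completeness theorem above, this gives $\ikcsqsem{}{\BOT}$, so $\BOT$ is true at every world of every birelational model. To refute this it suffices to exhibit a single model with at least one world, since by definition $\iksatisfy{w}{\BOT}$ never holds.

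For the witness we take the one-world model $\ikmodel = \tuple{\set{w}, \ikirel, \ikrel, \basicval}$. Here $\ikirel = \set{(w,w)}$, $\ikrel = \emptyset$ (or alternatively $\set{(w,w)}$), and $\basicval(p,w) = 0$ for every $p \in \Prop$. We then check the frame conditions of Definition~\ref{def:frames}:
\begin{itemize}
\item $\ikirel$ is trivially a preorder.
\item With $\ikrel=\emptyset$, the conditions $\BC$ and $\FC$ hold vacuously, because their hypotheses require an $\ikrel$-edge. With $\ikrel=\set{(w,w)}$, both conditions are met by choosing $w'=w$ and $v'=w$.
\item Persistence of $\basicval$ along $\ikirel$ is immediate, since the only $\ikirel$-edge is reflexive.
\end{itemize}
Hence $\ikmodel$ is a model for $\IK$. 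By the truth clause for $\BOT$ we get $\ikmodel, w \not\satsymb \BOT$, which contradicts $\ikcsqsem{}{\BOT}$. Therefore $\notproves{\IK}{\BOT}$, i.e.\ $\IK$ is consistent.

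The only step needing care is confirming that the witness really satisfies every frame and model condition; all of these are vacuous or trivial here. If consistency is intended to mean $\notproves[\emptyset]{\IK}{\BOT}$, this is exactly the statement we have proved. Any equivalent formulation, such as the absence of some $G$ with both $\proves{\IK}{G}$ and $\proves{\IK}{\NOT G}$, reduces to this one by $\modus$.
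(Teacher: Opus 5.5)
Your proof is correct and follows the route the paper intends. The paper gives no separate proof and only records consistency of $\IK$ as an immediate consequence of the soundness direction of the Soundness and Completeness theorem; your one-world model, with its check of the confluence conditions BC and FC and of persistence, simply makes explicit that some model exists in which $\BOT$ fails.
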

\subsection{Intuitionistic Justification Logic}
    The sets of \emph{proof terms} $\prftm$ and of \emph{satisfier terms} $\satterms$ are defined by mutual induction 
    as follows
    \begin{center}
	$
		\tm t \BNFequals \pconst \in\prfconst \mid \pvar \in\prfvars \mid \jappl{t}{t} \mid \jsum{t}{t} \mid \jbang{t} \mid \jupdt{m}{t}
	       \qquad
		\tm m \BNFequals \svar \in\satvars \mid \suni{m}{m} \mid \sappl{t}{m}
	$
    \end{center}
    where proof constants~$\prfconst$,
    proof variables~$\prfvars$ and
    satisfier variables $\satvars$ are countable sets.
    The \emph{language} of justification logic~$\langjust$ reuses the set $\Prop$ as shown in the following grammar
	$$
		A \BNFequals p\in\Prop \mid \BOT \mid A \AND A \mid A \OR A \mid A \IMP A \mid \just{t}{A} \mid \sat{m}{A}
	$$
	where $\tm t \in\prftm$ and $\tm m \in\satterms$.
    Before defining logics over $\langjust$, we provide some intuitions on the intended meaning of the operators over proof and satisfier terms.

    In justification logic, a formula of the shape $\just{t}{A}$ can be read as 
    $\term t$ is a \emph{proof} of~A.
    Proof terms can hence be thought of as capturing \emph{global reasoning}, i.e.~they assert the \emph{validity} of statements with respect to any model.
    Dually, satisfier terms relate to \emph{local reasoning}: $\sat{m}{A}$ could be read as $\tm m$ is a \emph{model} where $A$ is satisfied, implying the \emph{consistency} of $A$.
    The operations \emph{proof sum}~$\jsum{}{}$, \emph{application}~$\jappl{}{}$ and \emph{proof checker}~$\jbang{}$ are standard justification operations relating to common proof manipulations~\cite{artemov_operational_1995,artemov_explicit_2001}.
    The operation on satisfiers of \emph{disjoint union}~$\suni{}{}$ is quite naturally the local counterpart to the proof sum. 
    The \emph{propagation} operation~$\sappl{}{}$ 
    combines global and local reasoning by taking as arguments a proof term and a satisfier term~\cite{kuznets_justification_2021}.
    For example, the locality of $A$ in $\sat{m}{A}$
    can be exploited via the global holding of $A \IMPLIES (A \OR B)$
    in $\just{\tm t}{(A \IMPLIES (A \OR B))}$ to locally establish $(A \OR B)$
    as $\sat{\sappl{t}{m}}{(A \OR B)}$.
    Finally, the \emph{local update} operation~$\jupdt{}{}$ expresses that if local reasoning implies global reasoning, global information can be updated using local information~\cite{marin_justification_2025}.
    For example, if the satisfier $\tm m$ locally reasons about $A$ and implies the proof term $\tm t$ globally reasons about $B$, then this connection is \emph{updated} into global reasoning by making  the proof term $\jupdt{m}{t}$ globally reason about $A \IMP B$.
    While the interpretations of operations on proof terms are formally backed up by the completeness proof with respect to Peano arithmetic in the context of classical Logic of Proofs~\cite{artemov_explicit_2001}, in the intuitionistic context, these merely constitute guidelines to help us reason about the operations at a high-level and would require to be studied within a meta-theory such as Heyting arithmetic or related to be given a proper formal understanding.
    The interpretation we gave to our operators is reflected in the set of axioms $\BaseAx$, which extends $\IPLax$ with the additional \emph{justification} axioms
    presented in the two left columns of Figure~\ref{fig:justif-axioms}.

	\begin{figure}[t]
    $$
				\begin{array}{r@{\ :\ }l}
					\jkax[1] & \just{s}{(A \IMP B)} \IMP (\just{t}{A} \IMP \just{\jappl{s}{t}}{B})
					\\
					\jkax[2] & \just{s}{(A \IMP B)} \IMP (\sat{m}{A} \IMP \sat{\sappl{s}{m}}{B})
					\\
					\jkax[3] & \sat{m}{(A \OR B)} \IMP (\sat{m}{A} \OR \sat{m}{B})
					\\
					\jkax[4] & (\sat{m}{A} \IMP \just{t}{B}) \IMP \just{\jupdt{m}{t}}{(A \IMP B)}
					\\
					\jkax[5] & \sat{m}{\BOT} \IMP \BOT
				\end{array}
				\quad\quad
				\begin{array}{r@{\ :\ }l}
					\jsumaxl & \hypertarget{ax:jsuml}{\just{s}{A} \IMP \just{\jsum{s}{t}}{A}}
					\\
					\jsumaxr & \hypertarget{ax:jsumr}{\just{t}{A} \IMP \just{\jsum{s}{t}}{A}}
					\\
					\juniaxl & \hypertarget{ax:junil}{\sat{m}{A} \IMP \sat{\suni{m}{n}}{A}}
					\\
					\juniaxr & \hypertarget{ax:junir}{\sat{n}{A} \IMP \sat{\suni{m}{n}}{A}}
				\end{array}
        $$
    \caption{Intuitionistic justification axioms and rule}
    \label{fig:justif-axioms}
	\end{figure}

    While traditionally a set of axioms like $\BaseAx$ would generate a
    \emph{unique} logic, justification logic allows for the definition of
    myriad logics via \emph{constant specifications}.

	\begin{definition}[Constant specification]
		A constant specification for~$\BaseAx$ is any subset
		$
		\CS \sset \set[\just{c}{A}]{\text{$\tm c \in \prfconst$ and $A\in\BaseAx^I$}}
		$.
        We call a constant specification~$\CS$ \emph{axiomatically appropriate} if for each axiom instance~$A$ of~$\BaseAx$, there exists a proof constant~$\tm c$ such that $\just{c}{A} \in \CS$.
        We call an axiomatically appropriate constant specification~$\CS$ \emph{schematic}
        if for any instance $A$ of an axiom in~$\BaseAx^I$ we have that $\just{c}{A} \in \CS$ entails $\just{c}{A'} \in \CS$ for any instance $A'$ of the same axiom.
	\end{definition}

    The parametricity in constant specifications of axiomatic systems for justification logics 
    is illustrated by a characteristic rule of these calculi.

    \begin{definition}\label{def:ax-sys-just}
    Given
    a constant specification $\CS$ for $\BaseAx$, we define the logic $\JIKCS$ as the set of consecutions $\proves[\Gamma]{}{A}$ provable by the system described by the rules from Definition~\ref{def:ax-sys-mod} (over $\langjust$), where:
    \begin{itemize}
    \item $\IKax$ is replaced by $\BaseAx$ and
    \item $\nec$ is replaced by the rule $\can$ presented below.
    $$
    \hypertarget{rule:can}{
                \inferLineSkip=3pt
                \infer[\can]{\Gamma\vdash \just{\jbang[k]{c}}{\dots\just{\jbang{c}}{\just{\tm c}{A}}}}
                {
                    \just{\tm c}{A} \in \CS
                    &
                    k \geq 0}}
    $$
    \end{itemize}
    We adopt the conventions of Definition~\ref{def:ax-sys-mod} with $\proves{\IK}{}$ replaced by $\proves{\JIKCS}{}$.
    \end{definition}

    \begin{remark}    
        The $\jbang{}$ operator is not necessary to give a justification counterpart to non-transitive logics and we could opt for a formulation where
        $
		\CS \sset \set[\just{c_n}{\just{\dots}{\just{c_1}{A}}}]{\text{$\tm{c_1}, \dots, \tm{c_n} \in \prfconst$ and $A\in\BaseAx^I$}}
		$~\cite{artemov_explicit_2001}.
        We prefer to use the explicit $\jbang{}$ operator for a more concise presentation of $\CS$ where proof constants can really be thought of as a \emph{witness} to an axiom, and not to more complex formula of the form~${\just{c}{A}}$.
    \end{remark}

    The following is an example of a proof in $\JIKCS$.
    \begin{example}\label{ex:justFSaxiomproof}
        We show that from $\proves{\JIKCS}{\just{t}{(A \IMP (A \IMP B) \IMP B)}}~(1)$ one can infer 
        $\proves{\JIKCS}{\sat{m}{(A \IMP B)} \IMP \just{s}{A} \IMP \sat{\sappl{(\jappl{t}{s})}{m}}{B}}$
        for any proof term~$\tm s$ and satisfier term~$\tm m$.
        First, we instantiate $\jkax[1]$ and $\jkax[2]$ as follows:
        $\proves{\JIKCS}{\just{t}{(A \IMP (A \IMP B) \IMP B)} \IMP \just{s}{A} \IMP \just{\jappl{t}{s}}{((A \IMP B) \IMP B))}}~(2)$
        and
        $\proves{\JIKCS}{\just{\jappl{t}{s}}{((A \IMP B) \IMP B))} \IMP \sat{m}{(A \IMP B)} \IMP  \sat{\sappl{(\jappl{t}{s})}{m}}{B}}~(3)$.
        Second, we apply $\modus$ to (1) and (2) to obtain $\proves{\JIKCS}{\just{s}{A} \IMP \just{\jappl{t}{s}}{((A \IMP B) \IMP B))}}~(4)$.
        Via transitivity of~$\IMP$ on (3) and (4), we get $\proves{\JIKCS}{\just{s}{A} \IMP \sat{m}{(A \IMP B)} \IMP  \sat{\sappl{(\jappl{t}{s})}{m}}{B}}~(5)$.
        Then by propositional reasoning on (5), we finally infer~$\proves{\JIKCS}{\sat{m}{(A \IMP B)} \IMP \just{s}{A} \IMP \sat{\sappl{(\jappl{t}{s})}{m}}{B}}$.
        This is the justification version of the
        theorem $\DIA(A \IMP B) \IMP \BOX A \IMP \DIA B$ in $\IK$.
    \end{example}

\subsection{Properties of Justification Logic}
In this section, we recall key definitions and properties of justification logic needed
throughout the paper.

	\begin{theorem}[Deduction Theorem]\label{thm:deduction}
		Let $\Gamma\cup\set{A,B}\subseteq\langjust$.
		Then:
		$\proves[\Gamma,A]{\JIKCS}{B} \iff \proves[\Gamma]{\JIKCS}{A \IMP B}$
	\end{theorem}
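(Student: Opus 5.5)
The plan is the standard two-directional argument, using that $\IPLax$ contains (instances of) the intuitionistic axioms $\mathsf{K}\colon A \IMP (B \IMP A)$ and $\mathsf{S}\colon (A \IMP (B \IMP C)) \IMP ((A \IMP B) \IMP (A \IMP C))$. If the chosen axiomatisation of $\IPLax$ does not list them literally, they are at least derivable, since $\IPLax$ axiomatises $\IPL$.

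\emph{Right-to-left.} Assume $\proves[\Gamma]{\JIKCS}{A \IMP B}$. Then $\proves[\Gamma, A]{\JIKCS}{A \IMP B}$ as well, because any derivation from $\Gamma$ is also a derivation from $\Gamma\cup\set{A}$: the rules $\axrule$ and $\can$ ignore the context, and $\idrule$ only picks elements of it. By $\idrule$ we also have $\proves[\Gamma, A]{\JIKCS}{A}$, so one application of $\modus$ yields $\proves[\Gamma,A]{\JIKCS}{B}$. For an infinite $\Gamma$ the same works on the finite subset witnessing derivability.

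\emph{Left-to-right.} Fix a finite $\Gamma' \sset \Gamma$ with $\proves[\Gamma',A]{\JIKCS}{B}$; we may assume $A$ is included in the context. By induction on the derivation we show that for every consecution $\proves[\Gamma',A]{}{C}$ occurring in it, $\proves[\Gamma']{\JIKCS}{A \IMP C}$ holds.
\begin{itemize}
\item \emph{Rules $\axrule$ and $\can$.} Here $\proves[\Gamma']{\JIKCS}{C}$ holds by the same rule, since neither rule depends on the context. An instance of $\mathsf{K}$ and $\modus$ then give $A \IMP C$. This is the point where the design of $\can$ matters: its conclusion is available in any context, and its premise is a membership condition on $\CS$ rather than a derivation under $\Gamma$. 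Unlike a necessitation rule applied to hypotheses, it therefore raises no obstruction.
\item \emph{Rule $\idrule$ with $C \in \Gamma'$.} As in the previous case, $\idrule$ gives $\proves[\Gamma']{\JIKCS}{C}$, and $\mathsf{K}$ with $\modus$ gives $A \IMP C$.
\item \emph{Rule $\idrule$ with $C = A$.} We need $\proves{\JIKCS}{A \IMP A}$, which follows from $\mathsf{S}$ and $\mathsf{K}$ in the usual way (or is itself an $\IPLax$ theorem).
\item \emph{Rule $\modus$.} The premises are $\proves[\Gamma',A]{}{D}$ and $\proves[\Gamma',A]{}{D \IMP C}$. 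By the induction hypothesis we have $A \IMP D$ and $A \IMP (D \IMP C)$ under $\Gamma'$, and two applications of $\modus$ with an instance of $\mathsf{S}$ give $A \IMP C$.
\end{itemize}

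The only step requiring care is the $\can$ case, which makes the theorem non-trivial in modal settings. Since $\can$ has no derivational premise, it is handled exactly like axioms, so no real obstacle is expected; the remaining work is the routine bookkeeping of context weakening and the monotonicity of $\vdash$ in the context.
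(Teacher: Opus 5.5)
Your proposal is correct and follows the same argument as the paper. The right-to-left direction goes by weakening the context and then applying $\modus$ with $\idrule$. The left-to-right direction goes by induction on the derivation of $\proves[\Gamma,A]{\JIKCS}{B}$, with $\can$ treated exactly like $\axrule$ because its conclusion does not depend on the context. The paper leaves the details to a citation of the standard argument; you spell them out, namely the case analysis using instances of $\mathsf{K}$ and $\mathsf{S}$ and the finite-subset convention for infinite $\Gamma$.
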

	\begin{proof}
    This is a standard argument~\cite{kuznets_logics_2019}:
    from right to left we simply use $\modus$ and monotonicity on the left
    of provability, and in the other direction we proceed by induction on the 
    given proof of $\proves[\Gamma,A]{\JIKCS}{B}$.
	\end{proof}

    Next, we show that $\JIKCS$ is closed under a notion of substitution modifying proof and satisfier terms.

	\begin{definition}[Justification substitution]\label{def:substitution}
		A \emph{(justification) substitution} is a pair of maps~$\function{\sub_t}{\prfvars}{\prfterms}$ and~$\function{\sub_s}{\satvars}{\satterms}$.
        For simplicity, we merge the two maps and write $\sub$ as a justification substitution.
        We define as expected the applications $\subst{\tm t}$, $\subst{\tm m}$ and
        $\subst{A}$ of $\sigma$ to, respectively, the proof term $\tm t$, the satisfier term $\tm m$ and the formula $A$.
        We write $\subst{\Gamma}$ to designate the set $\{\subst{A}\mid A\in\Gamma\}$.
	\end{definition}
        \begin{lemma}[Substitution Lemma]\label{lem:substitution}
        Let $\CS$ be a
        schematic constant specification, 
        $\Gamma\cup\set{A}\subseteq\langjust$,
        and $\sub$ a substitution.
        Then:
        $\proves[\Gamma]{\JIKCS}{A} \implies \proves[\subst{\Gamma}]{\JIKCS}{\subst{A}}$
    \end{lemma}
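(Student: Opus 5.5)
We proceed by induction on the derivation of $\proves[\Gamma]{\JIKCS}{A}$ in the system of Definition~\ref{def:ax-sys-just}, showing that each rule instance is mapped by $\sub$ to a derivable consecution with context $\subst{\Gamma}$. Substitutions act only on proof and satisfier variables and leave constants fixed, and application commutes with all connectives and term constructors. So $\subst{(A \IMP B)} = \subst{A} \IMP \subst{B}$, $\subst{(\just{t}{A})} = \just{\subst{\tm t}}{\subst{A}}$, $\subst{(\sat{m}{A})} = \sat{\subst{\tm m}}{\subst{A}}$, and similarly for $\jappl{}{}$, $\jsum{}{}$, $\jbang{}$, $\jupdt{}{}$, $\suni{}{}$ and $\sappl{}{}$. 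These equations are checked routinely by mutual induction on terms and formulas.

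The cases are as follows.
\begin{itemize}
\item $\idrule$: if $A \in \Gamma$, then $\subst{A} \in \subst{\Gamma}$, so $\idrule$ applies again.
\item $\modus$: from premises $\proves[\Gamma]{}{A}$ and $\proves[\Gamma]{}{A \IMP B}$, the induction hypothesis gives $\proves[\subst{\Gamma}]{}{\subst{A}}$ and $\proves[\subst{\Gamma}]{}{\subst{A} \IMP \subst{B}}$, and $\modus$ yields $\subst{B}$.
\item $\axrule$: if $A$ is an instance of an axiom of $\BaseAx$, then $\subst{A}$ is an instance of the same axiom scheme. The schemes of $\IPLax$ and Figure~\ref{fig:justif-axioms} are parametrised by arbitrary formulas and arbitrary terms $\tm s, \tm t, \tm m, \tm n$. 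For example, $\sub$ maps the instance of $\jkax[4]$ to $(\sat{\subst{\tm m}}{\subst{A}} \IMP \just{\subst{\tm t}}{\subst{B}}) \IMP \just{\jupdt{\subst{m}}{\subst{t}}}{(\subst{A} \IMP \subst{B})}$, which is again an instance of $\jkax[4]$.
\end{itemize}

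The only delicate case is $\can$, and it is exactly where schematicity is used. The conclusion is $\just{\jbang[k]{c}}{\dots\just{\jbang{c}}{\just{\tm c}{B}}}$ with $\just{c}{B} \in \CS$. Since $\tm c$ is a constant and $\jbang{}$ commutes with $\sub$, applying $\sub$ gives $\just{\jbang[k]{c}}{\dots\just{\jbang{c}}{\just{\tm c}{\subst{B}}}}$. Here $B$ is an instance of some axiom of $\BaseAx$, and by the axiom case $\subst{B}$ is an instance of the same axiom. Since $\CS$ is schematic, $\just{c}{B} \in \CS$ entails $\just{c}{\subst{B}} \in \CS$. Hence $\can$ with the same $k$ derives the required consecution in context $\subst{\Gamma}$.

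Finally, for a possibly infinite $\Gamma$ we use the convention that $\proves[\Gamma]{}{A}$ means derivability from finitely many $G_1,\dots,G_n \in \Gamma$. We apply the argument to that finite set; since $\subst{G_1},\dots,\subst{G_n} \in \subst{\Gamma}$, the conclusion follows. The main obstacle is the $\can$ case, because without schematicity $\just{c}{\subst{B}}$ need not belong to $\CS$. A minor point is that a single axiom instance could arise from several schemes. This causes no difficulty: it suffices to pick the scheme witnessing $B$, and schematicity gives closure under all instances of that scheme.
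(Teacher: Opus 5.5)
Your proof is correct and takes essentially the same route as the paper. Both argue by induction on the derivation, handle the $\axrule$ case by closure of axiom instances under $\sub$, and handle the $\can$ case using schematicity of $\CS$, since $\sub$ fixes constants and commutes with the term operations. Your explicit treatment of $\idrule$, $\modus$ and infinite contexts just spells out cases the paper leaves implicit.
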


    \begin{proof}
        The proof is similar to~\cite[Lemma 2.27]{kuznets_logics_2019},
        and goes by induction on the structure of the proof of $\proves[\Gamma]{\JIKCS}{A}$.
        The critical cases are when the last rule applied is $\axrule$ or $\can$.
        In the former case, it suffices to notice that if $A$ is an instance of an axiom, then so is $\subst{A}$.
        In the latter case, we make use of the fact that $\CS$ is
        schematic:
        with $\just{\pconst}{A}\in\CS$ and $\just{\jbang[n]{\pconst}}{\dots{\just{\pconst}{A}}}$ appearing in our consecution,
        we obtain a proof of $\proves[\subst{\Gamma}]{}{\subst{(\just{\jbang[n]{\pconst}}{\dots{\just{\pconst}{A}}})}}$ via $\can$ by noticing that 
        $\subst{(\just{\jbang[n]{\pconst}}{\dots{\just{\pconst}{A}}})} = \just{\jbang[n]{\pconst}}{\dots{\just{\pconst}{(\subst{A})}}}$
        and that $\just{\pconst}{\subst{A}}\in\CS$ as $\CS$ is schematic.
    \end{proof}

    The next lemma expresses an interesting feature of justification logic: 
    its (object) language allows for the reflection of its own (meta) proofs
    within formulas.
    On top of its conceptual interest, we leverage this lemma in many places 
    throughout the paper.

    \begin{lemma}[Lifting Lemma]\label{lem:lifting}
        Let $\CS$ be axiomatically appropriate.
        Let $k \in \nat$,
        $\set{A_1, \dots, A_k, B, A}\subseteq\langjust$,
        $\set{\tm t_1, \dots, \tm t_k} \subseteq \prftm$ and $\tm m \in \satterms$.
        Then
        \begin{enumerate}
            \item If $\proves[A_1, \dots, A_k]{\JIKCS}{A}$ then there exists a proof term $\tm t$ s.t.~$\proves[\just{\tm t_1}{A_1}, \dots, \just{\tm t_k}{A_k}]{\JIKCS}{\just{\tm t}{A}}$.
            
            \item If $\proves[A_1, \dots, A_k, B]{\JIKCS}{A}$ then there exists a satisfier term $\tm n$ s.t.~$\proves[\just{\tm t_1}{A_1}, \dots, \just{\tm t_k}{A_k}, \sat{\tm m}{B}]{\JIKCS}{\sat{\tm n}{A}}$.
        \end{enumerate}
    \end{lemma}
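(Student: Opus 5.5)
Both items are proved by induction on the derivation of the consecution in the axiomatic system of Definition~\ref{def:ax-sys-just}, carrying out the standard lifting argument of~\cite{artemov_explicit_2001,kuznets_logics_2019} with contexts.

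\textbf{Item 1.} We induct on the derivation of $\proves[A_1,\dots,A_k]{\JIKCS}{A}$ and build $\tm t$ case by case.
\begin{itemize}
\item If the last rule is $\idrule$ with $A = A_i$, take $\tm t \colonequals \tm t_i$.
\item If the last rule is $\axrule$, then $A\in\BaseAx^I$. Since $\CS$ is axiomatically appropriate, there is $\tm c$ with $\just{c}{A}\in\CS$, and $\can$ with $k=0$ gives $\just{c}{A}$; take $\tm t \colonequals \tm c$.
\item If the last rule is $\can$ producing $\just{\jbang[n]{c}}{\dots\just{c}{A'}}$, then $\can$ with index $n+1$ yields $\just{\jbang[n+1]{c}}{\just{\jbang[n]{c}}{\dots}}$; take $\tm t \colonequals \jbang[n+1]{c}$.
\item If the last rule is $\modus$ from premises $C$ and $C\IMP A$, the induction hypothesis gives terms $\tm u$ and $\tm s$ with $\just{s}{(C\IMP A)}$ and $\just{u}{C}$ derivable from the context $\just{t_1}{A_1},\dots,\just{t_k}{A_k}$. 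Applying $\jkax[1]$ and $\modus$ twice gives $\just{\jappl{s}{u}}{A}$.
\end{itemize}
The sequence of rule applications is unaffected by the change of context, since $\idrule$ and $\axrule$ tolerate arbitrary contexts.

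\textbf{Item 2.} I will not induct again but reduce to item~1 via the Deduction Theorem~\ref{thm:deduction}. From $\proves[A_1,\dots,A_k,B]{\JIKCS}{A}$ we obtain $\proves[A_1,\dots,A_k]{\JIKCS}{B\IMP A}$. Item~1 then gives a proof term $\tm s$ with $\proves[\just{t_1}{A_1},\dots,\just{t_k}{A_k}]{\JIKCS}{\just{s}{(B\IMP A)}}$. The instance $\just{s}{(B\IMP A)}\IMP(\sat{m}{B}\IMP\sat{\sappl{s}{m}}{A})$ of $\jkax[2]$, together with $\modus$ and $\idrule$ for $\sat{m}{B}$ in the enlarged context, yields $\sat{\sappl{s}{m}}{A}$. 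So take $\tm n \colonequals \sappl{s}{m}$. We also enlarge the context of the item~1 derivation by monotonicity, which is immediate since every rule is context-insensitive.

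\textbf{Main subtlety.} The delicate point is the $\can$ case of item~1: the conclusion must be an iterated $\jbang{}$ term rather than a fresh constant. This is exactly why the rule $\can$ is formulated with arbitrary $k\ge 0$, so I expect no real obstacle beyond checking that case. One should also note that contexts are sets, so duplicates in $A_1,\dots,A_k$ are harmless: $\idrule$ may pick any index $i$ with $A=A_i$.
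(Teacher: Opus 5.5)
Your proof is correct, and for item~1 it is exactly the paper's argument: induction on the derivation, with axiomatic appropriateness used in the $\axrule$ case, $\can$ applied one level higher in the $\can$ case, and $\jkax[1]$ handling $\modus$. The paper spells out only that induction and leaves item~2 implicit, so your reduction via the Deduction Theorem, item~1 and $\jkax[2]$ (with $\tm n = \sappl{s}{m}$) is the natural way to fill it in; a direct induction for item~2 would get stuck at $\modus$, because $\sat{n_1}{C}$ and $\sat{n_2}{(C \IMP A)}$ cannot be combined.
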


    \begin{proof}
        As in~\cite{marin_justification_2025}, we proceed by induction on the structure of the proof of 
        $\proves[A_1, \dots, A_k]{\JIKCS}{A}$, thereby inspecting the last rule applied.
        Our requirement that $\CS$ is axiomatically appropriate shows in the base case when $A\in\axinst{\BaseAx}$:
        the existence of a constant $\term c$ such that $\just{c}{A} \in \CS$ is ensured, thereby giving us $\proves[\just{\tm t_1}{A_1}, \dots, \just{\tm t_k}{A_k}]{\JIKCS}{\just{c}{A}}$ via $\can$.
        We expand on the case for $\can$, as the remaining cases are standard.
        In this case $A = \just{\jbang[l]{c}}{\just{\dots}{\just{\jbang{c}}{\just{\tm c}{B}}}}$ 
            for some proof constant $\tm c \in \prfconst$ and formula $B$ with $\just{\tm c}{B} \in \CS$.
            Instantiating $\can$ one step further, we get
            $\proves[\just{\tm t_1}{A_1}, \dots, \just{\tm t_k}{A_k}]{\JIKCS}{\just{\jbang[l+1]{c}}{\just{\jbang[l]{c}}{\just{\dots}{\just{\jbang{c}}{\just{\tm c}{B}}}}}}$
            and hence
            $\proves[\just{\tm t_1}{A_1}, \dots, \just{\tm t_k}{A_k}]{\JIKCS}{\just{\jbang[l+1]{c}}{A}}$.
    \end{proof}

    The proof of the above lemma informs us that the terms $\tm t$ and $\tm n$ we build 
    are parametric in $\tm t_1,\dots,\tm t_k$ and $\tm m$. 
    As a consequence, when $k=0$ the first part of the lemma forces $\tm t$ to be a \emph{ground} term, i.e.~a proof term which contains no proof or satisfier variable.

    The technical lemma below generalises the axioms governing the operators $\mathbin{+}$ and $\mathbin{\sqcup}$,
    a useful tool in the completeness proof we provide later on.

    \begin{lemma}\label{prop:orterms}
        Let $\CS$ be axiomatically appropriate.
        Let $i,j \in \nat$,
        $\set{A_1, \dots, A_i, B_1, \dots, B_j}\subseteq\langjust$,
        $\set{\tm t_1, \dots, \tm t_i}$ $\subseteq \prftm$ and
        $\set{\tm m_1, \dots, \tm m_j}\sset \satterms$.
        Then there exists a proof term $\tm t$ and a satisfier term $\tm m$ such that:
        \begin{enumerate}
            \item $\proves{\JIKCS}{(\just{\tm t_1}{A_1} \OR \dots \OR \just{\tm t_i}{A_i}) \IMP \just{\tm t}{(A_1 \OR \dots \OR A_i)}}$
            \item $\proves{\JIKCS}{(\sat{\tm m_1}{A_1} \OR \dots \OR \sat{\tm m_j}{A_j}) \IMP \sat{\tm m}{(A_1 \OR \dots \OR A_j)}}$
        \end{enumerate}
    \end{lemma}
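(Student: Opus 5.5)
We prove both items by induction on $i$ (resp.~$j$), building the required term explicitly: for item 1 it is essentially a sum of applications of constants to the $\tm t_k$, and for item 2 a disjoint union of propagations of constants to the $\tm m_k$. Write $D_k = A_1 \OR \dots \OR A_k$. For $i = 0$ the empty disjunction is $\BOT$, so the implication $\BOT \IMP \just{t}{\BOT}$ is an $\IPL$ theorem for any $\tm t$; similarly for $j=0$. For $i=1$ we may take $\tm t = \tm t_1$, since $A \IMP A$ is an instance of $\IPL$ reasoning. The same holds for item 2.

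For the inductive step of item 1, we assume a term $\tm t'$ with $\proves{\JIKCS}{(\just{\tm t_1}{A_1} \OR \dots \OR \just{\tm t_{i}}{A_{i}}) \IMP \just{\tm t'}{D_{i}}}$. We take the $\IPL$ theorems $D_i \IMP D_{i+1}$ and $A_{i+1} \IMP D_{i+1}$. Since these are provable from the empty context, the Lifting Lemma~\ref{lem:lifting} (part 1, with $k=0$) gives ground proof terms $\tm c$ and $\tm d$ with $\proves{\JIKCS}{\just{c}{(D_i \IMP D_{i+1})}}$ and $\proves{\JIKCS}{\just{d}{(A_{i+1} \IMP D_{i+1})}}$. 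Axiom appropriateness of $\CS$ is what is needed here. Using $\jkax[1]$, we get $\just{\tm t'}{D_i} \IMP \just{\jappl{c}{t'}}{D_{i+1}}$ and $\just{\tm t_{i+1}}{A_{i+1}} \IMP \just{\jappl{d}{t_{i+1}}}{D_{i+1}}$. We then set $\tm t = \jsum{(\jappl{c}{t'})}{(\jappl{d}{t_{i+1}})}$. By $\jsumaxl$ and $\jsumaxr$, each disjunct of the antecedent implies $\just{t}{D_{i+1}}$, and $\OR$-elimination in $\IPL$ together with the induction hypothesis closes the case.

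Item 2 follows the same pattern. We use $\jkax[2]$ in place of $\jkax[1]$, giving $\sat{\tm m'}{D_j} \IMP \sat{\sappl{c}{m'}}{D_{j+1}}$ and $\sat{\tm m_{j+1}}{A_{j+1}} \IMP \sat{\sappl{d}{m_{j+1}}}{D_{j+1}}$. We then set $\tm m = \suni{(\sappl{c}{m'})}{(\sappl{d}{m_{j+1}})}$ and use $\juniaxl$ and $\juniaxr$.

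There is no real obstacle, but one point needs care. The lifting step must produce a proof term valid in the empty context, which is exactly part 1 of Lemma~\ref{lem:lifting} with $k=0$. Alternatively, one can obtain it directly by deriving the implications from the axioms with $\modus$ and using $\jkax[1]$ applications of constants given by $\can$. One should also fix a bracketing convention for the disjunctions so that $D_{i+1} = D_i \OR A_{i+1}$ holds literally. Otherwise an extra lifted $\IPL$ equivalence is needed, which is handled the same way.
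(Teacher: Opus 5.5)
Your proposal is correct and takes essentially the same route as the paper, whose one-line proof just cites the Lifting Lemma together with repeated use of $\jsumaxl,\jsumaxr$ (resp.\ $\juniaxl,\juniaxr$); you make the term explicit by induction, combining ground lifted terms with the $\tm t_k$ (resp.\ $\tm m_k$) via $\jkax[1]$ (resp.\ $\jkax[2]$). Your bracketing caveat is not actually needed: the two lifted implications $D_i \IMP D_{i+1}$ and $A_{i+1} \IMP D_{i+1}$, and the splitting of the antecedent for $\OR$-elimination, are $\IPL$ facts under any bracketing.
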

    \begin{proof}
    By the Lifting Lemma~\ref{lem:lifting} and using the axioms $\jsumaxl$ and $\jsumaxr$ repetitively for the first statement, or $\juniaxl$ and $\juniaxr$ for the second statement.
    \end{proof}

\subsection{From Modal to Justification -- and back}
    We now proceed to formalise the connection between $\JIKCS$ and $\IK$:
    in one direction, through translating their respective languages;
    in the other, by using \emph{realisation} -- this is the overarching goal of our paper.

	\begin{definition}[Forgetful projection]\label{def:forgetful}
		The \emph{forgetful projection} is a map $\function{\forget{(\cdot)}}{\langjust}{\langmod}$
		inductively defined below, where $\ast \in \set{\AND, \OR, \IMP}$.

        \begin{itemize*}[itemjoin={\qquad}]
        \item $\forget{\BOT} \colonequals \BOT$
        \item $\forget{p} \colonequals p$
        \item $\forget{(A \ast B)} \colonequals (\forget{A} \ast \forget{B})$
        \item $\forget{(\just{\tm t}{A})} \colonequals \BOX \forget{A}$
        \item $\forget{(\sat{\tm m}{A})} \colonequals \DIA \forget{A}$
        \end{itemize*}
	\end{definition}

	\begin{theorem}\label{thm:forgetful}
		Let~$A \in \langjust$.
		If $\proves[\Gamma]{\JIKCS}{A}$ then $\proves[\forget{\Gamma}]{\IK}{\forget{A}}$.
	\end{theorem}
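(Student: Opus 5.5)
We proceed by induction on the derivation of $\proves[\Gamma]{\JIKCS}{A}$ in the system of Definition~\ref{def:ax-sys-just}. We show that each rule application, after applying the forgetful projection of Definition~\ref{def:forgetful} to every formula involved, is mirrored by a derivation in the system of Definition~\ref{def:ax-sys-mod}. Since $\forget{(\cdot)}$ commutes with $\AND$, $\OR$, $\IMP$ and fixes $\BOT$ and propositional variables, the propositional part transfers directly.

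The cases for $\idrule$ and $\modus$ are immediate. For $\idrule$, if $A \in \Gamma$ then $\forget{A} \in \forget{\Gamma}$, so $\idrule$ in $\IK$ applies. For $\modus$, the induction hypothesis gives $\proves[\forget{\Gamma}]{\IK}{\forget{B}}$ and $\proves[\forget{\Gamma}]{\IK}{\forget{B} \IMP \forget{A}}$, and we apply $\modus$ in $\IK$.

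For $\axrule$ we check that the projection of every instance of $\BaseAx$ is a theorem of $\IK$.
\begin{itemize}
\item Instances of $\IPLax$ project to instances of $\IPLax$, because the projection of a substitution instance is the substitution instance of the projections.
\item $\jkax[1]$, $\jkax[2]$, $\jkax[3]$, $\jkax[4]$ and $\jkax[5]$ project exactly to $\kax[1]$, $\kax[2]$, $\kax[3]$, $\kax[4]$ and $\kax[5]$ respectively. For instance, $\jkax[4]$ becomes $(\DIA\forget{A} \IMP \BOX\forget{B}) \IMP \BOX(\forget{A}\IMP\forget{B})$.
\item The sum and union axioms $\jsumaxl$, $\jsumaxr$, $\juniaxl$, $\juniaxr$ project to $\BOX\forget{A}\IMP\BOX\forget{A}$ or $\DIA\forget{A}\IMP\DIA\forget{A}$. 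These are instances of $G \IMP G$, which is derivable in $\IPL$.
\end{itemize}
So in each case $\forget{A}$ is an $\IK$-theorem, hence derivable from $\forget{\Gamma}$.

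The only case needing slightly more care is $\can$, where the conclusion is $\just{\jbang[k]{c}}{\dots\just{\jbang{c}}{\just{\tm c}{B}}}$ with $\just{\tm c}{B}\in\CS$ and $B \in \axinst{\BaseAx}$. Its projection is $\BOX^{k+1}\forget{B}$. By the axiom case just treated, $\proves[]{\IK}{\forget{B}}$ holds with empty context. Applying $\nec$ $k+1$ times, which is allowed since each premise has empty context, gives $\proves[]{\IK}{\BOX^{k+1}\forget{B}}$. Monotonicity of the context then gives $\proves[\forget{\Gamma}]{\IK}{\BOX^{k+1}\forget{B}}$; monotonicity is an easy induction, or follows because $\nec$ allows an arbitrary $\Gamma$ in its conclusion.

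I expect no real obstacle here. The point most worth stating explicitly is the $\can$ case: it relies on $\nec$ being restricted to empty contexts, and it is harmless exactly because constant specifications only contain axiom instances, which are theorems without hypotheses. Note also that the statement holds for an arbitrary $\CS$, since no appropriateness is needed.
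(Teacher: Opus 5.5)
Your proof is correct and takes essentially the same approach as the paper, whose one-line proof observes that the forgetful projections of the $\BaseAx$ axioms and of the conclusions of $\can$ are theorems of $\IK$. Your induction on derivations, with the $\can$ case handled by iterating $\nec$ on the projected axiom instance, is the detailed version of that argument.
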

	\begin{proof}
		This follows from the fact that the forgetful projection on axioms of~$\JIKCS$ and conclusions of the $\can$~rule are theorems of~$\IK$.
	\end{proof}
	
	\begin{corollary}\label{cor:consistent}
		$\JIKCS$ is consistent.
	\end{corollary}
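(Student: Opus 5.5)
The consistency of $\JIKCS$ follows by reducing to the consistency of $\IK$ through the forgetful projection. We argue by contraposition: assume that $\JIKCS$ is inconsistent, that is, $\proves{\JIKCS}{\BOT}$ (with empty context, for an arbitrary constant specification $\CS$). We then derive a contradiction with the consistency of $\IK$, stated as a corollary of the soundness and completeness theorem.

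The key step is to apply Theorem~\ref{thm:forgetful} with $\Gamma = \emptyset$ and $A = \BOT$. This gives $\proves[\forget{\emptyset}]{\IK}{\forget{\BOT}}$. By Definition~\ref{def:forgetful}, $\forget{\BOT} = \BOT$ and $\forget{\emptyset} = \emptyset$, so we obtain $\proves{\IK}{\BOT}$. This contradicts the consistency of $\IK$.

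The argument does not use any property of $\CS$, since Theorem~\ref{thm:forgetful} holds for every constant specification. The only real content is inside Theorem~\ref{thm:forgetful}, which we may assume. Its underlying fact is that every conclusion of $\can$ projects to $\BOX\cdots\BOX A^{\circ}$ for an axiom instance $A$, and this is derivable in $\IK$ by iterated $\nec$ because $A^{\circ}$ is an $\IK$-theorem; for example, $\jkax[4]$ projects to $\kax[4]$. There is therefore no real obstacle. The one point to watch is the form of "consistency": if it is read as "there is some formula not provable in $\JIKCS$", we simply exhibit $\BOT$, using the step above.
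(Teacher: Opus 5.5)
Your proof is correct and follows the paper's argument exactly: from $\proves{\JIKCS}{\BOT}$, Theorem~\ref{thm:forgetful} (with empty context, since the forgetful projection sends $\BOT$ to $\BOT$) gives $\proves{\IK}{\BOT}$, which contradicts the consistency of $\IK$. Your remarks on why Theorem~\ref{thm:forgetful} holds are accurate — iterated $\nec$ handles the conclusions of $\can$, and $\jkax[4]$ projects to $\kax[4]$ — but the corollary does not need them.
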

	\begin{proof}
		Suppose $\proves{\JIKCS}{\BOT}$.
		Then by Theorem~\ref{thm:forgetful}, $\proves{\IK}{\BOT}$ which is a contradiction.
	\end{proof}

    The reverse direction is expressed via realisation maps.
    
    \begin{definition}[Realisation map]
    A \emph{realisation map} is a function $\function{\realfunction}{\langmod}{\langjust}$ such that $\forget{\real G} = G$ for each $G \in \langmod$.
    \end{definition}

    In the remaining of the paper, we set ourselves to prove the next theorem.

    \begin{theorem}[Realisation Theorem]\label{thm:realisation}
        For schematic $\CS$, there is a realisation map $r$ such that:
         $$\forall G\in\langmod.\;\;\;\proves{\IK}{G} \implies \proves{\JIKCS}{\real{G}}.$$
    \end{theorem}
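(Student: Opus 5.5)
We will follow Fitting's semantic realisation method as adapted by Pischke~\cite{pischke_intermediate_2023}, and combine it with the modular-model completeness for $\JIKCS$ developed in the following sections.

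\textbf{Annotation and candidate realisation.} First, annotate each modal formula $G$ by indexing its modal occurrences. Occurrences in negative position (with respect to $\IMP$) receive distinct fresh proof or satisfier variables. Occurrences in positive position are assigned the ``sum'' of finitely many candidate terms. This yields a candidate realisation: positive $\BOX$ occurrences become $\just{t_1 + \dots + t_n}{\cdot}$ and positive $\DIA$ occurrences become $\sat{m_1 \sqcup \dots \sqcup m_n}{\cdot}$, with the $\tm t_i$, $\tm m_i$ generated by Lemma~\ref{prop:orterms}. Since $\CS$ is schematic, the Substitution Lemma~\ref{lem:substitution} lets us merge the several realisations arising in a proof into one, via substitutions on the negative variables.

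\textbf{Canonical modular model and the key lemma.} Next, build the canonical modular model for $\JIKCS$. Its worlds are prime theories, ordered by inclusion, with $\Gamma \ikrel \Delta$ iff $\Delta \supseteq \set[A]{\just{t}{A}\in\Gamma}$ and $\set[\sat{m}{A}]{A\in\Delta} \subseteq \Gamma$. We need this model to satisfy $\BC$ and $\FC$; this backward compatibility is exactly what the modular models of Section~\ref{sec:modular} provide. Define the modal satisfaction on this model as in the $\IK$ semantics.

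The key lemma is the usual Fitting lemma. For each annotated subformula $H$ there is a realisation such that:
\begin{itemize}
\item if $H$ is positive, then $\Gamma\models^{\BOX} H$ implies $\real{H}\in\Gamma$;
\item if $H$ is negative, then $\real{H}\in\Gamma$ implies $\Gamma\models^{\BOX} H$.
\end{itemize}
We prove it by induction on $H$.

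\textbf{The inductive cases.} The $\BOX$ cases follow Fitting. For a positive $\BOX$ occurrence, if $\Gamma\models\BOX H$ fails to be realised, we use the Lifting Lemma~\ref{lem:lifting} and compactness to build an $\ikrel$-successor of a $\ikirel$-extension refuting $H$. We then replace the term by a sum of finitely many witnesses, found by a compactness argument over all prime theories. The new cases are for $\DIA$:
\begin{itemize}
\item A negative $\sat{a}{H}$ in $\Gamma$ gives a successor containing $\real{H}$, using $\jkax[3]$, $\jkax[5]$ and a primeness (Lindenbaum) argument.
\item A positive $\DIA H$ true at $\Gamma$ has a successor $\Delta$ with $\real{H}\in\Delta$, which by the definition of $\ikrel$ forces $\sat{m}{\real{H}}\in\Gamma$.
\item Implications are handled using persistence and the order on prime theories.
\end{itemize}

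\textbf{Conclusion.} Finally, suppose $\proves{\IK}{G}$. By soundness for birelational models (the canonical model being a frame with $\BC$ and $\FC$), $G$ holds at every world. Hence $\real{G}$ belongs to every prime theory, and therefore $\proves{\JIKCS}{\real{G}}$.

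\textbf{Main obstacle.} The hardest step is the positive-$\BOX$ case in the presence of $\ikirel$ and $\DIA$. There, $\BOX$ quantifies over $\ikirel$ then $\ikrel$, and $\jkax[4]$ (local update $\jupdt{m}{t}$) interacts with negative satisfier variables. First, we will try to use $\jkax[4]$ to show that the canonical $\ikrel$ satisfies $\BC$ and $\FC$, through a careful choice of successor theories. Failing that, we will restrict to the modular-model variant of Section~\ref{sec:modular}, where these conditions are built in.
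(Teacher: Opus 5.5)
There is a genuine gap, and it sits in your key lemma. You ask for a \emph{single} realisation $\real{H}$ of each subformula that works uniformly at every prime theory, and as stated this is false. Take the positive subformula $\DIA p$, realised as $\sat{m}{p}$ for some fixed $\tm m$, and let $\tm a$ be a satisfier variable not occurring in $\tm m$. Then $\sat{a}{p}\IMP\sat{m}{p}$ is not a theorem of $\JIKCS$ (a two-world basic model refutes it). So the Prime Lemma~\ref{lem:prime} gives a prime $\Gamma$ with $\sat{a}{p}\in\Gamma$ and $\sat{m}{p}\notin\Gamma$. Yet $\Gamma$ satisfies $\DIA p$ in $\canmodmodel$ by Lemma~\ref{lem:canSYP}.

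The underlying reason is that the witnesses are world-dependent. In the positive $\DIA$ case, $\Gamma\canrel\Delta$ with $\real{H}\in\Delta$ only yields $\sat{m'}{\real{H}}\in\Gamma$ for \emph{some} $\tm m'$, not for your fixed one. In the positive $\BOX$ case, a successor of an extension omitting $\real{H}$ exists only if $\Gamma$ contains $\just{s}{\real{H}}$ for \emph{no} term $\tm s$ at all; this is the hypothesis of Lemma~\ref{lem:boxexist}. So knowing $\just{t}{\real{H}}\notin\Gamma$ for your one chosen $\tm t$ gives you nothing. Your proposed repair, a ``compactness argument over all prime theories'' producing finitely many witnesses, has nothing to act on. The Prime Lemma only yields finiteness when the condition is a set of formulas to be kept out of a prime theory. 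That is the situation at the top level, where $G$ holds everywhere, but not for a non-valid subformula inside the induction.

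What is missing is the two-stage structure of the paper's proof. First, the semantic lemma (Theorem~\ref{thm:qrelvalid}) is proved not for one realisation but for the \emph{sets} $\qrel{\T{H}}$ and $\qrel{\F{H}}$ of pre-realisers. At positive positions every term is allowed, with finite disjunctions under $\BOX$. At negative positions the indexed variables $\pvar[n]$, $\svar[n]$ are fixed, with finite conjunctions under $\DIA$. These are exactly the shapes the two parts of Lemma~\ref{lem:boxexist} consume. The passage to finitely many candidates then happens once, at the top. Since $G$ holds everywhere in $\canmodmodel$, the Prime Lemma yields a provable finite disjunction of elements of $\qrel{\F{G}}$ (Theorem~\ref{thm:quasreal}). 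This uses soundness of $\IK$, which needs BC and FC for this specific canonical structure; they are proved in Lemma~\ref{lem:canforconf}, not obtained by moving to a class of models where they are ``built in''.

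Second, a purely syntactic condensing step (Theorem~\ref{thm:condensing}) collapses this disjunction into a single $A\in\prel{\F{G}}$:
\begin{itemize}
\item negative variables are upgraded by no-new-variable substitutions such as $\svar[k]\mapsto\sappl{t}{\svar[k]}$ (and analogously for $\pvar[k]$), with a ground $\tm t$ built from the Lifting Lemma~\ref{lem:lifting} and sums;
\item positive terms are merged using Lemma~\ref{prop:orterms} and Lifting;
\item the Substitution Lemma~\ref{lem:substitution}, which is where schematicity of $\CS$ enters, transfers provability to $A$.
\end{itemize}
Your sentence about merging realisations by substitution points toward this second stage. But in your plan it has no role to play, since your key lemma already claims a single realisation, and no argument is given for it.
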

This \emph{realisation} map 
is precisely what 
embeds the modal logic $\IK$ into its corresponding justification logic $\JIKCS$.
The condition $\forget{\real{G}} = G$ ensures that each $\BOX$ (and each $\DIA$) occurring in $G$ is replaced by exactly one proof term (and satisfier term, respectively).
In other words, a realisation map takes a modal formula $G$ into a justification formula with the same formula tree.

\section{Basic Modular Models}\label{sec:basic}
Basic modular models were first introduced by Mkrtychev~\cite{mkrtychev_models_1997} in a classical setting.
These models can be seen as an extension of valuations for classical propositional logic,
which assign a truth value to each proposition,
with an additional interpretation on proof terms.

In the intuitionistic setting, Marti and Studer~\cite{marti_intutionistic_2016}
build basic modular models for an intuitionistic justification logic by instead enhancing the relational semantics for intuitionistic propositional logic.
In this section, we extend these models with an interpretation on satisfier terms.

\subsection{Definition and Soundness}
For the definition of basic models for $\JIKCS$,
expanding upon the basic models given in~\cite{marti_intutionistic_2016},
we need the following operation on sets of formulas
$\primeset \jappl{}{} \primeset[1] := \set[B \in \langjust]{\text{$A \IMP B \in \primeset\text{ and }A \in \primeset[1]$}}$.
\begin{definition}[Basic model]\label{def:basicmodmodel}
		A \emph{basic (intuitionistic modular) model}
        is a tuple~$\model[B] = (\worlds, \irel,\basicval_\terms,\basicval_\Prop)$, where $\worlds$ is a non-empty set of worlds, $\irel$ is a pre-order on~$\worlds$, 
        $\function{\basicval_\Prop}{\Prop \cartprod \worlds}{\set{0,1}}$, and
		$\function{\basicval_\terms}{(\prfterms \UNI \satterms) \cartprod \worlds}{\pset{\langjust}}$.
        
        We abuse notation and combine $\basicval_\terms$ and $\basicval_\Prop$
        into a single function $\basicval$, making basic modular models tuples of the shape $(\worlds, \irel,\basicval)$. 
        We also write $\val{\term s}{w}$ for $\basicval(\term s,w)$.
        In a basic modular model, the function $\basicval$ satisfies the following conditions on proof and satisfier terms:

        \begin{center}
        \begin{tabular}{r @{\hspace{0.2cm}} l @{\hspace{2cm}} r @{\hspace{0.2cm}} l}
         1. & $\jappl{\val{\term s}{\mathnormal{w}}}{\val{\term t}{\mathnormal{w}}} \sset \val{(\jappl{s}{t})}{w}$. &   
         5. & $A \in \val{\term t}{w}$ for any conclusion $\just{t}{A}$ of $\can$. \\
         2. & $\jappl{\val{\tm s}{\mathnormal{w}}}{\val{\tm m}{\mathnormal{w}}} \sset \val{(\sappl{s}{m})}{\mathnormal{w}}$. &   
         6. & If $A \OR B \in \val{\tm m}{w}$ then $A \in \val{\tm m}{w}$ or $B \in \val{\tm m}{w}$. \\
         3. & $\val{\term s}{w} \UNI \val{\term t}{w} \sset \val{(\jsum{s}{t})}{w}$. &   
         7. & If for all $v \irelop w$, $A \notin \val{\tm m}{v}$ or $B \in \val{\tm t}{v}$, then $A \IMP B \in\val{(\jupdt{\tm m}{t})}{w}$. \\
         4. & $\val{\term m}{w} \UNI \val{\term n}{w} \sset \val{(\suni{m}{n})}{w}$. &   
         8. & $\BOT \notin \val{\tm m}{w}$. \\
        \end{tabular}
        \end{center}
        \end{definition}

        \begin{definition}[Truth in a basic model]\label{def:trubasmod}
        Given a basic model $\model[B]$, a point $w\in W$ and a formula $A$, we define
        the \emph{truth of $A$ at $w$ in $\model[B]$} recursively on the structure of $A$:
        $$
		\begin{array}{rcl}
			\satisfy[{\model[B]}]{w}{p} & \text{iff} & \val{p}{w} = 1 
            \\
			\satisfy[{\model[B]}]{w}{\BOT} & & \text{never} 
            \\ 
            \satisfy[{\model[B]}]{w}{A \AND B} & \text{iff} & \text{$\satisfy[{\model[B]}]{w}{A}$ and $\satisfy[{\model[B]}]{w}{B}$} 
            \\
            \satisfy[{\model[B]}]{w}{A \OR B} & \text{iff} & \text{$\satisfy[{\model[B]}]{w}{A}$ or $\satisfy[{\model[B]}]{w}{B}$} 
            \\
			\satisfy[{\model[B]}]{w}{A \IMP B} & \text{iff} & \text{$\forall v \irelop w$ if $\satisfy[{\model[B]}]{v}{A}$ then $\satisfy[{\model[B]}]{v}{B}$} 
            \\
			\satisfy[{\model[B]}]{w}{\just{t}{A}} & \text{iff} & A \in \val{\term t}{w} 
            \\
			\satisfy[{\model[B]}]{w}{\sat{m}{A}} & \text{iff} & A \in \val{\term m}{w} 
            \\
		\end{array}
		$$        
        We write $\basicmod A$ if for any basic model $\model[B]$ and $w \in \worlds$ we have $\satisfy[{\model[B]}]{w}{A}$.
	\end{definition}

    The monotonicity conditions imposed on $\basicval$ port to the truth of formulas.

	\begin{lemma}[Monotonicity Lemma]\label{lem:basmon}
		Let $\model[B] = (\worlds, \irel, \basicval)$ be a basic modular model, %
		$w$ and $v\in\worlds$ such that $w\leq v$,
		and $A$ be a formula.
		Then: $\satisfy[{\model[B]}]{w}{A} \implies \satisfy[{\model[B]}]{v}{A}$.
	\end{lemma}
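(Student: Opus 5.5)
The plan is a straightforward induction on the structure of $A$, with $w \irel v$ fixed but arbitrary in the induction hypothesis. I will rely on the monotonicity conditions that the sentence before the lemma attributes to basic models: for $w \irel v$, $\val{p}{w} = 1$ entails $\val{p}{v} = 1$, and $\val{\tm s}{w} \sset \val{\tm s}{v}$ for every proof or satisfier term $\tm s$. These are the persistence requirements inherited from the intuitionistic basic models of~\cite{marti_intutionistic_2016}. They are not displayed among conditions 1--8 of Definition~\ref{def:basicmodmodel}, so I take them as part of what "basic model" means. If they were really absent, the lemma would fail already for propositional variables, since nothing else constrains $\basicval_\Prop$ along $\irel$.

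The base and modal cases are immediate from the clauses of Definition~\ref{def:trubasmod}.
\begin{itemize}
\item For $p$, use persistence of $\basicval_\Prop$.
\item For $\BOT$, the claim is vacuous.
\item For $\just{t}{A}$ and $\sat{m}{A}$, truth at $w$ means $A \in \val{\tm t}{w}$ (respectively $A \in \val{\tm m}{w}$). The inclusion $\val{\tm t}{w} \sset \val{\tm t}{v}$ (respectively for $\tm m$) gives truth at $v$.
\end{itemize}
No induction hypothesis is needed in these cases, because the truth of justification formulas is atomic with respect to the term interpretation.

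The propositional connectives are also routine.
\begin{itemize}
\item For $A \AND B$ and $A \OR B$, apply the induction hypothesis componentwise.
\item For $A \IMP B$, suppose $\satisfy[{\model[B]}]{w}{A \IMP B}$ and let $u \irelop v$ with $\satisfy[{\model[B]}]{u}{A}$. By transitivity of the preorder $\irel$ we have $u \irelop w$, so the truth clause at $w$ yields $\satisfy[{\model[B]}]{u}{B}$. This case does not even use the induction hypothesis.
\end{itemize}

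The only point needing care is the one flagged above: confirming that persistence of $\basicval$ on terms is indeed a requirement of the models. This is the crux, because conditions 1--8 are all closure conditions stated at a single world (condition 7 quantifies over successors, but only to force membership at $w$). None of them transports membership from $w$ to $v$. With persistence assumed, every case is a one-line check.
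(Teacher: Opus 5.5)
Your proposal is correct and matches the paper's proof. The paper also argues by induction on $A$, with propositional variables handled by a persistence condition it calls \textsf{(M1)}, the cases $\just{t}{B}$ and $\sat{m}{B}$ handled by the term-persistence condition it calls \textsf{(M2)}, and the connectives by the standard argument. You are also right that \textsf{(M1)} and \textsf{(M2)} are cited in the paper's proof but missing from the displayed conditions 1--8 of Definition~\ref{def:basicmodmodel}; treating them as part of the definition is the intended reading, as the sentence before the lemma and the appeal to ``the monotonicity property'' in the soundness proof confirm.
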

	
	\begin{proof}
		We proceed by induction on~$A$. The base case $A = p$ follows by the \textsf{(M1)} condition in Definition~\ref{def:basicmodmodel}.
        The cases $A = B \AND C, B \OR C, B \IMP C$ follow from a standard argument.
        When $A = \just{\tm t}{B}$, we get $B \in \val{\tm t}{w}$ by Definition~\ref{def:trubasmod}.
	    Then, using \textsf{(M2)} of Definition~\ref{def:basicmodmodel} we obtain $B \in \val{\tm t}{v}$ hence $\satisfy[{\model[B]}]{v}{\just{\tm t}{B}}$.
		The case where $A = \sat{\tm m}{B}$ is similar.
    \end{proof}

    The soundness of $\JIKCS$ with respect to its semantics on basic models can now be established.

	\begin{theorem}[Soundness]\label{thm:soundbasicmodularmodel}
		For any formula $A$: $\proves{\JIKCS}{A} \implies \basicmod {A}$.
	\end{theorem}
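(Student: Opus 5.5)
Soundness follows by induction on the derivation of $\proves{\JIKCS}{A}$. We prove the slightly stronger local statement: if $\proves[\Gamma]{\JIKCS}{A}$, then for every basic model $\model[B]$ and world $w$ with $\satisfy[{\model[B]}]{w}{\Gamma}$ we have $\satisfy[{\model[B]}]{w}{A}$. Taking $\Gamma=\emptyset$ gives the theorem. The rules are handled as follows.

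\begin{itemize}
\item $\idrule$: immediate.
\item $\modus$: if $w$ forces $G$ and $G\IMP H$, instantiate the implication clause at $v=w$ using reflexivity of $\irel$.
\item $\can$: by condition~5 of Definition~\ref{def:basicmodmodel}, any conclusion $\just{t}{A}$ of $\can$ has $A\in\val{\tm t}{w}$ at every world, so $\just{t}{A}$ holds everywhere.
\item $\axrule$: we show every axiom instance is valid.
\end{itemize}

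\textbf{Intuitionistic axioms.} The axiom instances of $\IPLax$ are handled exactly as in the usual Kripke soundness proof for $\IPL$. The only ingredient needed there is persistence, which is exactly the Monotonicity Lemma~\ref{lem:basmon}. Justification formulas behave as atoms in this argument, so nothing changes.

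\textbf{Justification axioms.} For each we fix $w$, take $v\irelop w$ satisfying the antecedent (and $u \irelop v$ for nested implications), and compute.
\begin{itemize}
\item $\jkax[1]$: from $A\IMP B\in\val{\tm s}{u}$ and $A\in\val{\tm t}{u}$ we get $B\in \jappl{\val{\tm s}{u}}{\val{\tm t}{u}}\sset\val{(\jappl{s}{t})}{u}$ by condition~1.
\item $\jkax[2]$: the same argument with condition~2.
\item $\jkax[3]$: condition~6.
\item $\jkax[5]$: condition~8 ensures $\sat{m}{\BOT}$ is never forced.
\item $\jsumaxl,\jsumaxr,\juniaxl,\juniaxr$: conditions~3 and~4.
\end{itemize}

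\textbf{The main obstacle: $\jkax[4]$.} Suppose $v\irelop w$ forces $\sat{m}{A}\IMP\just{t}{B}$. We must show $A\IMP B\in\val{(\jupdt{m}{t})}{v}$, and condition~7 is designed for precisely this. Take any $u\irelop v$. If $A\in\val{\tm m}{u}$, then $u$ forces $\sat{m}{A}$. Since $u\irelop v$ and $v$ forces the implication, $u$ forces $\just{t}{B}$, so $B\in\val{\tm t}{u}$. Hence for every $u\irelop v$, either $A\notin\val{\tm m}{u}$ or $B\in\val{\tm t}{u}$, and condition~7 applied at $v$ gives the conclusion.

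Two points need care in writing this step. First, the hypothesis of condition~7 quantifies over all successors, and this is exactly the semantics of $\IMP$, so the match is exact and no classical reasoning is used. Second, transitivity of $\irel$ is needed when the nested implications are unfolded through intermediate worlds.
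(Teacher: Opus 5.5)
Your proposal is correct and follows the same route as the paper: induction on the derivation, checking each new axiom against its matching clause of Definition~\ref{def:basicmodmodel} (notably $\jkax[4]$ via condition~7 and $\can$ via condition~5), with the $\IPL$ axioms resting on persistence from Lemma~\ref{lem:basmon}. The differences are cosmetic. You carry a context $\Gamma$ through the induction and spell out $\jkax[1]$ and the sum/union axioms, which the paper delegates to Marti and Studer. One side remark in your write-up is inaccurate but harmless: turning ``if $A\in\val{\tm m}{u}$ then $B\in\val{\tm t}{u}$'' into the disjunction required by condition~7 is a classical step in the metatheory, so it is not true that no classical reasoning is used.
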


    \begin{proof}
		By induction on the proof of $A$ in $\JIKCS$.
        Fix a basic model ${\model[B]}$ and a world $w \in \worlds$.
		  We show the validity of the new axioms and rules, and refer to~\cite{marti_intutionistic_2016} 
        for the remaining ones.

		\begin{itemize}

			\item $\jkax[2]: \just{s}{(A \IMP B)} \IMP \sat{m}{A} \IMP \sat{\sappl{s}{m}}{B}$.
			Let $v \irelop w$ with $\satisfy[{\model[B]}]{v}{\just{s}{(A \IMP B)}}$ i.e.~$A \IMP B \in \val{\tm s}{v}$.
			Let $u \irelop v$ with $\satisfy[{\model[B]}]{u}{\sat{m}{A}}$, i.e. $A \in \val{\tm m}{u}$.
			By the monotonicity property, $A \IMP B \in
            \val{\tm s}{u}$.
			So $B \in \jappl{\val{\tm s}{u}}{\val{\tm m}{u}} \sset \val{(\sappl{s}{m})}{u}$ and hence $\satisfy[{\model[B]}]{u}{\sat{\sappl{s}{m}}{B}}$.
			Therefore by definition we have $\satisfy[{\model[B]}]{v}{\sat{m}{A} \IMP \just{\sappl{s}{m}}{B}}$ and hence $\satisfy[{\model[B]}]{w}{\just{s}{(A \IMP B)} \IMP \sat{m}{A} \IMP \just{\sappl{s}{m}}{B}}$.
			
			\item $\jkax[3]: \sat{m}{(A \OR B)} \IMP (\sat{m}{A} \OR \sat{m}{B})$.
			Let $v \irelop w$ with $\satisfy[{\model[B]}]{v}{\sat{m}{(A \OR B)}}$ i.e.~$A \OR B \in \val{\tm m}{v}$. 
            By property 6 of Definition~\ref{def:basicmodmodel},
            $A \in \val{\tm m}{v}$ or $B \in \val{\tm m}{v}$.
			So $\satisfy[{\model[B]}]{v}{\sat{m}{A}}$ or $\satisfy[{\model[B]}]{v}{\sat{m}{B}}$, hence $\satisfy[{\model[B]}]{v}{\sat{m}{A} \OR \sat{m}{B}}$.
            This gives $\satisfy[{\model[B]}]{w}{\sat{m}{(A \OR B)} \IMP (\sat{m}{A} \OR \sat{m}{B})}$.

			\item $\jkax[4]: (\sat{m}{A} \IMP \just{t}{B}) \IMP \just{\jupdt{m}{t}}{(A \IMP B)}$.
			Let $v \irelop w$ with $\satisfy[{\model[B]}]{v}{\sat{m}{A} \IMP \just{t}{B}}$ i.e.~for all $u \irelop v$, $A \notin \val{\tm m}{u}$ or $B \in \val{\tm t}{u}$.
            By property 7 of Definition~\ref{def:basicmodmodel},
			$A \IMP B \in \val{(\jupdt{m}{t})}{v}$.
			So $\satisfy[{\model[B]}]{v}{\just{\jupdt{m}{t}}{(A \IMP B)}}$.
            Hence $\satisfy[{\model[B]}]{w}{(\sat{m}{A} \IMP \just{t}{B}) \IMP \just{\jupdt{m}{t}}{(A \IMP B)}}$.

			\item $\jkax[5]: \sat{m}{\BOT} \IMP \BOT$.
			Let $v \irelop w$, 
            by property 8 of Definition~\ref{def:basicmodmodel},
            we have $\BOT \notin \val{\tm m}{v}$ and so $\notsatisfy[{\model[B]}]{v}{\sat{m}{\BOT}}$.
			Therefore, $\satisfy[{\model[B]}]{w}{\sat{m}{\BOT} \IMP \BOT}$.

            \item $\just{\jbang[n]{c}}{\just{\jbang[n-1]{c}}{\just{\dots}{\just{\jbang{c}}{\just{c}{A}}}}}$ is a conclusion of $\can$.
            Then, as $\just{\jbang[n-1]{c}}{\just{\dots}{\just{\jbang{c}}{\just{c}{A}}}} \in \val{(\jbang[n]{c})}{w}$, we have that $\satisfy[{\model[B]}]{w}{\just{\jbang[n]{c}}{\just{\jbang[n-1]{c}}{\just{\dots}{\just{\jbang{c}}{\just{c}{A}}}}}}$.
            \qedhere
		\end{itemize}
	\end{proof}

    \begin{remark}
    To show soundness, it is important that conditions 1-4 in Def.~\ref{def:basicmodmodel}
    are inclusions and not equalities.
    As for example, were they equalities, the formula $(\sat{\sappl{t}{m}}{q} \AND \sat{m}{p}) \IMP \just{t}{(p \IMP q)}$ would be valid.
    However, it is not a theorem of $\JIKCS$ -- otherwise, by Theorem~\ref{thm:forgetful} we would get $\proves{\IK}{(\DIA q \AND \DIA p) \IMP \BOX(p \IMP q )}$ which is not true.
    We can build a countermodel $\model[B]$ to this formula meeting the inclusion-based conditions:
    $\worlds=\set{w}, \irel = \set{(w,w)}, \val{(\sappl{t}{m})}{w}= \set{q}, \val{m}{w} = \set{p}, \val{t}{w} = \emptyset$.
    Note that $\val{t}{w} \jappl{}{} \val{m}{w} = \emptyset \sset \val{(\sappl{t}{m})}{w}$.
    We have $\satisfy[{\model[B]}]{w}{\sat{\sappl{t}{m}}{q} \AND \sat{m}{p}}$ but $\notsatisfy[{\model[B]}]{w}{\just{t}{(p \IMP q)}}$, hence $\notsatisfy[{\model[B]}]{w}{ \sat{(\sappl{t}{m}}{q} \AND \sat{m}{p}) \IMP \just{t}{(p \IMP q)}}$.    
    \end{remark}

\subsection{Completeness}
In this section we prove the reverse direction of the last result, i.e.~completeness.

\begin{theorem}[Completeness]\label{thm:complbasicmodularmodel}
    For any formula $A$: $\basicmod {A} \implies \proves{\JIKCS}{A}$.
\end{theorem}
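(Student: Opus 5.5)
We prove the contrapositive through a canonical basic model. Throughout, $\CS$ is assumed axiomatically appropriate, as in the Lifting Lemma~\ref{lem:lifting} and Lemma~\ref{prop:orterms}. Suppose $\notproves{\JIKCS}{A}$. We build a basic model whose worlds are the \emph{prime theories} of $\JIKCS$: sets $\primeset\sset\langjust$ that are deductively closed, consistent ($\BOT\notin\primeset$) and prime ($B\OR C\in\primeset$ implies $B\in\primeset$ or $C\in\primeset$). The components are:
\begin{itemize}
\item $\canworlds$ is the set of prime theories, and $\canirel$ is set inclusion;
\item $\canval{p}{\primeset}=1$ iff $p\in\primeset$;
\item $\canval{\tm t}{\primeset}=\set[B]{\just{t}{B}\in\primeset}$ and $\canval{\tm m}{\primeset}=\set[B]{\sat{m}{B}\in\primeset}$.
\end{itemize}
The standard prime lemma for intuitionistic logic applies, since its proof only uses $\IPLax$, $\modus$ and the Deduction Theorem~\ref{thm:deduction}. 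It says that if $\notproves[\Gamma]{\JIKCS}{B}$, then there is a prime theory $\primeset[1]\ssetop\Gamma$ with $B\notin\primeset[1]$. Applied to $\emptyset$ and $A$, this gives a world refuting $A$ once the truth lemma is in place.

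\textbf{Checking the model conditions.} Persistence of $\basicval$ is immediate from inclusion. Conditions 1--4 follow from $\jkax[1]$, $\jkax[2]$, $\jsumaxl$, $\jsumaxr$, $\juniaxl$, $\juniaxr$ and deductive closure. Condition~5 follows from the $\can$ rule. Condition~6 follows from $\jkax[3]$ together with primeness. Condition~8 follows from $\jkax[5]$ together with consistency.

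Condition~7 is the main obstacle. Assume that for every prime $\primeset[1]\ssetop\primeset$ we have $\sat{m}{B}\notin\primeset[1]$ or $\just{t}{C}\in\primeset[1]$. We must show $\just{\jupdt{m}{t}}{(B\IMP C)}\in\primeset$. By $\jkax[4]$ it suffices to show $\sat{m}{B}\IMP\just{t}{C}\in\primeset$. Suppose not. By the Deduction Theorem, $\notproves[\primeset,\sat{m}{B}]{\JIKCS}{\just{t}{C}}$. The prime lemma then yields a prime $\primeset[1]\ssetop\primeset\UNI\set{\sat{m}{B}}$ with $\just{t}{C}\notin\primeset[1]$, which contradicts the assumption. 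This is exactly the intuitionistic implication argument, transported to the term level.

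\textbf{Truth lemma.} We next prove $\satisfy[\canmodel]{\primeset}{B}$ iff $B\in\primeset$, by induction on $B$.
\begin{itemize}
\item Atoms, $\BOT$ and $\AND$ are immediate.
\item $\OR$ uses primeness.
\item $\IMP$ uses the prime lemma as in condition~7.
\item $\just{t}{B}$ and $\sat{m}{B}$ hold by the definition of $\basicval$; no induction hypothesis is needed, because truth of these formulas is membership in the valuation.
\end{itemize}
Since $\emptyset\sset\primeset$ and $A\notin\primeset$ for the prime theory obtained above, $A$ fails at $\primeset$ in a basic model. Hence $\not\basicmod A$, which proves completeness.
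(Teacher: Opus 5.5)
Your proposal is correct and follows the paper's own route: the canonical basic model on prime sets ordered by inclusion, with $\canval{\tm t}{\primeset}=\inv{t}{\primeset}$ and $\canval{\tm m}{\primeset}=\inv{m}{\primeset}$, a check of the basic-model conditions, the Truth Lemma, and the Prime Lemma applied to $\emptyset$ and $A$. Your treatment of condition~7 via $\jkax[4]$, the Deduction Theorem and the Prime Lemma fills in a case the paper leaves implicit, and since your argument never uses the Lifting Lemma, the assumption that $\CS$ is axiomatically appropriate is not needed here.
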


Our proof exploits the construction of a canonical model for intuitionistic justification logic.
This syntactic structure uses prime sets as worlds.

	\begin{definition}[Prime Set]
		A \emph{prime set} $\primeset \sset \langjust$ is a set of formulas satisfying the following:
		\begin{itemize}
			\item \emph{Deductive closure}: if $\proves[\primeset]{\JIKCS}{A}$ then $A \in \primeset$.
			
			\item \emph{Primeness}: if $A \OR B \in \primeset$ then $A \in \primeset$ or $B \in \primeset$.

			\item \emph{Consistency}: $\bot \notin \primeset$.
		\end{itemize}
	\end{definition}

    \begin{notation}
        Let $\primeset,\primeset[1] \sset \langjust$.
        We write $\proves[\primeset]{\JIKCS}{\primeset[1]}$
        if there is a finite set of formulas $\set{A_1, \dots, A_n}\subseteq\primeset[1]$ such that
        $\proves[\primeset]{\JIKCS}{A_1 \OR \dots \OR A_n}$.
        We also write $\notproves[\primeset]{\JIKCS}{\primeset[1]}$ when it does not hold that
        $\proves[\primeset]{\JIKCS}{\primeset[1]}$.
    \end{notation}

    The completeness proof relies on projecting unprovable consecutions
    into the canonical model.

\begin{lemma}[Prime Lemma]\label{lem:prime}
	Let $\primeset[0], \primeset[1] \sset \langjust$. 
    If $\notproves[{\primeset[0]}]{\JIKCS}{\primeset[1]}$
	then there exists a prime set $\primeset' \ssetop \primeset[0]$ such that $\notproves[\primeset']{\JIKCS}{\primeset[1]}$.
    Furthermore, $\primeset'$ is a maximal set (w.r.t.~inclusion) satisfying these conditions.
\end{lemma}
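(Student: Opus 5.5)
We use Zorn's Lemma, in the standard way for intuitionistic prime-set lemmas. Let $\mathcal{S}$ be the family of all sets $\primeset'' \sset \langjust$ with $\primeset[0] \sset \primeset''$ and $\notproves[\primeset'']{\JIKCS}{\primeset[1]}$, ordered by inclusion. By hypothesis $\primeset[0] \in \mathcal{S}$, so $\mathcal{S}$ is non-empty.

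For chains, let $\{\primeset_i\}$ be a chain in $\mathcal{S}$ and suppose its union proves $\primeset[1]$. Then some finite set of premises derives $A_1 \OR \dots \OR A_n$ with each $A_k \in \primeset[1]$. These finitely many premises all lie in a single member of the chain, which would then prove $\primeset[1]$, a contradiction. So the union lies in $\mathcal{S}$, and Zorn's Lemma yields a maximal element $\primeset'$. Maximality with respect to inclusion is then immediate, and it remains to show that $\primeset'$ is prime.

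\emph{Consistency.} If $\BOT \in \primeset'$, then $\proves[\primeset']{\JIKCS}{A}$ for any $A$. This uses ex falso from $\IPLax$ when $\primeset[1] \neq \emptyset$. When $\primeset[1] = \emptyset$, we read the empty disjunction as $\BOT$, so $\proves[\primeset']{\JIKCS}{\primeset[1]}$ means $\proves[\primeset']{\JIKCS}{\BOT}$. Either way this contradicts $\primeset' \in \mathcal{S}$. Any convention for the empty case works as long as it is applied uniformly.

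\emph{Deductive closure.} Suppose $\proves[\primeset']{\JIKCS}{A}$. If $\primeset' \UNI \set{A}$ proved some disjunction $D$ of formulas from $\primeset[1]$, then $\primeset'$ would prove $A \IMP D$ by the Deduction Theorem~\ref{thm:deduction}. Together with $A$ and $\modus$, $\primeset'$ would prove $D$, a contradiction. Hence $\primeset' \UNI \set{A} \in \mathcal{S}$, and maximality gives $A \in \primeset'$.

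\emph{Primeness.} This is the main step. Let $A \OR B \in \primeset'$ and suppose, for contradiction, that $A \notin \primeset'$ and $B \notin \primeset'$. By maximality, $\primeset' \UNI \set{A}$ proves some disjunction $D_1$ of elements of $\primeset[1]$, and $\primeset' \UNI \set{B}$ proves some such disjunction $D_2$. Weakening inside $\IPLax$, both $A$ and $B$ yield $D_1 \OR D_2$ over $\primeset'$. By the Deduction Theorem, $\primeset'$ proves $A \IMP (D_1 \OR D_2)$ and $B \IMP (D_1 \OR D_2)$. The $\OR$-elimination axiom then gives $\proves[\primeset']{\JIKCS}{D_1 \OR D_2}$, which is a disjunction of elements of $\primeset[1]$. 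This contradicts $\primeset' \in \mathcal{S}$.

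The only delicate point is bookkeeping. The disjunctions must be merged correctly, the empty-disjunction convention must be handled consistently, and the finiteness of derivations must be used in the chain step. Everything else uses only $\IPLax$ and Theorem~\ref{thm:deduction}.
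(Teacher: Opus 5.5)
Your proof is correct, but it produces the maximal set differently from the paper. The paper's sketch is a Lindenbaum-style construction. It enumerates $\langjust$, adds each formula in turn exactly when doing so keeps $\Delta$ underivable, and takes the union. Maximality then holds because any formula whose addition is harmless would have been added at its stage. You instead apply Zorn's Lemma to the family of all extensions of $\Gamma_0$ not entailing $\Delta$, and use the finiteness of derivations to bound chains.

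Once the maximal set is in hand, the two routes coincide. Everything rests on the fact that adding any formula outside $\Gamma'$ makes $\Delta$ derivable. Deductive closure, consistency and primeness then follow as you show, using the Deduction Theorem and, for primeness, $\vee$-elimination. The enumeration is more concrete and avoids choice, but it needs the language to be countable; that holds here, since $\Prop$, the constants and the variables are countable. Zorn needs no cardinality assumption but uses choice. Your maximality, among all extensions of $\Gamma_0$ not entailing $\Delta$ rather than only the prime ones, is slightly stronger than what the statement asserts. It is also what the enumeration actually delivers.

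One correction to an aside: it is not true that any convention for the empty disjunction works. If $\Gamma \vdash \emptyset$ never held, then $\Gamma_0 = \{\bot\}$ with $\Delta = \emptyset$ would refute the lemma, since no consistent set contains $\bot$. So the reading of the empty disjunction as $\bot$, which is the one you actually use, is required.
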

\begin{proof}[Proof sketch]
    The proof goes via a standard argument: we exploit an enumeration of 
    formulas to extend $\Gamma$ step-by-step
    while ensuring
    that the extension does not entail $\Delta$.
    The limit of this process can be shown to be a prime set $\Gamma'$ not 
    entailing $\Delta$.
    The maximality of $\Gamma'$ follows from the fact that
    if a formula could be added without entailing $\Delta$, 
    then it would indeed have been added when 
    enumerated.
\end{proof}

Next, we define a structure which we show to be a basic (modular) model.
To define parts of this structure, we use the sets of formulas
$\inv{t}{\primeset} := \set[A\in \langjust]{\just{\tm t}{A} \in \primeset}$
and 
$\inv{m}{\primeset} := \set[A\in \langjust] {\sat{\tm m}{A} \in \primeset}$.

\begin{definition}[Canonical basic model]\label{def:canbasmodmod}
		Let $\canmodel[B]$ be the tuple $(\canworlds, \canirel, \canbasicval)$  where:
        \begin{center}
        \begin{tabular}{r @{\hspace{0.2cm}} l @{\hspace{3cm}} r@{\hspace{0.2cm}}l}
         \raisebox{0.25ex}{\tiny$\bullet$} & $\canworlds \colonequals \set[\primeset]{\text{$\primeset$ is a prime set}}$; &   
         \raisebox{0.25ex}{\tiny$\bullet$} & $\canval{\tm t}{\primeset} \colonequals \inv{t}{\primeset}$; \\
         \raisebox{0.25ex}{\tiny$\bullet$} & $\canirel \colonequals \sset$; &   
         \raisebox{0.25ex}{\tiny$\bullet$} & $\canval{\tm m}{\primeset} \colonequals \inv{m}{\primeset}$. \\
         \raisebox{0.25ex}{\tiny$\bullet$} & $\canval{p}{\primeset} = 1$ iff $p \in \primeset$; &   
         &  \\
        \end{tabular}
        \end{center}

\end{definition}

\begin{proposition}\label{prop:canbasmodmod}
        $\canmodel[B]$ is a basic model.
	\end{proposition}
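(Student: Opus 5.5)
We check, one requirement at a time, that $\canmodel[B]$ meets Definition~\ref{def:basicmodmodel}. The basic structure is immediate. $\canworlds$ is non-empty: by Corollary~\ref{cor:consistent}, $\notproves[\emptyset]{\JIKCS}{\set{\BOT}}$, so the Prime Lemma~\ref{lem:prime} yields a prime set. The relation $\sset$ is a preorder. The monotonicity conditions (M1) and (M2) used in Lemma~\ref{lem:basmon} also hold trivially: if $\primeset \sset \primeset'$, then $p \in \primeset$ gives $p \in \primeset'$, and $\just{t}{A} \in \primeset$ (resp.\ $\sat{m}{A}\in\primeset$) gives the same membership in $\primeset'$, so $\inv{t}{\primeset} \sset \inv{t}{\primeset'}$.

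Conditions 1--6 and 8 each follow from deductive closure of prime sets together with the matching axiom or rule.

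\begin{itemize}
\item Condition 1: if $A \IMP B \in \inv{s}{\primeset}$ and $A \in \inv{t}{\primeset}$, then $\jkax[1]$ and $\modus$ give $\proves[\primeset]{\JIKCS}{\just{\jappl{s}{t}}{B}}$, hence $B \in \inv{(\jappl{s}{t})}{\primeset}$.
\item Condition 2 is identical, using $\jkax[2]$.
\item Conditions 3 and 4 use $\jsumaxl$, $\jsumaxr$, $\juniaxl$ and $\juniaxr$.
\item Condition 5 holds because every conclusion of $\can$ is derivable from any context, and so lies in every prime set.
\item Condition 6 uses $\jkax[3]$ followed by primeness: $\sat{m}{A}\OR\sat{m}{B} \in \primeset$ gives one of the two disjuncts.
\item Condition 8 uses $\jkax[5]$: $\sat{m}{\BOT} \in \primeset$ would put $\BOT \in \primeset$, contradicting consistency.
\end{itemize}

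The main obstacle is condition 7, the only one with a universally quantified hypothesis over extensions. Suppose that for every prime $\primeset' \ssetop \primeset$, either $A \notin \inv{m}{\primeset'}$ or $B \in \inv{t}{\primeset'}$. We want $A \IMP B \in \inv{(\jupdt{m}{t})}{\primeset}$. By $\jkax[4]$ and deductive closure it suffices to show $\sat{m}{A} \IMP \just{t}{B} \in \primeset$, i.e.\ $\proves[\primeset]{\JIKCS}{\sat{m}{A} \IMP \just{t}{B}}$. We argue by contradiction. If this fails, the Deduction Theorem~\ref{thm:deduction} gives $\notproves[\primeset, \sat{m}{A}]{\JIKCS}{\just{t}{B}}$, i.e.\ $\notproves[\primeset\UNI\set{\sat{m}{A}}]{\JIKCS}{\set{\just{t}{B}}}$. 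The Prime Lemma~\ref{lem:prime} then provides a prime set $\primeset' \ssetop \primeset \UNI \set{\sat{m}{A}}$ with $\just{t}{B} \notin \primeset'$. So $A \in \inv{m}{\primeset'}$ and $B \notin \inv{t}{\primeset'}$, contradicting the hypothesis at $\primeset'$, which is a legitimate world since $\primeset \sset \primeset'$.

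The only care needed is in this last step. We must check that the notion $\notproves[\cdot]{\JIKCS}{\cdot}$ with a singleton right-hand side coincides with ordinary non-derivability. We must also check that the Prime Lemma's output extends $\primeset$, so that it is $\canirel$-above $\primeset$. This is the same pattern as the implication case of the standard intuitionistic canonical model argument, and mirrors~\cite{marti_intutionistic_2016}.
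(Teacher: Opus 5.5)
Your proposal is correct and follows the same route as the paper. The paper's proof checks the conditions of Definition~\ref{def:basicmodmodel} via deductive closure of prime sets and the matching axiom, but writes out only condition~1 (using $\jkax[1]$) as an example. Your argument fills in the cases the paper leaves implicit:
\begin{itemize}
\item non-emptiness of the set of worlds;
\item the monotonicity conditions;
\item condition~7, which you handle with the standard Deduction Theorem plus Prime Lemma argument.
\end{itemize}
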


    \begin{proof}
        We are required to prove that this structure satisfies all 
        the properties from Definition~\ref{def:basicmodmodel}.
        As an example, we show that $\jappl{\canval{s}{\primeset}}{\canval{t}{\primeset}} \sset \canval{(\jappl{s}{t})}{\primeset}$.
        Let $B \in \jappl{\canval{s}{\primeset}}{\canval{t}{\primeset}}$.
			By definition, there is a formula $A$ such that $A \IMP B \in \canval{s}{\primeset}$ and $A \in \canval{t}{\primeset}$.
			This means that $\proves[\primeset]{\JIKCS}{\just{s}{(A \IMP B)}}$ and $\proves[\primeset]{\JIKCS}{\just{t}{A}}$.
			Using the $\jkax[1]$~axiom we get $\proves[\primeset]{\JIKCS}{\just{\jappl{s}{t}}{B}}$, hence $\just{\jappl{s}{t}}{B} \in \primeset$ by deductive closure.
			Therefore $B \in \inv{(\jappl{s}{t})}{\primeset} = \canval{(\jappl{s}{t})}{\primeset}$.
    \end{proof}

    \begin{lemma}[Truth Lemma]\label{lem:truth}
        Let $A$ be a formula and $\primeset$ be a prime set.
        Then:
		$A \in \primeset \iff \satisfy[{\canmodel[B]}]{\primeset}{A}$.
	\end{lemma}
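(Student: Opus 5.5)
We prove the Truth Lemma by induction on the structure of $A$, simultaneously for all prime sets $\primeset$. The base cases are immediate. For $A = p$ we have $\satisfy[{\canmodel[B]}]{\primeset}{p}$ iff $\canval{p}{\primeset}=1$ iff $p\in\primeset$. For $A=\BOT$, consistency of $\primeset$ gives $\BOT\notin\primeset$, and $\BOT$ is never true. The modal-like cases are also direct from Definition~\ref{def:canbasmodmod}:
\[
\satisfy[{\canmodel[B]}]{\primeset}{\just{t}{B}} \iff B\in\canval{\tm t}{\primeset}=\inv{t}{\primeset} \iff \just{t}{B}\in\primeset,
\]
and likewise $\satisfy[{\canmodel[B]}]{\primeset}{\sat{m}{B}}$ iff $\sat{m}{B}\in\primeset$. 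Notably, no induction hypothesis is needed here. Proposition~\ref{prop:canbasmodmod} guarantees that the truth relation is well defined on this structure.

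The connective cases are handled as follows.
\begin{itemize}
\item For $B\AND C$, we use deductive closure: $B\AND C\in\primeset$ iff $B\in\primeset$ and $C\in\primeset$, by the $\IPLax$ axioms for conjunction. The induction hypothesis then finishes the case.
\item For $B\OR C$, the direction from left to right uses primeness. The converse uses deductive closure with the disjunction-introduction axioms.
\end{itemize}

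The main step is implication. For ($\Rightarrow$), suppose $B\IMP C\in\primeset$ and $\primeset\sset\primeset[1]$ with $\primeset[1]$ prime and $\satisfy[{\canmodel[B]}]{\primeset[1]}{B}$. By the induction hypothesis $B\in\primeset[1]$, and $B\IMP C\in\primeset[1]$, so by $\modus$ and deductive closure $C\in\primeset[1]$. The induction hypothesis then gives $\satisfy[{\canmodel[B]}]{\primeset[1]}{C}$.

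For ($\Leftarrow$), argue contrapositively. Suppose $B\IMP C\notin\primeset$. Then $\notproves[\primeset]{\JIKCS}{B\IMP C}$ by deductive closure, so the Deduction Theorem~\ref{thm:deduction} gives $\notproves[\primeset, B]{\JIKCS}{C}$, that is, $\notproves[\primeset\UNI\set{B}]{\JIKCS}{\set{C}}$. The Prime Lemma~\ref{lem:prime} then yields a prime set $\primeset'\ssetop\primeset\UNI\set{B}$ with $\notproves[\primeset']{\JIKCS}{C}$, hence $C\notin\primeset'$. We have $\primeset\canirel\primeset'$ and $B\in\primeset'$. By the induction hypothesis, $\satisfy[{\canmodel[B]}]{\primeset'}{B}$ but not $\satisfy[{\canmodel[B]}]{\primeset'}{C}$. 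So $B\IMP C$ fails at $\primeset$.

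The only delicate point is this use of the Prime Lemma together with the Deduction Theorem to construct the witnessing extension. Once it is in place, everything else is routine bookkeeping with deductive closure and primeness.
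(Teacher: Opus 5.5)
Your proof is correct and takes the same route as the paper: induction on $A$, with the $\just{t}{B}$ and $\sat{m}{B}$ cases following directly from $\canval{\tm t}{\primeset} = \inv{t}{\primeset}$ and $\canval{\tm m}{\primeset} = \inv{m}{\primeset}$, with no induction hypothesis needed. The paper writes out only the satisfier case and defers the propositional and proof-term cases, including your Deduction Theorem plus Prime Lemma argument for implication, to Marti and Studer; you simply make those standard cases explicit.
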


    \begin{proof}
		By induction on $A$.
		All cases are covered in~\cite{marti_intutionistic_2016}, but the case for $A = \sat{\tm m}{B}$ which we show:
        \begin{center}
        \begin{tabular}{c c l}
          $\sat{\tm m}{B} \in \primeset$   & $\iff$ & $B \in \inv{m}{\primeset}$ by definition of $\inv{m}{\primeset}$ \\
             & $\iff$ & $B \in\canval{\tm m}{\primeset}$ by definition of the model $\canmodel[B]$ \\
             & $\iff$ & $\satisfy[{\canmodel[B]}]{\primeset}{\sat{\tm m}{B}}$ by definition of truth in a basic model.\\
        \end{tabular}
        \end{center}
        \vspace{-0.645cm}
        \qedhere
    \end{proof}

    With these results in hand, we can prove completeness.

    \begin{proof}[Proof of Theorem~\ref{thm:complbasicmodularmodel}]
		We prove the contrapositive. 
        Suppose $\notproves{\JIKCS}{A}$.
		Then by the Prime Lemma~\ref{lem:prime}, there exists a prime set $\primeset$ such that $\notproves[\primeset]{\JIKCS}{A}$.
		So $A \notin \primeset$, and by the Truth Lemma~\ref{lem:truth} we have $\notsatisfy[{\canmodel[B]}]{\primeset}{A}$.
        Finally, Proposition~\ref{prop:canbasmodmod} ensures that the $\canmodel[B]$ belongs to the adequate class of models.
	\end{proof}

\section{Modular Models}\label{sec:modular}
In the classical setting, modular models~\cite{artemov_ontology_2012} are epistemic models extending basic modular models with a modal accessibility relation to model the notion of \emph{knowledge}.
A key feature of these models is \emph{justification yields belief}, which says that if a formula of the form $\just{t}{A}$ holds at a world, then $A$ must hold in all accessible worlds.
This behaviour imitates that of the $\BOX$ operator in modal logic and hence the feature promotes ``backwards compatibility" from the justification logic to its corresponding modal logic.

In our setting, we must make several adaptations.
First, satisfaction of a modal formula $\BOX G$ is not defined locally, but globally along the intuitionistic pre-order, so the justification yields belief principle must account for this.
Second, to deal with satisfiers we introduce \emph{satisfaction yield possibility}, which says that if a formula of the form $\sat{m}{A}$ holds at a world, then $A$ must hold in some accessible world.
Finally, to promote ``backwards compatibility", we have to ensure that the additional frame conditions of a birelational frame (Definition~\ref{def:frames})~\cite{fischer_servi_axiomatizations_1984} also hold.

\subsection{Definition and Soundness}
We define models for intuitionistic justification logic 
that bridge it with intuitionistic modal logic via the modal relation.
To flesh out this connection, we make the interpretation of terms resonate with the modal relation by imposing conditions on the former using the latter and the notion of truth.
To get there, we need to introduce preliminary structures on which we evaluate formulas.

\begin{definition}[Intuitionistic quasi-model]
    An (intuitionistic) quasi-model is a tuple~$\model = (\worlds, \irel, \rel, \basicval)$ where $\model[B] = (\worlds, \irel, \basicval)$ is a basic model and $R$~is a binary relation on~$W$.
\end{definition}

\begin{definition}[Truth in a quasi-model]\label{def:quasi-model}
    For a quasi-model $\model = (\worlds, \irel, \rel, \basicval)$, a point $w\in W$ and a formula~$A$, we define the \emph{truth of $A$ at $w$ in $\model$} as:
    $\satisfy[{\model[M]}]{w}{A}$ iff
    $\satisfy[{\model[B]}]{w}{A}$
    where $\model[B] = (\worlds, \irel, \basicval)$.
\end{definition}

As the notion of truth is the same in basic and quasi-models, the following obviously follows.
    \begin{lemma}[Monotonicity Lemma]\label{lem:modmonotonicity}
		Let $\model = (\worlds, \irel, \rel,  \basicval)$ be a quasi-model. 
		Let $w,v\in\worlds$ such that $w \irel v$.
		Let $A$ be a formula.
		Then: $\satisfy{w}{A} \implies \satisfy{v}{A}$
	\end{lemma}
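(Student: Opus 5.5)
This follows at once from the Monotonicity Lemma~\ref{lem:basmon} for basic models, since by Definition~\ref{def:quasi-model} truth in a quasi-model is truth in its underlying basic model. Let $\model = (\worlds, \irel, \rel, \basicval)$ be a quasi-model and put $\model[B] = (\worlds, \irel, \basicval)$; by the definition of quasi-model, $\model[B]$ is a basic model with the same set of worlds and the same preorder $\irel$.

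Now take $w \irel v$ with $\satisfy{w}{A}$. Unfolding Definition~\ref{def:quasi-model} gives $\satisfy[{\model[B]}]{w}{A}$. Since $w \irel v$ holds in $\model[B]$ as well, Lemma~\ref{lem:basmon} yields $\satisfy[{\model[B]}]{v}{A}$. Folding Definition~\ref{def:quasi-model} back gives $\satisfy{v}{A}$.

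No induction on $A$ is needed here: the relation $\rel$ plays no role in the truth clauses, so every case, including $\just{t}{B}$ and $\sat{m}{B}$, was already handled by the induction in Lemma~\ref{lem:basmon}. The only point to check is bookkeeping, namely that the preorder used in the hypothesis $w \irel v$ is the one inherited by $\model[B]$, which holds by construction. There is therefore no real obstacle.
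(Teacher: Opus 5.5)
Your proof is correct and is essentially the paper's argument: pass from the quasi-model to its underlying basic model $\model[B] = (\worlds, \irel, \basicval)$ via Definition~\ref{def:quasi-model}, apply Lemma~\ref{lem:basmon}, and unfold the definition again. Your observation that $\rel$ plays no role in the truth clauses is exactly why no fresh induction is needed.
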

    \begin{proof}
        By definition $\satisfy{w}{A}$ implies $\satisfy[{\model[B]}]{w}{A}$ where $\model[B] = (\worlds, \irel,  \basicval)$.
        Using the Monotonicity Lemma \ref{lem:basmon} for basic modular models we get $\satisfy[{\model[B]}]{v}{A}$.
        By definition here again, we obtain $\satisfy{v}{A}$.
    \end{proof}

\begin{remark}
    Definition~\ref{def:quasi-model} exhibits the \emph{locality of truth}~\cite{marti_intutionistic_2016,kuznets_logics_2019}. 
    This notion in quasi-models helps us use these intermediate structures to 
    bridge justification and modal logics:
    we obtain the adequate models by restricting the interpretation of terms by $R$ and truth as shown in the next definition.
\end{remark}

\begin{definition}[Intuitionistic modular model]\label{def:modular-models}
        An (intuitionistic modular) model is a quasi-model $\model = (\worlds, \irel, \rel, \basicval)$ such that $\modmodel[M] \colonequals (\worlds, \irel, \rel, \basicval_\Prop)$ is a birelational intuitionistic model together with:

        \begin{itemize}
            \item \emph{Justification yields belief} (JYB) principle: 
            for all proof terms $\tm t \in \prfterms$
            \begin{center}
            $\val{\tm t}{w} \sset \jyb{w} \colonequals \set[A]{\text{for all $w', v' \in \worlds$ with $w \irel w' \rel v'$, $\satisfy{v'}{A}$}}$
            \end{center}
            
            \item \emph{Satisfaction yields possibility} (SYP) principle:
            for all satisfiers $\tm m \in \satterms$
            \begin{center}
            $\val{\tm m}{w} 
            \sset 
            \syp{w} 
            \colonequals 
            \set[A]{\text{there exists $v \in \worlds$ such that $w \rel v$ and $\satisfy{v}{A}$}}$
            \end{center}
        \end{itemize}

    We write $\modulmod A$ if for any model $\model[M]$ and world $w \in \worlds$ we have $\satisfy{w}{A}$.
\end{definition}

 In contrast to the classical setting or other intuitionistic variants~\cite{artemov_ontology_2012,marti_intutionistic_2016,pischke_intermediate_2023}, the definition of (JYB) takes into account the global interpretation of $\BOX$ via the pre-order~$\irel$.

Exploiting soundness with respect to basic models, the result follows for models.

\begin{theorem}[Soundness of modular models]
        For any formula $A$:
        $\proves{\JIKCS}{A} \implies \modulmod{A}$.
    \end{theorem}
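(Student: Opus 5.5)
The argument reduces the statement to soundness for basic models, Theorem~\ref{thm:soundbasicmodularmodel}, by exploiting the locality of truth in quasi-models. Assume $\proves{\JIKCS}{A}$, and fix an arbitrary intuitionistic modular model $\model = (\worlds, \irel, \rel, \basicval)$ together with a world $w \in \worlds$. We must show $\satisfy{w}{A}$.

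First, observe that $\model$ is by definition a quasi-model. Hence its reduct $\model[B] = (\worlds, \irel, \basicval)$ is a basic model: conditions 1--8 of Definition~\ref{def:basicmodmodel} are part of the definition of a quasi-model. The additional requirements on modular models (that $\modmodel[M]$ is a birelational model, plus the (JYB) and (SYP) principles) only restrict the class further and are not needed here.

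Second, apply Theorem~\ref{thm:soundbasicmodularmodel} to obtain $\basicmod A$. In particular, $\satisfy[{\model[B]}]{w}{A}$. By Definition~\ref{def:quasi-model}, truth in the quasi-model $\model$ at $w$ coincides with truth in $\model[B]$ at $w$. Therefore $\satisfy{w}{A}$. Since $\model$ and $w$ were arbitrary, $\modulmod A$ follows.

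There is no real obstacle. The only point to check is that the satisfaction clauses for $\just{t}{\cdot}$ and $\sat{m}{\cdot}$ in modular models are the local ones inherited from basic models, and not ones defined through $\rel$. This is exactly what Definition~\ref{def:quasi-model} stipulates, so the class of modular models is a subclass of the basic models up to forgetting $\rel$.
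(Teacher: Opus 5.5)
Your proposal is correct and follows essentially the same route as the paper: fix a modular model and a world, note that its reduct $(\worlds, \irel, \basicval)$ is a basic model, apply soundness for basic models, and transfer truth back via the locality of truth in quasi-models. Your citation of the quasi-model truth definition for that last step is in fact slightly more precise than the paper's own reference.
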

    \begin{proof}
        Let $\model = (\worlds, \irel, \rel,  \basicval)$ be a modular model and
        $w$ a world. 
        Since $\model[B] = (\worlds, \irel,  \basicval)$ is a basic modular model, we have
        $\satisfy[{\model[B]}]{w}{A}$ by Theorem~\ref{thm:soundbasicmodularmodel}.
        Hence, $\satisfy{w}{A}$ by Definition~\ref{def:modular-models}. 
    \end{proof}

\begin{remark}[Differences between Modular models and Fitting models]
    The first epistemic models for classical justification logics were introduced by Fitting~\cite{fitting_logic_2005}.
    These differ from Artemov's modular models~\cite{artemov_ontology_2012} on the definition of satisfaction for justification formulas of the form $\just{t}{A}$.
    Fitting models additionally require $A$ to hold in all \emph{modally accessible} worlds, i.e.,
    $\satisfy{w}{\just{t}{A}}$ iff $A \in \val{\tm t}{w}$ and for all $v \in \worlds$ with $w \rel v$, $\satisfy{v}{A}$.
    In contrast, modular models make an \emph{ontological separation} with the modal accessibility relation which plays no role on the satisfaction of formulas. 
    Yet, both classes of models ensure ``backwards compatibility" with modal logic~\cite{kuznets_logics_2019};
    in fact, they are equivalent~\cite{kuznets_justifications_2012}.
\end{remark}
\subsection{Completeness}
We prove completeness w.r.t.~the semantics on models,
assuming that $\CS$ is axiomatically appropriate.

\begin{theorem}[Completeness]\label{thm:complmodularmodel}
    For any formula $A$: $\modulmod {A} \implies \proves{\JIKCS}{A}$.
\end{theorem}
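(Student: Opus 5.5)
We would build a canonical modular model on top of the canonical basic model $\canmodel[B]$ of Definition~\ref{def:canbasmodmod}, and prove the contrapositive exactly as in the proof of Theorem~\ref{thm:complbasicmodularmodel}. Since the truth of a formula in a quasi-model is defined as truth in its underlying basic model, the Truth Lemma~\ref{lem:truth} transfers unchanged to any quasi-model whose basic part is $\canmodel[B]$. The whole task is therefore to add a modal relation $\canrel$ on prime sets. We must then check that $(\canworlds,\canirel,\canrel,\canbasicval_\Prop)$ is a birelational model, i.e.\ satisfies $\BC$ and $\FC$, and that (JYB) and (SYP) hold.

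\textbf{Definition of $\canrel$.} For a prime set $\primeset$, let $\hashprime := \set[A]{\just{t}{A}\in\primeset \text{ for some } \tm t}$. Set $\primeset \canrel \primeset[1]$ iff
\begin{itemize}
\item $\hashprime \sset \primeset[1]$, and
\item for every $A \in \primeset[1]$ and every satisfier term $\tm m$, the formula $\sat{m}{A}$ belongs to $\primeset$.
\end{itemize}
The second clause is deliberately strong, taking all satisfiers rather than some, so that the $\BC$ step below goes through. Persistence of $\canbasicval_\Prop$ is immediate because $\canirel=\sset$. (JYB) follows from the first clause: if $\just{t}{A}\in\primeset\sset\primeset'\canrel\primeset[1]'$, then $A\in\primeset[1]'$, and the Truth Lemma gives $\satisfy{\primeset[1]'}{A}$.

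\textbf{(SYP).} With the strong clause this becomes delicate, so we would first try the following. Given $\sat{m}{A}\in\primeset$, apply the Prime Lemma~\ref{lem:prime} to $\hashprime\cup\set{A}$ against $\set[B]{\exists\tm n.\ \sat{n}{B}\notin\primeset}$. Non-derivability is argued as follows. Suppose $\hashprime, A \vdash B_1\OR\dots\OR B_k$. Part~2 of the Lifting Lemma~\ref{lem:lifting} yields some $\sat{n}{(B_1\OR\dots\OR B_k)}\in\primeset$, and $\jkax[3]$ together with primeness puts some $\sat{n}{B_i}$ in $\primeset$. The difficulty is that we need every satisfier term, not just this one. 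Should this fail, we would fall back to the weaker clause ``for every $A\in\primeset[1]$, some $\sat{m}{A}\in\primeset$'', for which the argument above works verbatim. In that case we would handle $\BC$ instead by iterating $\jkax[4]$ over finitely many possibly distinct satisfiers, as described in the last paragraph.

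\textbf{$\FC$.} Given $\primeset\sset\primeset'$ and $\primeset\canrel\primeset[1]$, apply the Prime Lemma to $\primeset'^{\sharp}\cup\primeset[1]$ against the set of formulas lacking the required satisfier in $\primeset'$. A derivation of a disjunction $D$ from finitely many premises uses some $C\in\primeset[1]$, namely the conjunction of the $\primeset[1]$-premises. We have $\sat{m}{C}\in\primeset\sset\primeset'$, so lifting via $\jkax[2]$ and then applying $\jkax[3]$ gives the contradiction.

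\textbf{$\BC$, the main obstacle.} Given $\primeset\canrel\primeset[1]\sset\primeset[1]'$, apply the Prime Lemma to $\primeset\cup\set[\sat{m}{A}]{A\in\primeset[1]',\ \tm m\in\satterms}$ against $\set[\just{t}{B}]{B\notin\primeset[1]'}$. Suppose $\primeset,\sat{m_1}{A_1},\dots,\sat{m_k}{A_k}\vdash \just{t_1}{B_1}\OR\dots\OR\just{t_l}{B_l}$. By Lemma~\ref{prop:orterms} the right-hand side can be replaced by $\just{s}{B'}$ with $B'=B_1\OR\dots\OR B_l$. We then discharge the hypotheses one at a time, from the last to the first, using the Deduction Theorem~\ref{thm:deduction} and $\jkax[4]$:
\[
\primeset,\sat{m_1}{A_1},\dots,\sat{m_{k-1}}{A_{k-1}} \vdash \just{\jupdt{m_k}{s}}{(A_k\IMP B')},
\]
and so on, until we reach $\just{u}{(A_1\IMP\dots\IMP A_k\IMP B')}\in\primeset$. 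Since $\primeset\canrel\primeset[1]$, this implication lies in $\primeset[1]\sset\primeset[1]'$. Each $A_i\in\primeset[1]'$, so $B'\in\primeset[1]'$, and primeness puts some $B_i\in\primeset[1]'$, which is a contradiction. Because the satisfiers $\tm m_i$ are arbitrary, this argument needs neither fresh variables nor the Substitution Lemma, which is why the strong clause in $\canrel$ is convenient. The main risk is reconciling this strong clause with (SYP). We expect that step to need the most care: either we show that the witness constructed there can be closed under all satisfiers, or we switch to the weak clause and verify that the iterated $\jkax[4]$ argument survives in that setting.
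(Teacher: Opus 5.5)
Your leading definition of $\canrel$, which requires $\sat{m}{A}\in\Gamma$ for \emph{every} satisfier $\tm m$, cannot work, so the question you leave open at the end has a definite answer.

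For distinct satisfier variables $\tm a,\tm b$, the formula $\sat{a}{p}\IMP\sat{b}{p}$ is not a theorem of $\JIKCS$. The forgetful projection cannot show this, but a two-world basic model can:
\begin{itemize}
\item take $w\irel v$, with all propositional variables true exactly at $v$;
\item let $X$ be the set of formulas that are classically true under this valuation once all term prefixes are erased;
\item interpret every proof and satisfier term at both worlds as $X$, except $\val{\tm b}{w}=\emptyset$.
\end{itemize}
Conditions 1--8 are routine; for 7 at $w$, the hypothesis instantiated at $v$ already gives $A\IMP B\in X$. Then $w$ satisfies $\sat{a}{p}$ but not $\sat{b}{p}$. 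One world would not do, because $\sat{n}{A}\IMP\NOT\NOT\sat{m}{A}$ is derivable from $\jkax[4]$, $\jkax[2]$ and $\jkax[5]$.

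By soundness and the Prime Lemma, some prime $\Gamma$ therefore contains $\sat{a}{p}$ but not $\sat{b}{p}$. Under your strong clause, no $\Delta$ with $\Gamma\canrel\Delta$ can contain $p$. So SYP fails at $\Gamma$, and no witness can be ``closed under all satisfiers''. The strong clause also breaks your FC step. Lifting yields a single $\sat{n}{B_i}\in\Gamma'$, and this does not contradict $\sat{n'}{B_i}\notin\Gamma'$ for some other $\tm n'$.

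So commit to the weak clause. It is exactly the paper's Definition~\ref{def:canmodmod}, and with it your proof is correct and follows the paper's route:
\begin{itemize}
\item The Truth Lemma transfers and JYB holds exactly as you say.
\item Your SYP argument (Prime Lemma on $\Gamma^\sharp\cup\set{A}$ against $\Gamma^\flat$, then Lifting and $\jkax[3]$) is Lemma~\ref{lem:extdiaexist}(2) with $\Pi=\set{A}$.
\item Your FC argument works once the target set is $\Gamma'^{\flat}$.
\item Your BC argument survives verbatim. From $\Gamma\canrel\Delta$ it uses only $\Gamma^\sharp\sset\Delta$. The constructed $\Gamma'$ contains $\sat{m}{A}$ for all $A\in\Delta'$ and all $\tm m$, which trivially gives the weak clause. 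And $\Gamma'^\sharp\sset\Delta'$ holds because $\Gamma'$ derives no $\just{t}{B}$ with $B\notin\Delta'$.
\end{itemize}
What you credited to the strong clause in BC is really the effect of seeding $\Gamma'$ with all satisfiers, which is independent of how $\canrel$ is defined.

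The paper only says FC and BC are ``similar to'' Fischer Servi's proof. Your explicit arguments fill in exactly that step: conjoining the $\Delta$-premises into a single $C\in\Delta$ before lifting, and discharging the satisfier hypotheses one by one with the Deduction Theorem and $\jkax[4]$.
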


We construct once more a canonical model
using the basic canonical model from Definition~\ref{def:canbasmodmod} as a basis.
In essence, we aim to show that an adequate modal relation
can be grafted onto this basic model.

We first need to define, for a given $\Gamma\subseteq\langjust$, certain sets of formulas:
$\hashprime[0] \colonequals \set[A]{\text{ for some } \tm t \in \prfterms,\,\just{t}{A} \in \primeset[0] }$
and $\flatprime[0] \colonequals \set[A]{\text{for all } \tm m \in \satterms,\,\sat{m}{A} \notin \primeset[0]}$.

\begin{definition}[Canonical intuitionistic modular model]\label{def:canmodmod}
		Let $\canmodel$ be the tuple $(\canworlds, \canirel, \canrel, \canbasicval)$ where
        $\canmodel[B] = (\canworlds, \canirel, \canbasicval)$ is the canonical basic model,
        and
		$\primeset[0] \canrel \primeset[1]$ 
        is defined as: 
        if $\just{\tm t}{A} \in \Gamma$ then $A\in \Delta$
        and for each $A \in \Delta$ there exists $\sat{\tm m}{A} \in \Gamma$.
	\end{definition}

It is immediate that $\canmodel$ is a quasi-model, as $\canmodel[B]$ is a basic model by Proposition~\ref{prop:canbasmodmod}.
Utilising the locality of truth, this allows us to directly prove the Truth Lemma for $\canmodel$.

\begin{lemma}[Truth Lemma]\label{lem:modtruth}
        Let $A$ be a formula and $\primeset$ be a prime set.
        Then: $A \in \primeset \iff
        \satisfy[{\canmodel[M]}]{\primeset}{A}$.
\end{lemma}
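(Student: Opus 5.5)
The proof is essentially a transfer argument resting on the locality of truth. Truth in a quasi-model is defined (Definition~\ref{def:quasi-model}) as truth in the underlying basic model, and the relation $\canrel$ plays no role in evaluating formulas. So the plan is to reduce the statement directly to the Truth Lemma~\ref{lem:truth} for the canonical basic model $\canmodel[B]$, rather than redo an induction on $A$.

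The chain of equivalences runs as follows. Fix a prime set $\primeset$, which is a world of $\canmodel$ since $\canworlds$ is shared between $\canmodel$ and $\canmodel[B]$. By Definition~\ref{def:canmodmod}, the basic reduct $(\canworlds,\canirel,\canbasicval)$ of $\canmodel$ is exactly $\canmodel[B]$. Then
\[
A \in \primeset \iff \satisfy[{\canmodel[B]}]{\primeset}{A} \iff \satisfy[{\canmodel}]{\primeset}{A},
\]
where the first equivalence is Lemma~\ref{lem:truth} and the second is Definition~\ref{def:quasi-model}. For the latter to apply, $\canmodel$ must be a quasi-model; this holds because $\canmodel[B]$ is a basic model by Proposition~\ref{prop:canbasmodmod} and $\canrel$ is merely some binary relation on $\canworlds$.

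I do not expect any step to be a genuine obstacle. The only point needing care is to check that nothing in the definition of $\canrel$ alters the valuation or the preorder, so that the basic reduct really is $\canmodel[B]$ and not a variant of it. As a fallback, should one want a self-contained argument, one can redo the induction on $A$. The propositional cases use primeness, deductive closure and the Prime Lemma~\ref{lem:prime} for the $\IMP$ case. The cases $\just{t}{B}$ and $\sat{m}{B}$ are immediate from $\canval{\tm t}{\primeset}=\inv{t}{\primeset}$ and $\canval{\tm m}{\primeset}=\inv{m}{\primeset}$.

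The real difficulty lies after this lemma, in showing that $\canmodel$ is actually a modular model. That requires birelationality (the conditions $\BC$ and $\FC$) together with (JYB) and (SYP), and is not part of the present statement.
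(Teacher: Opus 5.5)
Your proposal is correct and is the paper's proof: it too reduces the statement, via locality of truth in quasi-models (Definition~\ref{def:quasi-model}) and the fact that $\canmodel[B]$ is a basic model (Proposition~\ref{prop:canbasmodmod}), to the Truth Lemma~\ref{lem:truth} for the canonical basic model. Your fallback induction and the closing remarks on confluence, (JYB) and (SYP) are not needed here but do no harm.
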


\begin{proof}
        As $\canmodel[B]$ is a basic model, we have
        $\satisfy[{\canmodel[M]}]{\primeset}{A}
            \iff
            \satisfy[{\canmodel[B]}]{\primeset}{A} \iff
            A \in \primeset$ by Lemma~\ref{lem:truth}.
    \end{proof}

The canonical model also displays the following strong properties which ensure all collections of formulas made true by the same satisfier $\tm m$ (or false by the same proof term $\tm t$) must be in fact true (or false, respectively) at the same world.

\begin{lemma}\label{lem:boxexist}\label{lem:extdiaexist}
        Let $\primeset[2] \sset \langjust$
        and
        $\primeset\in\canworlds$.
        \begin{enumerate}
            \item If 
            for all $\tm t \in \prfterms$ and for all $A_1, \dots, A_n \in \primeset[2]$,
            $\notproves[\primeset]{\JIKCS}{\just{t}{(A_1 \OR \dots \OR A_n)}}$,
            then there are $\primeset', \primeset[1]'\in\canworlds$ such that $\primeset \sset \primeset' \canrel \primeset[1]'$ and $\notproves[{\primeset[1]'}]{\JIKCS}{\primeset[2]}$.

            \item If for some $\tm m \in\satterms$ and for all $A_1, \dots, A_n \in \primeset[2]$,
            $\proves[\primeset]{\JIKCS}{\sat{m}{(A_1 \AND \dots \AND A_n)}}$,
            then there is $\primeset[1]\in\canworlds$ such that $\primeset \canrel \primeset[1]$ and $\proves[{\primeset[1]}]{\JIKCS}{A}$ for all $A\in\primeset[2]$. 
        \end{enumerate}        
    \end{lemma}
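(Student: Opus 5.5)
For Part 1, the plan is to first enlarge $\primeset$ to a suitable prime set $\primeset'$ with the Prime Lemma~\ref{lem:prime}, and then build $\primeset[1]'$ from $\hashprime[0]'$ with a second application of the same lemma. Recall that $\primeset' \canrel \primeset[1]'$ means $\hashprime[0]' \sset \primeset[1]'$ and $\primeset[1]' \cap \flatprime[0]' = \emptyset$.

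Let $\Sigma \colonequals \set[\just{t}{(A_1 \OR \dots \OR A_n)}]{\tm t \in \prfterms,\ A_i \in \primeset[2]}$. Suppose $\primeset$ proved a finite disjunction of members of $\Sigma$. Then Lemma~\ref{prop:orterms}(1) gives a single proof term $\tm t$ with $\proves[\primeset]{\JIKCS}{\just{t}{(\bigvee_i \bigvee_j A_{ij})}}$, which the hypothesis forbids. Hence $\notproves[\primeset]{\JIKCS}{\Sigma}$, and the Prime Lemma yields a maximal prime $\primeset' \ssetop \primeset$ with $\notproves[\primeset']{\JIKCS}{\Sigma}$.

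The key claim, and in my view the main obstacle, is $\notproves[\hashprime[0]']{\JIKCS}{\primeset[2] \UNI \flatprime[0]'}$. Suppose instead that $\proves[\hashprime[0]']{\JIKCS}{C \OR B}$, where $C$ is a disjunction of elements of $\primeset[2]$ and $B = B_1 \OR \dots \OR B_k$ with each $B_i \in \flatprime[0]'$.
\begin{enumerate}
\item Lifting (Lemma~\ref{lem:lifting}(1)) from the finitely many $\just{t_i}{D_i} \in \primeset'$ used in that derivation gives $\just{s}{(C \OR B)} \in \primeset'$ for some proof term $\tm s$.
\item Fix any satisfier $\tm m$. Then $\sat{m}{B} \notin \primeset'$: otherwise iterating $\jkax[3]$ and using primeness would put some $\sat{m}{B_i}$ in $\primeset'$, contradicting $B_i \in \flatprime[0]'$.
\item By maximality of $\primeset'$, adding $\sat{m}{B}$ must yield $\Sigma$. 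Using Lemma~\ref{prop:orterms} again, $\proves[\primeset', \sat{m}{B}]{\JIKCS}{\just{t'}{C'}}$ for some $\tm t'$ and some disjunction $C'$ over $\primeset[2]$.
\item The Deduction Theorem~\ref{thm:deduction} and $\jkax[4]$ then give $\just{\jupdt{m}{t'}}{(B \IMP C')} \in \primeset'$.
\item The propositional consecution $C \OR B, B \IMP C' \vdash C \OR C'$ lifts to a proof term $\tm u$ with $\just{u}{(C \OR C')} \in \primeset'$. This is a member of $\Sigma$, which is a contradiction.
\end{enumerate}
The delicate points are the disjunction bookkeeping (degenerate cases with $k=0$ or an empty $C$, read as $\BOT$) and getting the direction of $\jkax[4]$ right.

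With the claim in hand, the Prime Lemma gives a prime $\primeset[1]' \ssetop \hashprime[0]'$ with $\notproves[{\primeset[1]'}]{\JIKCS}{\primeset[2] \UNI \flatprime[0]'}$. So $\primeset[1]'$ misses $\flatprime[0]'$, which means every $A \in \primeset[1]'$ has some $\sat{m}{A} \in \primeset'$. Hence $\primeset' \canrel \primeset[1]'$ and $\notproves[{\primeset[1]'}]{\JIKCS}{\primeset[2]}$, as required.

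For Part 2, take the fixed $\tm m$ and aim for a prime $\primeset[1] \ssetop \hashprime[0] \UNI \primeset[2]$ with $\notproves[{\primeset[1]}]{\JIKCS}{\flatprime[0]}$; such a $\primeset[1]$ contains $\primeset[2]$ and satisfies $\primeset \canrel \primeset[1]$. By the Prime Lemma it suffices to show $\notproves[\hashprime[0] \UNI \primeset[2]]{\JIKCS}{\flatprime[0]}$.
\begin{enumerate}
\item Suppose $\proves[{\just{t_1}{D_1}, \dots, \just{t_l}{D_l}, A_1, \dots, A_n}]{\JIKCS}{B_1 \OR \dots \OR B_k}$ with $A_i \in \primeset[2]$ and $B_j \in \flatprime[0]$.
\item Conjoin the $A_i$ into one hypothesis $A_1 \AND \dots \AND A_n$.
\item Lemma~\ref{lem:lifting}(2) with $\sat{m}{(A_1 \AND \dots \AND A_n)}$, which $\primeset$ proves by hypothesis, gives $\sat{n}{(B_1 \OR \dots \OR B_k)} \in \primeset$ for some satisfier $\tm n$.
\item If $k=0$ the disjunction is $\BOT$, and $\jkax[5]$ contradicts the consistency of $\primeset$.
\item Otherwise iterated $\jkax[3]$ and primeness give some $\sat{n}{B_j} \in \primeset$, contradicting $B_j \in \flatprime[0]$.
\end{enumerate}
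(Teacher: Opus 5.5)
Your proof is correct. Part~2 is essentially the paper's argument; you additionally handle the degenerate case $k=0$ with $\jkax[5]$, which the paper leaves implicit.

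Part~1 uses the same ingredients as the paper: a maximal prime $\Gamma'\supseteq\Gamma$ not entailing your $\Sigma$, Lifting, and the Deduction Theorem combined with $\jkax[4]$. You arrange them differently, though.

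\textbf{The paper's route.} It takes an arbitrary prime $\Delta'\supseteq\Gamma'^{\sharp}$ with $\Delta'\not\vdash\Pi$ and checks the satisfier clause of $\canrel$ afterwards. Suppose some $A\in\Delta'$ had no $\sat{m}{A}\in\Gamma'$. Maximality then gives $\Gamma',\sat{m}{A}\vdash\just{t}{(A_1\OR\dots\OR A_n)}$. Hence $A\IMP(A_1\OR\dots\OR A_n)\in\Gamma'^{\sharp}\sset\Delta'$, and primeness of $\Delta'$ contradicts $\Delta'\not\vdash\Pi$.

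\textbf{Your route.} You prove the stronger non-entailment $\Gamma'^{\sharp}\not\vdash\Pi\cup\Gamma'^{\flat}$ before choosing $\Delta'$. Accessibility then holds by construction, exactly as in Part~2.

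\textbf{Trade-off.} The paper's route is leaner. It also yields the slightly stronger fact that \emph{every} prime extension of $\Gamma'^{\sharp}$ avoiding $\Pi$ is a $\canrel$-successor of $\Gamma'$. Yours treats both parts uniformly, at the price of extra bookkeeping:
\begin{itemize}
\item splitting off the $\Gamma'^{\flat}$-part $B$ of the disjunction;
\item using $\jkax[3]$ and $\jkax[5]$ to see $\sat{m}{B}\notin\Gamma'$;
\item a second lifting to combine $\just{s}{(C\OR B)}$ with $\just{\jupdt{m}{t'}}{(B\IMP C')}$.
\end{itemize}

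Your use of Lemma~\ref{prop:orterms} for the initial non-entailment is also fine; the paper appeals to primeness of $\Gamma$ at that point instead.
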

\begin{proof}
    1. Let $\tilde{\primeset[2]} = \set[\just{t}{(A_1 \OR \dots \OR A_n)}]{\text{$\tm{t}\in \prfterms$ and $A_1, \dots, A_n \in \primeset[2]$}}$.
    Since for all $\tm{t}\in\prfterms$ and $A_1, \dots, A_n \in \primeset[2]$ we have $\notproves[\primeset]{\JLCS}{\just{t}{(A_1 \OR \dots \OR A_n)}}$,
    we can establish that $\notproves[\primeset]{\JLCS}{\tilde{\primeset[2]}}$.
    Indeed, the primeness of $\Gamma$ informs us that if it entailed a 
    finite disjunction of elements of $\tilde{\primeset[2]}$
    then it would entail one of its disjuncts.
    Then, by the Prime Lemma~\ref{lem:prime} we get a maximal
    prime set $\primeset'$ with $\primeset \sset \primeset'$ and
    $\notproves[\primeset']{\JLCS}{\tilde{\primeset[2]}}$.
    We can then show that $\notproves[\primeset'^{\sharp}]{\JLCS}{\primeset[2]}$.
    Otherwise, there would be $C_1, \dots, C_m \in \primeset'^{\#}$ with each $\just{t_i}{C_i} \in \primeset'$ and $A_1, \dots, A_n \in \primeset[2]$ such that 
    $\proves[C_1, \dots, C_m]{\JLCS}{A_1 \OR \dots \OR A_n}$,
    which by the Lifting Lemma~\ref{lem:lifting}
    would give $\proves[\primeset']{\JLCS}{\just{t}{(A_1 \OR \dots \OR A_n)}}$
    for some $\tm t \in \prfterms$.
    Then, we can apply the Prime Lemma~\ref{lem:prime} again on $\notproves[\primeset'^{\sharp}]{\JLCS}{\primeset[2]}$ to construct a prime set $\primeset[1]'$ with $\primeset'^{\sharp} \sset \primeset[1]'$ and $\notproves[{\primeset[1]'}]{\JLCS}{\primeset[2]}$.
    
    Finally we need to check that $\primeset' \canrel \primeset[1]'$.
    Since $\primeset'^{\sharp} \sset \primeset[1]'$ by construction we have that if $\just{\tm t}{A} \in \Gamma'$ then $A\in \Delta'$, hence it is enough to show that 
    for each $A \in \Delta'$ there exists $\sat{\tm m}{A} \in \Gamma'$.
    For such $A \in \Delta'$, suppose for a contradiction that for all~$\tm m \in \satterms$, $\sat{m}{A} \notin \Gamma'$.
    Then, for all $A_1, \dots, A_n \in \primeset[2]$ we have $\notproves[\primeset', \sat{m}{A}]{\JLCS}{\just{t}{(A_1 \OR \dots \OR A_n)}}$
    for each $\tm{m}$ and $\tm{t}$,
    otherwise using the Deduction Theorem~\ref{thm:deduction} and $\jkax[4]$ we would have
    $\proves[\primeset']{\JLCS}{\just{\jupdt{m}{t}}{(A \IMP (A_1 \OR \dots \OR A_n))}}$
    and hence $A \IMP (A_1 \OR \dots \OR A_n) \in \Gamma'^{\#} \sset \Delta'$,
    so we could deduce from $A \in \Delta'$ and primeness of $\Delta'$ that 
    $A_i \in \Delta'$ for some $i\le n $.
    However, we have $\Gamma' \subsetneq \Gamma' \UNI \set{\sat{m}{A}}$ and so the Prime Lemma~\ref{lem:prime} would give a strictly larger prime set satisfying the same conditions as $\Gamma'$, contradicting its maximality
    -- a similar use of maximality to e.g.~\cite[Lemma~3 and Theorem~7]{fischer_servi_axiomatizations_1984}.
    
    2. 
    Let us fix $\tm m \in\satterms$ such that for all $A_1, \dots, A_n \in \primeset[2]$,
            $\proves[\primeset]{\JIKCS}{\sat{m}{(A_1 \AND \dots \AND A_n)}}$,
    hence by deductive closure of $\primeset$,
    $\sat{m}{(A_1 \AND \dots \AND A_n)}\in\primeset$.
    We first show that 
    $\notproves[{\primeset[2] \UNI \hashprime}]{\JLCS}{\flatprime}$.
    Otherwise, there would be $A_1, \dots, A_l \in \primeset[2]$, $B_1, \dots, B_n \in \hashprime$ with $\just{t_i}{B_i} \in \primeset$ for some $\tm{t_i}$, and $C_1, \dots, C_p \in \flatprime$ such that 
    $\proves[A_1, \dots, A_l, B_1, \dots, B_n]{\JLCS}{C_1 \OR \dots \OR C_p}$,
    and therefore $\proves[A_1 \AND \ldots \AND A_l, B_1, \dots, B_n]{\JLCS}{C_1 \OR \dots \OR C_p}$.
    By the Lifting Lemma~\ref{lem:lifting} this 
    yields some $\tm{n}\in\satterms$ such that $\proves[\sat{m}{(A_1 \AND \ldots \AND A_l)}, \just{t_1}{B_1}, \dots, \just{t_n}{B_n}]{\JLCS}{\sat{n}{(C_1 \OR \dots \OR C_p)}}$, hence
    $\proves[\primeset]{\JLCS}{\sat{n}{(C_1 \OR \dots \OR C_p)}}$.
    Now, using the $\jkax[3]$ axiom as well as the deductive closure and primeness of $\primeset$, we get
    $\sat{\tm n}{C_j}\in\Gamma$ for some $j\le p$, which contradicts $C_j\in\flatprime$.
    Then, we can apply the Prime Lemma~\ref{lem:prime} on $\notproves[{\primeset[2] \UNI \hashprime}]{\JLCS}{\flatprime}$
    to get a prime set $\primeset[1] \ssetop \primeset[2] \UNI \hashprime$ with
    $\notproves[{\primeset[1]}]{\JLCS}{\flatprime}$.
    Finally, $\primeset \canrel \primeset[1]$ because as $\primeset^{\sharp} \sset \primeset[1]$ by construction we have that if $\just{\tm t}{A} \in \Gamma$ then $A\in \Delta$, and as $\notproves[{\primeset[1]}]{\JLCS}{\flatprime}$ we have that
    for each $A \in \Delta$ there must exist $\sat{\tm m}{A} \in \Gamma$.
\end{proof}

As our next step in showing that $\canmodel$ is an intuitionistic modular model,
we show that $\canmodmodel$ is an intuitionistic birelational model.
We therefore proceed to establish the confluence properties.

    \begin{lemma}[Forwards and backwards confluence in $\canmodel$]\label{lem:canforconf}\label{lem:canbacconf}
        Let $\primeset, \primeset', \primeset[1]\in\canworlds$.
        \begin{enumerate}
            \item[(FC)]
            If $\primeset \sset \primeset'$ and $\primeset \canrel \primeset[1]$,
            then there exists $\primeset[1]'\in\canworlds$ with $\primeset[1] \sset \primeset[1]'$ and $\primeset' \canrel \primeset[1]'$.

            \item[(BC)]
            If $\primeset \canrel \primeset[1] \sset \primeset[1]'$,
            then there exists $\primeset'\in\canworlds$ with $\primeset \sset \primeset' \canrel \primeset[1]'$.
        \end{enumerate}
        
    \end{lemma}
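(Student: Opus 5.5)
For both parts I would reduce to the two items of Lemma~\ref{lem:boxexist}, which were set up for exactly this purpose, checking their hypotheses directly from the definition of $\canrel$. Throughout I use that $\primeset \canrel \primeset[1]$ means: $\hashprime \sset \primeset[1]$, and every $A \in \primeset[1]$ has some $\sat{m}{A} \in \primeset$. I also use that prime sets are deductively closed and that $\sset$ is the intuitionistic order.

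\textbf{(FC).} Suppose $\primeset \sset \primeset'$ and $\primeset \canrel \primeset[1]$. I would apply item~2 of Lemma~\ref{lem:boxexist} to the prime set $\primeset'$ with $\primeset[2] \colonequals \primeset[1]$. The hypothesis needs a single $\tm m$ with $\proves[\primeset']{\JIKCS}{\sat{m}{(A_1 \AND \dots \AND A_n)}}$ for all $A_1,\dots,A_n \in \primeset[1]$.

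Here is the obstacle. Each $A_i \in \primeset[1]$ has its own $\sat{m_i}{A_i} \in \primeset$, so no uniform $\tm m$ is visible. I would first settle on a weaker reading of item~2 that its proof actually uses: for each finite conjunction there is \emph{some} satisfier term. Rereading the proof of item~2, $\tm m$ only appears when lifting a single finite consecution. So the argument goes through if, for every finite $A_1,\dots,A_l \in \primeset[2]$, some $\sat{n}{(A_1\AND\dots\AND A_l)}$ lies in $\primeset'$.

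That weaker hypothesis holds here. Since $\primeset[1]$ is deductively closed, $A_1 \AND \dots \AND A_l \in \primeset[1]$, so by definition of $\canrel$ some $\sat{n}{(A_1\AND\dots\AND A_l)} \in \primeset \sset \primeset'$. The resulting $\primeset[1]'$ then contains $\primeset[1] \cup \primeset'^{\sharp}$ and satisfies $\primeset' \canrel \primeset[1]'$, which is exactly what (FC) asks for.

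\textbf{(BC).} Suppose $\primeset \canrel \primeset[1] \sset \primeset[1]'$. I would apply item~1 of Lemma~\ref{lem:boxexist} with $\primeset[2] \colonequals \langjust \setminus \primeset[1]'$. The conclusion $\notproves[{\primeset[1]''}]{\JIKCS}{\primeset[2]}$ for a prime $\primeset[1]''$ forces $\primeset[1]'' \sset \primeset[1]'$.

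To check the hypothesis, suppose $\proves[\primeset]{\JIKCS}{\just{t}{(A_1\OR\dots\OR A_n)}}$ with each $A_i \notin \primeset[1]'$. By deductive closure this formula is in $\primeset$. Then $A_1\OR\dots\OR A_n \in \hashprime \sset \primeset[1] \sset \primeset[1]'$, and primeness gives some $A_i \in \primeset[1]'$, a contradiction.

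Item~1 therefore gives $\primeset \sset \primeset' \canrel \primeset[1]''$ with $\primeset[1]'' \sset \primeset[1]'$. But we need $\primeset' \canrel \primeset[1]'$ exactly, and this is the second subtle point.

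I would fix it by reopening the construction in item~1 rather than using the lemma as a black box. I would take $\primeset'$ maximal above $\primeset$ not entailing the set $\tilde{\primeset[2]}$ of boxed disjunctions of non-members of $\primeset[1]'$, and then argue directly that $\primeset' \canrel \primeset[1]'$.

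The first half, $\primeset'^{\sharp} \sset \primeset[1]'$, follows as above. For the second half, take $A \in \primeset[1]'$ and suppose no $\sat{m}{A} \in \primeset'$. By maximality, $\primeset' \cup \set{\sat{m}{A}}$ entails some $\just{t}{(A_1\OR\dots\OR A_n)}$ with each $A_i \notin \primeset[1]'$. The Deduction Theorem~\ref{thm:deduction} and $\jkax[4]$ then give $A \IMP (A_1\OR\dots\OR A_n) \in \primeset'^{\sharp} \sset \primeset[1]'$. Since $A \in \primeset[1]'$, deductive closure and primeness put some $A_i$ in $\primeset[1]'$, a contradiction.

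I expect this maximality argument, which mirrors the one in the proof of Lemma~\ref{lem:boxexist}, to be the main step to get right. Its role is to ensure the target world is $\primeset[1]'$ itself rather than a subset of it.
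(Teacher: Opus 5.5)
Your proposal is correct and follows the route the paper only gestures at: a Fischer Servi-style argument assembled from the two constructions in Lemma~\ref{lem:boxexist}. For (FC) you use deductive closure of $\primeset[1]$ under conjunction to obtain one satisfier witness per finite conjunction; you also correctly note that item~2 as literally stated, with a single uniform $\tm m$, does not apply directly, while its proof only needs a per-conjunction witness. For (BC) you rerun the maximality-plus-$\jkax[4]$ argument of item~1 with the given $\primeset[1]'$ as the fixed target, which is exactly the satisfier subtlety the paper leaves implicit.
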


    \begin{proof}
    While our proof is similar to the one for modal birelational models for $\IK$ in~\cite{fischer_servi_axiomatizations_1984},
    there are some subtleties in handling the satisfier terms.
    We expand on this point in the remark below. 
    \end{proof}

    \begin{remark}\label{rem:non-normal}
        The use of proof terms in classical justification logic is known to introduce some non-normal behaviours of the modality~\cite{artemov_why_2011,artemov_justification_2024}.

        Similarly, 
        in the intuitionistic setting,
        the $\IK$ theorem $\DIA(G_1 \AND \dots \AND G_n) \IMP (\DIA G_1 \AND \dots \AND \DIA G_n)$ does not distinguish the information the distinct diamonds are carrying. 
        The justification version of this theorem, 
        $\sat{m}{(A_1 \AND \dots \AND A_n)} \IMP (\sat{\sappl{t_1}{m}}{A_1} \AND \dots \AND \sat{\sappl{t_n}{m}}{A_n})$
        where $\tm m$ is a satisfier variable and each $\tm {t_i}$ is a ground term obtained by applying the Lifting Lemma~\ref{lem:lifting} to $\proves{\IK}{(A_1 \AND \dots \AND A_n) \IMP A_i}$,
        is more constrained due to the specific structure of the $\sappl{}{}$-satisfier terms on the right-hand-side.
    \end{remark}

We finally establish that $\canmodel$ is a modular model 
by proving that it obeys the JYB and SYP principles.
\emph{Justification yields belief} is already achieved from the definition of $\canrel$,
following the lines of~\cite{marti_intutionistic_2016}.
    \begin{lemma}[JYB in $\canmodel$]\label{lem:canJYB}
        Let $\tm t\in\prfterms$ and $\primeset\in \canworlds$.
        Then: 
        
        $A\in\canval{\tm t}{\primeset} \implies$ for all $\primeset',\primeset[1]'\in\canworlds$ such that $\primeset \sset \primeset' \canrel \primeset[1]'$ we have $\satisfy[{\canmodel[M]}]{\primeset[1]'}{A}$.
    \end{lemma}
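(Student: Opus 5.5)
The plan is to unfold the definitions and then combine monotonicity of term interpretations along $\canirel$ with the definition of $\canrel$ and the Truth Lemma~\ref{lem:modtruth}.

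Assume $A\in\canval{\tm t}{\primeset}$. By Definition~\ref{def:canbasmodmod} this means $A\in\inv{t}{\primeset}$, that is, $\just{t}{A}\in\primeset$. Now fix prime sets $\primeset',\primeset[1]'$ with $\primeset\sset\primeset'\canrel\primeset[1]'$.

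First, since $\canirel$ is inclusion, $\primeset\sset\primeset'$ immediately gives $\just{t}{A}\in\primeset'$. This is exactly the monotonicity of $\canbasicval$ that makes $\canmodel[B]$ a basic model (Proposition~\ref{prop:canbasmodmod}). Second, the first clause of the definition of $\canrel$ (Definition~\ref{def:canmodmod}) says that every $\just{t}{B}\in\primeset'$ forces $B\in\primeset[1]'$, so $A\in\primeset[1]'$. Finally, the Truth Lemma~\ref{lem:modtruth} applied to the prime set $\primeset[1]'$ turns membership into truth, giving $\satisfy[{\canmodel[M]}]{\primeset[1]'}{A}$.

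No step here is a real obstacle. The only point needing care is to use the persistence of $\just{t}{A}$ along $\sset$ before applying $\canrel$, since the accessible world $\primeset[1]'$ is reached from $\primeset'$ and not from $\primeset$. This is precisely why the definition of JYB quantifies over the intuitionistic future before the modal step. The second clause of $\canrel$, about satisfiers, is not needed for this argument.
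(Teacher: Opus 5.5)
Your proof is correct and is essentially the paper's own argument. The formula $[\mathtt{t}]A\in\Gamma$ persists to $\Gamma'$ because $\Gamma\subseteq\Gamma'$, the first clause of the definition of $R^c$ then puts $A$ into $\Delta'$, and the Truth Lemma turns this membership into truth at $\Delta'$.
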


    \begin{proof}
        Suppose $A \in \canval{\tm t}{\primeset}$, then $\just{t}{A} \in \primeset$ by Definition~\ref{def:canbasmodmod}. 
        So, for $\primeset' \ssetop \primeset$ we have $\just{t}{A} \in \primeset'$ too.
        Hence, for $\primeset[1]'$ with $\primeset' \canrel \primeset[1]'$, we have $A \in \primeset[1]'$ by Definition~\ref{def:canmodmod}.
        By the Truth Lemma~\ref{lem:modtruth}, we have $\satisfy[{\canmodel[M]}]{\primeset[1]'}{A}$.
    \end{proof}

    The remaining principle of \emph{satisfaction yields possibility} is slightly more involved, as shown below.

    \begin{lemma}[SYP in $\canmodel$]\label{lem:canSYP}
        Let $\tm m\in\satterms$ and $\primeset\in\canworlds$. Then:
        
        $A \in \canval{\tm m}{\primeset} \implies$ there exists $\primeset[1] \in \canworlds$ such that $\primeset \canrel \primeset[1]$ and $\satisfy[{\canmodel[M]}]{\primeset[1]}{A}$.
    \end{lemma}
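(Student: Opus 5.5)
The plan is to obtain SYP as a direct instance of part 2 of Lemma~\ref{lem:extdiaexist}, taking the singleton $\primeset[2] = \set{A}$. Suppose $A \in \canval{\tm m}{\primeset}$. By Definition~\ref{def:canbasmodmod} this means $A \in \inv{m}{\primeset}$, that is, $\sat{m}{A} \in \primeset$, and hence $\proves[\primeset]{\JIKCS}{\sat{m}{A}}$.

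The hypothesis of Lemma~\ref{lem:extdiaexist}.2 quantifies over all finite lists $A_1, \dots, A_n$ drawn from $\primeset[2]$. For $\primeset[2] = \set{A}$, every such list consists only of copies of $A$, so the conjunction is $A \AND \dots \AND A$. I will reduce this case to the given $\sat{m}{A}$ through the following steps:
\begin{enumerate}
\item Apply the Lifting Lemma~\ref{lem:lifting}.2 to $\proves[A]{\JIKCS}{A \AND \dots \AND A}$, which produces some satisfier term $\tm n$ with $\proves[\sat{m}{A}]{\JIKCS}{\sat{n}{(A \AND \dots \AND A)}}$.
\item This yields a witness for each fixed length, but the term $\tm n$ depends on the length, whereas the lemma as literally stated asks for a single $\tm m$ for all lists.
\item Because $\primeset[2]$ is a set, I may read lists without repetition. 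On that reading the only nonempty list is $A$ itself, and $\tm m$ works directly.
\end{enumerate}

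The cleaner alternative is to inline the argument from the proof of part 2 rather than cite it. I first show $\notproves[\set{A}\UNI\hashprime]{\JIKCS}{\flatprime}$. If this failed, we would have $\proves[A, B_1,\dots,B_k]{\JIKCS}{C_1\OR\dots\OR C_p}$ with each $\just{t_i}{B_i}\in\primeset$ and each $C_j\in\flatprime$. Lifting (part 2, with $\sat{m}{A}$ as the satisfier hypothesis) then gives $\sat{n}{(C_1\OR\dots\OR C_p)}\in\primeset$. By $\jkax[3]$ together with primeness and deductive closure, some $\sat{n}{C_j}\in\primeset$ (via $\jkax[5]$ when $p=0$), which contradicts $C_j\in\flatprime$. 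The Prime Lemma~\ref{lem:prime} then supplies a prime $\primeset[1]\ssetop\set{A}\UNI\hashprime$ with $\notproves[{\primeset[1]}]{\JIKCS}{\flatprime}$.

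It remains to check $\primeset\canrel\primeset[1]$ and conclude. The inclusion $\hashprime\sset\primeset[1]$ gives the box clause of Definition~\ref{def:canmodmod}. For the diamond clause, any $B\in\primeset[1]$ cannot lie in $\flatprime$, so some $\sat{m'}{B}\in\primeset$. Finally, $A\in\primeset[1]$, so the Truth Lemma~\ref{lem:modtruth} gives $\satisfy[{\canmodel[M]}]{\primeset[1]}{A}$. The main obstacle is the mismatch between the single-$\tm m$ formulation of Lemma~\ref{lem:extdiaexist}.2 and the singleton case; inlining the argument as above avoids it entirely, since only the one satisfier $\tm m$ for $A$ is ever used.
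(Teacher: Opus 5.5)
Your proposal is correct and takes the paper's route: the paper notes $\sat{m}{A}\in\Gamma$, invokes part~2 of Lemma~\ref{lem:extdiaexist} (implicitly with $\Pi=\{A\}$), and concludes by deductive closure and the Truth Lemma. Your inlined fallback is exactly the paper's proof of that lemma specialised to a singleton, so your care about repeated conjuncts and the empty disjunction (via $\jkax[5]$) just makes explicit what the paper's citation glosses over.
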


    \begin{proof}
    Suppose $A \in \canval{\tm m}{\primeset}$, then $\sat{\tm m}{A} \in \primeset$ by definition.
    So, $\proves[\primeset]{\JIKCS}{\sat{\tm m}{A}}$.
    Therefore, by Lemma~\ref{lem:extdiaexist} there is $\primeset[1]\in\canworlds$ with $\primeset \canrel \primeset[1]$ and $\proves[{\primeset[1]}]{\JIKCS}{A}$.
    By deductive closure $A \in \primeset[1]$ and apply the Truth Lemma~\ref{lem:modtruth} for $\canmodel$ to get $\satisfy[{\canmodel[M]}]{\primeset[1]}{A}$.
    \end{proof}

The next proposition reveals the nature of $\canmodel$ by gathering our results.

\begin{proposition}\label{prop:canmodmod}
    $\canmodel$ is an intuitionistic modular model.    
\end{proposition}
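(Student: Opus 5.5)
The plan is to unfold Definition~\ref{def:modular-models} and check each requirement in turn, using the results already established for $\canmodel$.

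First, $\canmodel$ is a quasi-model. By Proposition~\ref{prop:canbasmodmod}, $\canmodel[B]$ is a basic model, and $\canrel$ is a binary relation on $\canworlds$.

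Second, I show that $\canmodmodel = (\canworlds, \canirel, \canrel, \canbasicval_\Prop)$ is a birelational model. Three things need checking:
\begin{itemize}
\item $\canworlds$ is non-empty. By Corollary~\ref{cor:consistent} we have $\notproves[\emptyset]{\JIKCS}{\BOT}$, so the Prime Lemma~\ref{lem:prime} applied with $\primeset[0]=\emptyset$ and $\primeset[1]=\set{\BOT}$ yields a prime set.
\item $\canirel = \sset$ is a preorder.
\item The valuation is persistent along $\sset$, since $p\in\primeset\sset\primeset'$ gives $p\in\primeset'$.
\end{itemize}
The frame conditions $\FC$ and $\BC$ are exactly Lemma~\ref{lem:canforconf}.

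Third, the two principles. (JYB) says $\canval{\tm t}{\primeset}\sset\jyb{\primeset}$, which is Lemma~\ref{lem:canJYB} verbatim. (SYP) says $\canval{\tm m}{\primeset}\sset\syp{\primeset}$, which is Lemma~\ref{lem:canSYP}. The truth relation used in $\jyb{}$ and $\syp{}$ is truth in the quasi-model $\canmodel$, and those lemmas are stated for that same relation, so no further argument is needed.

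The only point of real substance is the confluence lemma, which the paper takes as already established. Within this proposition the main obstacle is just the bookkeeping: making sure the non-emptiness of $\canworlds$ is justified, and that truth in $\canmodmodel$ versus in $\canmodel$ is never conflated. This is unproblematic because the principles only mention truth in $\canmodel$.
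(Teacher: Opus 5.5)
Your proposal is correct and follows the paper's proof exactly. You obtain quasi-model status from Proposition~\ref{prop:canbasmodmod}, (FC) and (BC) from Lemma~\ref{lem:canforconf}, and (JYB) and (SYP) from Lemmas~\ref{lem:canJYB} and~\ref{lem:canSYP}; your extra checks (non-emptiness via Corollary~\ref{cor:consistent} and the Prime Lemma~\ref{lem:prime}, $\sset$ being a preorder, and persistence of the valuation) are sound and only make explicit what the paper leaves implicit in $\canmodel[B]$ being a basic model.
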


\begin{proof}
    First, $\canmodel$ is a quasi-model as $\canmodel[B]$ is a basic model by Proposition~\ref{prop:canbasmodmod}.
    Then, Lemma~\ref{lem:canforconf}
    shows that forwards and backwards confluence hold in $\canmodel$.
    Finally, Lemma~\ref{lem:canJYB} and Lemma~\ref{lem:canSYP}
    respectively prove that the properties JYB and SYP are satisfied in $\canmodel$.
\end{proof}

With this in hand, completeness follows.

    \begin{proof}[Proof of Theorem~\ref{thm:complmodularmodel}]
        We reason contrapositively and assume $\notproves{\JIKCS}{A}$.
		The Prime Lemma~\ref{lem:prime} informs us of the existence of a prime set $\primeset$ such that $\notproves[\primeset]{\JIKCS}{A}$.
		So $A \notin \primeset$, which gives $\notsatisfy[{\canmodel[M]}]{\primeset}{A}$
        by the Truth Lemma~\ref{lem:modtruth}.
        Finally, Proposition~\ref{prop:canmodmod} ensures that the $\canmodel$ belongs to the adequate class of models.
    \end{proof}

\section{Realisation}\label{sec:real}
We recall that our goal is to prove the following.
\begin{reptheorem}{thm:realisation}
    For schematic $\CS$, there is a realisation map $r$ such that for all $G\in\langmod$:
         $$\proves{\IK}{G} \implies \proves{\JIKCS}{\real{G}}.$$
\end{reptheorem}

This realisation function is constructed methodically in two steps:
\begin{enumerate}
\item Firstly, we reason semantically that a theorem $G$ of $\IK$ can be \emph{pre-realised} into a disjunction of justification formulas which have ``roughly'' the same structure as $G$;
\item Secondly, we \emph{condense} a pre-realisation into a realisation by syntactical reasoning and substitutions to remove the extraneous content.
\end{enumerate}

Our goal is more specifically to define \emph{normal} realisation map which differentiates modalities based on their polarity.
Recall that
a subformula of $G$ is \emph{positive} if its position in the formula tree of $G$ is reached from the root by following the left branch of an implication an even number of times; and call it \emph{negative} otherwise.

\begin{definition}[Normal realisation]
A realisation map 
$\function{\realfunction}{\langmod}{\langjust}$
is \emph{normal} if,
for any $G \in \langmod$,
its negative modal subformulas $\BOX F$ (or $\DIA F$) are realised as
$\just{\tm x}{B}$ (or~$\sat{\tm a}{B}$) with $\tm x\in\prfvars$ (or~$\tm a\in\satvars$, respectively)
occurring in $\real G$ exactly once.
\end{definition}

\begin{remark}
The choice of a normal realisation 
ensures that no independent subformulas are unintensionally identified.
For example, the $\IK$-theorem $(\DIA p \OR \DIA q) \IMP \DIA (p \OR q)$ can be realised as
    $$
        (\sat{a}{p} \OR \sat{a}{q}) \IMP \sat{\suni{\sappl{t_1}{a}}{\sappl{t_2}{a}}}{(p \OR q)}
        \quad\text{or}\quad
        (\sat{a}{p} \OR \sat{b}{q}) \IMP \sat{\suni{\sappl{t_1}{a}}{\sappl{t_2}{b}}}{(p \OR q)}
    $$
    where $\tm{t_1}$ and $\tm{t_2}$ are ground terms
    obtained by applying the Lifting Lemma~\ref{lem:lifting} to $\proves{\JIKCS}{p \IMP (p \OR q)}$ and $\proves{\JIKCS}{q \IMP (p \OR q)}$, respectively.
    But, the first instance cannot be the result of a normal realisation. 
\end{remark}

For this reason, we need to do some bookkeeping on $\BOX$s and $\DIA$s:
depending on their (positive or negative) polarity, they will be replaced by a variable in $\prfvars \UNI \satvars$ or a compound term in $\prftm \UNI \satterms$.
So, we first annotate by natural numbers
$\BOX$s and $\DIA$s in the modal logic formulas.
Each annotation is unique. Formally:

\begin{definition}[Annotated modal formulas]
    An \emph{annotated modal formula} is built up like a standard modal formula but instead of single modalities $\BOX$ and $\DIA$, there is an infinite family of indexed modalities $\BOX[k]$ and $\DIA[k]$, for $k\in\mathbb N$.
    We assume that
    no index occurs more than once in any given annotated formula.
\end{definition}

\begin{example}
    Given the modal formula $\BOX\DIA G \IMP \BOX(\DIA J \AND \DIA \BOX (H \OR I))$, a possible annotated version is 
    $\BOX[1]\DIA[2] G \IMP \BOX[3](\DIA[4] J \AND \DIA[5] \BOX[6] (H \OR I))$.
\end{example}

\noindent
Additionally, we need to track the signs of modal formulas.
\begin{definition}[Signed modal formulas]
    \emph{Signed modal formulas} are annotated modal formulas assigned with a polarity $\tsign$ or $\fsign$.
    We denote this as~$\T{G}$ or~$\F{G}$.
    The set of signed formulas is $\langmodann$.
\end{definition}

We stress that these annotations do not add semantic information: they are bookkeeping devices. 
Where it is clear, we do not distinguish between a modal formula and its annotated version.
\subsection{Pre-realisation}
In this subsection, 
we adapt the methodology of~\cite{fitting_logic_2005,kuznets_logics_2019,pischke_intermediate_2023} 
and reason semantically on both the canonical model $\canmodel$ and its induced birelational intuitionistic model~$\canmodmodel$ (recall Definition~\ref{def:modular-models}) to \emph{pre-realise} modal formulas.

\begin{definition}[Pre-realiser]\label{def:quasi-realiser}
        The \emph{pre-realiser} is a function
        $\function{\qrel{\cdot}}{\langmodann}{\pset{\langjust}}$
        defined as:
        \scalebox{0.95}{
        $\begin{array}{c @{\hspace{0.02cm}}c @{\hspace{0.02cm}} l@{\quad}c @{\hspace{0.02cm}} c @{\hspace{0.02cm}} l}
            \qrel{\T{p}} & \colonequals & \set{p} &
            \qrel{\F{p}} & \colonequals & \set{p}
            \\
            \qrel{\T{G \AND H}} & \colonequals & \set[A \AND B]{A \in \qrel{\T{G}}, B \in \qrel{\T{H}}} &
            \qrel{\F{G \AND H}} & \colonequals & \set[A \AND B]{A \in \qrel{\F{G}}, B \in \qrel{\F{H}}}
            \\
            \qrel{\T{G \OR H}} & \colonequals & \set[A \OR B]{A \in \qrel{\T{G}}, B \in \qrel{\T{H}}} &
            \qrel{\F{G \OR H}} & \colonequals & \set[A \OR B]{A \in \qrel{\F{G}}, B \in \qrel{\F{H}}}
             \\
             \qrel{\T{G \IMP H}} & \colonequals & \set[A \IMP B]{A \in \qrel{\F{G}}, B \in \qrel{\T{H}}} &
             \qrel{\F{G \IMP H}} & \colonequals & 
             \multirow{2}{6cm}{$\set[(A_1 \AND \dots \AND A_n) \IMP (B_1 \OR \dots \OR B_m)]{\phantom{A}\,\,A_i \in \qrel{\T{G}}, B_j \in \qrel{\F{H}}}$}
             \\
              &&& \\
             \qrel{\T{\BOX[n] G}} & \colonequals & \set[\just{\pvar[n]}{A}]{A \in \qrel{\T{G}}} &
             \qrel{\F{\BOX[n] G}} & \colonequals & \multirow{2}{5cm}{$\set[\just{t}{(A_1 \OR \dots \OR A_m)}]{A_i \in \qrel{\F{G}}, 
             \newline\phantom{A}\,\tm t \in \prfterms}$}
             \\
              &&& \\
             \qrel{\T{\DIA[n] G}} & \colonequals & \set[\sat{\svar[n]}{(A_1 \AND \dots \AND A_m)}]{A_i \in \qrel{\T{G}}} &
             \qrel{\F{\DIA[n] G}} & \colonequals & \set[\sat{\tm m}{A}]{A \in \qrel{\F{G}}, \tm m \in \satterms}
        \end{array}$}
    \end{definition}
    
    \begin{definition}
        A modal formula~$G$ is \emph{pre-realisable} in $\JIKCS$ if 
        $\proves{\JIKCS}{\qrel{\F{G}}}$, 
        i.e.,
        there exist some $A_1, \dots, A_n \in \qrel{\F{G}}$ such that $\proves{\JIKCS}{A_1 \OR \dots \OR A_n}$.
    \end{definition}

     The following is the key result of this subsection 
    and corresponds to the first step of the realisation proof described above and provides an initial but imperfect connection between theorems of $\IK$ and theorems of $\JIKCS$. 
    It leverages the semantic machinery developed up to now.

    \begin{theorem}[Pre-realisation]\label{thm:quasreal}
        Let $\CS$ be axiomatically appropriate. For $G\in\langmod$,
        if  $\proves{\IK}{G}$,
        then $G$ is \emph{pre-realisable} in $\JIKCS$.
    \end{theorem}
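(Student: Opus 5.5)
We argue by contraposition. Assume $G$ is not pre-realisable, i.e.\ $\notproves{\JIKCS}{\qrel{\F{G}}}$. By the Prime Lemma~\ref{lem:prime} (applied to $\primeset[0]=\emptyset$ and $\primeset[1]=\qrel{\F{G}}$) there is a prime set $\primeset$ with $\notproves[\primeset]{\JIKCS}{\qrel{\F{G}}}$. We then show that $G$ fails at $\primeset$ in the birelational model $\canmodmodel$. By Proposition~\ref{prop:canmodmod}, $\canmodmodel$ is a genuine $\IK$-model, so soundness of $\IK$ gives $\notproves{\IK}{G}$.

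The core is a simultaneous induction on the annotated subformulas $H$ of $G$, asserting for every prime set $\primeset$:
\begin{itemize}
\item[(F)] if $\notproves[\primeset]{\JIKCS}{\qrel{\F{H}}}$, then $\canmodmodel,\primeset\not\satsymb H$;
\item[(T)] if $\qrel{\T{H}}\sset\primeset$, then $\canmodmodel,\primeset\satsymb H$.
\end{itemize}
Here truth is taken in $\canmodmodel$, the modal model. The atomic cases are immediate from the valuation.

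\textbf{Propositional cases.}
\begin{itemize}
\item (T$\OR$) Use primeness. If some $A\in\qrel{\T{G_1}}$ and some $B\in\qrel{\T{G_2}}$ were both missing from $\primeset$, then $A\OR B\in\primeset$ would contradict primeness.
\item (F$\AND$) If both $\proves[\primeset]{}{\bigvee_i A_i}$ with $A_i\in\qrel{\F{G_1}}$ and $\proves[\primeset]{}{\bigvee_j B_j}$ with $B_j\in\qrel{\F{G_2}}$, then distributivity gives $\proves[\primeset]{}{\bigvee_{i,j}(A_i\AND B_j)}$, a contradiction.
\item (T$\IMP$) Let $\qrel{\T{G_1\IMP G_2}}\sset\primeset\sset\primeset'$ with $\primeset'\satsymb G_1$. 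The contrapositive of (F) for $G_1$ gives $\proves[\primeset']{}{A_1\OR\dots\OR A_n}$ with $A_i\in\qrel{\F{G_1}}$. For each $B\in\qrel{\T{G_2}}$ we have $A_i\IMP B\in\primeset'$, hence $B\in\primeset'$. Now (T) for $G_2$ applies.
\item (F$\IMP$) By the Deduction Theorem~\ref{thm:deduction}, $\primeset\UNI\qrel{\T{G_1}}$ does not prove $\qrel{\F{G_2}}$. Otherwise $\primeset$ would prove a finite disjunction of formulas $(A_1\AND\dots\AND A_n)\IMP(B_1\OR\dots\OR B_m)$, after collecting finitely many premisses and disjuncts. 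The Prime Lemma then yields $\primeset'\ssetop\primeset$ where (T) gives $G_1$ and (F) refutes $G_2$.
\end{itemize}

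\textbf{Modal cases.}
\begin{itemize}
\item (T$\BOX$) All $\just{\pvar[n]}{A}$ with $A\in\qrel{\T{H}}$ lie in $\primeset\sset\primeset'$. So every $\primeset[1]'$ with $\primeset'\canrel\primeset[1]'$ contains $\qrel{\T{H}}$, and (T) applies.
\item (F$\BOX$) The hypothesis says exactly that $\primeset$ proves no $\just{t}{(A_1\OR\dots\OR A_m)}$ with $A_i\in\qrel{\F{H}}$. Lemma~\ref{lem:boxexist}.1 with $\primeset[2]=\qrel{\F{H}}$ yields $\primeset\sset\primeset'\canrel\primeset[1]'$ with $\notproves[{\primeset[1]'}]{}{\qrel{\F{H}}}$, and (F) refutes $H$ there.
\item (T$\DIA$) $\primeset$ contains $\sat{\svar[n]}{(A_1\AND\dots\AND A_m)}$ for all finitely many $A_i\in\qrel{\T{H}}$. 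This is precisely the hypothesis of Lemma~\ref{lem:extdiaexist}.2 with $\primeset[2]=\qrel{\T{H}}$. It gives $\primeset\canrel\primeset[1]$ with $\qrel{\T{H}}\sset\primeset[1]$ by deductive closure, and (T) applies.
\item (F$\DIA$) Let $\primeset\canrel\primeset[1]$ and suppose $\proves[{\primeset[1]}]{}{B_1\OR\dots\OR B_k}$ with $B_i\in\qrel{\F{H}}$. Then this disjunction lies in $\primeset[1]$. By the definition of $\canrel$ we get $\sat{m}{(B_1\OR\dots\OR B_k)}\in\primeset$. Then $\jkax[3]$ and primeness give $\sat{m}{B_i}\in\primeset$ for some $i$, contradicting the hypothesis.
\end{itemize}

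The main obstacle is choosing the right invariant for the $\tsign$-side. A pointwise version such as ``some $A\in\qrel{\T{H}}$ lies in $\primeset$'' breaks in the (T$\IMP$) case. There the (F) hypothesis only yields a \emph{disjunction} of elements of $\qrel{\F{G_1}}$, not the specific antecedent needed. Taking the \emph{whole} set $\qrel{\T{H}}\sset\primeset$ resolves this. It is also exactly what makes the (T$\DIA$) case match Lemma~\ref{lem:extdiaexist}.2 and the (F$\IMP$) case match the conjunction/disjunction shape of $\qrel{\F{G_1\IMP G_2}}$. The remaining concern is checking that the confluence properties from Lemma~\ref{lem:canforconf} are genuinely available, so that $\canmodmodel$ is an $\IK$-model and the final appeal to soundness of $\IK$ is legitimate.
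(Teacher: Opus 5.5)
Your proposal is correct and follows the paper's route. You use contraposition through the Prime Lemma~\ref{lem:prime}, then the simultaneous $\tsign$/$\fsign$ induction of Theorem~\ref{thm:qrelvalid} on the canonical model (with Lemma~\ref{lem:boxexist} for the (F$\BOX$) and (T$\DIA$) cases and the definition of $\canrel$ for the other modal cases), and finally soundness of $\IK$ applied to $\canmodmodel$ via Proposition~\ref{prop:canmodmod}. Your syntactic invariants ($\qrel{\T{H}}\sset\primeset$, resp.\ $\notproves[\primeset]{\JIKCS}{\qrel{\F{H}}}$) are equivalent to the paper's truth-based hypotheses in $\canmodel$ by the Truth Lemma~\ref{lem:modtruth} together with deductive closure and primeness, so the two arguments coincide up to phrasing.
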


This is shown through
the semantic connection between a pre-realiser and its modal formula which is understood in the following:
    \begin{theorem}\label{thm:qrelvalid}
        For every annotated modal formula~$G$.
        For a prime set $\primeset$.
        \begin{enumerate}
            \item If for all $A \in \qrel{\T{G}}$ we have
            $\satisfy[\canmodel]{\primeset}{A}$, then $\satisfy[\canmodmodel]{\primeset}{G}$.
            \item If for all $A \in \qrel{\F{G}}$ we have
            $\notsatisfy[\canmodel]{\primeset}{A}$, then $\notsatisfy[\canmodmodel]{\primeset}{G}$.
        \end{enumerate}
    \end{theorem}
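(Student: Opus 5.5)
We prove both statements simultaneously by induction on the annotated formula $G$. Throughout, the Truth Lemma~\ref{lem:modtruth} lets us replace $\satisfy[\canmodel]{\primeset}{A}$ by $A \in \primeset$. The induction hypothesis is used in both directions: statement~1 for $\T{G}$, and statement~2 for $\F{G}$ in contrapositive form (if $\satisfy[\canmodmodel]{\primeset}{G}$ then some $A\in\qrel{\F{G}}$ lies in $\primeset$). By a trivial side induction, every set $\qrel{\cdot}$ is non-empty; this is used in the conjunction and disjunction cases.

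\textbf{Propositional cases.} Atoms are immediate from the valuation.
For $\T{G\AND H}$, every $A\AND B$ with $A\in\qrel{\T{G}}$, $B\in\qrel{\T{H}}$ is in $\primeset$, so all such $A$ and $B$ are in $\primeset$, and the induction hypothesis applies to both conjuncts.
For $\F{G\OR H}$ the argument is dual.
For $\T{G\OR H}$, suppose $A_0\in\qrel{\T G}$ and $B_0\in\qrel{\T H}$ were both missing from $\primeset$. Then $A_0\OR B_0\in\primeset$ contradicts primeness, so one side is entirely contained in $\primeset$.
For $\F{G\AND H}$, if some $A_0\in\qrel{\F G}$ and some $B_0\in\qrel{\F H}$ were in $\primeset$, then so would be $A_0\AND B_0$; hence all of $\qrel{\F G}$ or all of $\qrel{\F H}$ is missing.
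For $\T{G\IMP H}$, take $\primeset'\ssetop\primeset$ with $\satisfy[\canmodmodel]{\primeset'}{G}$. The contrapositive of statement~2 gives some $A_0\in\qrel{\F G}\cap\primeset'$. Since $A_0\IMP B\in\primeset\sset\primeset'$ for every $B\in\qrel{\T H}$, deductive closure puts all of $\qrel{\T H}$ in $\primeset'$, and statement~1 yields $H$ at $\primeset'$.
For $\F{G\IMP H}$, we claim $\notproves[\primeset\UNI\qrel{\T G}]{\JIKCS}{\qrel{\F H}}$. Otherwise, by the Deduction Theorem~\ref{thm:deduction} and deductive closure, some $(A_1\AND\dots\AND A_n)\IMP(B_1\OR\dots\OR B_m)$ would lie in $\primeset$, contrary to the hypothesis. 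The Prime Lemma~\ref{lem:prime} then gives $\primeset'\ssetop\primeset$ containing $\qrel{\T G}$ and containing no element of $\qrel{\F H}$. By the induction hypothesis $G$ holds and $H$ fails at $\primeset'$, so $G\IMP H$ fails at $\primeset$.

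\textbf{Modal cases.}
For $\T{\BOX[n]G}$, let $\primeset\sset\primeset'\canrel\primeset[1]'$. Each $\just{\pvar[n]}{A}$ with $A\in\qrel{\T G}$ lies in $\primeset'$, so $A\in\primeset[1]'$ by definition of $\canrel$, and statement~1 applies.
For $\F{\BOX[n]G}$, the hypothesis says $\primeset$ contains, and hence by deductive closure proves, no $\just{t}{(A_1\OR\dots\OR A_m)}$ with $A_i\in\qrel{\F G}$. This is exactly the premise of Lemma~\ref{lem:boxexist}(1) with $\primeset[2]=\qrel{\F G}$, which provides $\primeset\sset\primeset'\canrel\primeset[1]'$ with $\primeset[1]'$ containing no element of $\qrel{\F G}$. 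Statement~2 then refutes $G$ at $\primeset[1]'$, so $\BOX G$ fails at $\primeset$.
For $\T{\DIA[n]G}$, all $\sat{\svar[n]}{(A_1\AND\dots\AND A_m)}$ with $A_i\in\qrel{\T G}$ are in $\primeset$. Lemma~\ref{lem:extdiaexist}(2) with $\tm m=\svar[n]$ and $\primeset[2]=\qrel{\T G}$ gives $\primeset\canrel\primeset[1]$ with $\qrel{\T G}\sset\primeset[1]$, and statement~1 finishes.
For $\F{\DIA[n]G}$, let $\primeset\canrel\primeset[1]$. By the definition of $\canrel$, every $A\in\primeset[1]$ has some $\sat{m}{A}\in\primeset$. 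Since no such formula exists for $A\in\qrel{\F G}$, $\primeset[1]$ contains no element of $\qrel{\F G}$. Statement~2 refutes $G$ at every $R$-successor, so $\DIA G$ fails.

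The main obstacles are the $\F{\IMP}$ and $\F{\BOX}$ cases. In both we must manufacture new worlds, and the shape of the pre-realiser (finite conjunctions on the left, disjunctions under arbitrary terms $\tm t$) must match exactly what the Prime Lemma~\ref{lem:prime} and Lemma~\ref{lem:boxexist}(1) require. The substantive work for $\F{\BOX}$ has already been carried out in that lemma, so here we only need to check that its hypothesis is met.
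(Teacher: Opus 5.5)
Your proof is correct and follows the same route as the paper: a simultaneous induction on $G$ in which the $\T{\BOX}$ case unfolds the definition of $\canrel$ (exactly the content of Lemma~\ref{lem:canJYB}), the $\F{\BOX}$ and $\T{\DIA}$ cases invoke Lemma~\ref{lem:boxexist}(1) and Lemma~\ref{lem:extdiaexist}(2), and the $\F{\DIA}$ case reads off the definition of $\canrel$. The only difference is that you write out the propositional cases, including non-emptiness of the pre-realiser sets and the Deduction Theorem plus Prime Lemma argument for $\F{G\IMP H}$, which the paper leaves to Pischke.
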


    \begin{proof}
        We prove both simultaneously by induction on~$G$.
        The cases when $G = p, H \AND I, H \OR I, H \IMP I$ follow the argument in~\cite{pischke_intermediate_2023}.
        We extend the proof with an adapted case for $\BOX$ and a new one for $\DIA$.

        If $G = \BOX[n] H$ then $A\in\qrel{\T{G}}$ is of the form $\just{x_n}{B}$ for  $B\in\qrel{\T{H}}$ and $A\in\qrel{\F{G}}$ is of the form $\just{t}{(B_1 \OR \dots \OR B_m)}$ for $B_1,\dots,B_m\in\qrel{\F{H}}$ and $\tm t\in\prfterms$.

        \begin{enumerate}
        	\item 
        	Suppose for all $A \in \qrel{\T{G}}$, 
            $\satisfy[\canmodel]{\primeset}{A}$, that is, 
            for all $B \in \qrel{\T{H}}$, $\satisfy[\canmodel]{\primeset}{\just{\pvar[n]}{B}}$.
        	So, $B\in\canval{\pvar[n]}{\primeset}$ and by Lemma~\ref{lem:canJYB}, for all $\primeset', \primeset[1]'$ with $\primeset \sset \primeset' \canrel \primeset[1]'$, $\satisfy[\canmodel]{\primeset[1]'}{B}$.
        	By (IH) on 1., $\satisfy[\canmodmodel]{\primeset[1]'}{H}$ and therefore $\satisfy[\canmodmodel]{\primeset}{\BOX[n] H}$.
        	
        	\item 
            Suppose for all $A \in \qrel{\F{G}}$, 
            $\notsatisfy[\canmodel]{\primeset}{A}$, that is, 
            for all $B_1, \dots, B_m \in \qrel{\F{H}}$ and $\tm t \in \prfterms$, $\notsatisfy[\canmodel]{\primeset}{\just{t}{(B_1 \OR \dots \OR B_m)}}$.
            So by the Truth Lemma~\ref{lem:modtruth}, $\just{t}{(B_1 \OR \dots \OR B_m)}\notin\primeset$, hence $\notproves[\primeset]{\JLCS}{\just{t}{(B_1 \OR \dots \OR B_m)}}$.
            By Lemma~\ref{lem:boxexist}, there exist $\primeset', \primeset[1]'\in\canworlds$ such that $\primeset \sset \primeset' \canrel \primeset[1]'$ and $\notproves[{\primeset[1]'}]{\JLCS}{\qrel{\F{H}}}$.
            So for all $B \in \qrel{\F{H}}$, $B\notin\primeset[1]'$ and by the Truth Lemma~\ref{lem:modtruth} again, $\notsatisfy[\canmodel]{\primeset[1]'}{B}$ .
            By (IH) on 2., $\notsatisfy[\canmodmodel]{\primeset[1]'}{H}$, hence $\notsatisfy[\canmodmodel]{\primeset}{\BOX[n] H}$.
        \end{enumerate}

        If $G = \DIA[n] H$ then $A\in\qrel{\T{G}}$ is of the form $\sat{\svar[n]}{(B_1 \AND \dots \AND B_m)}$ for $B_1,\dots,B_m\in\qrel{\T{H}}$, and $A\in\qrel{\F{G}}$ is of the form $\sat{m}{B}$ for  $B\in\qrel{\F{H}}$ and $\tm m\in\satterms$.
    
        \begin{enumerate}
            \item Suppose for all $A \in \qrel{\T{G}}$, 
            $\satisfy[\canmodel]{\primeset}{A}$, that is, 
            for all $B_1, \dots, B_m \in \qrel{\T{H}}$, $\satisfy[\canmodel]{\primeset}{\sat{\svar[n]}{(B_1 \AND \dots \AND B_n)}}$.
            By Lemma~\ref{lem:extdiaexist}, there exists $\primeset[1]$ with $\primeset \canrel \primeset[1]$ and $\proves[{\primeset[1]}]{\JLCS}{B}$ for all $B \in \qrel{\T{H}}$.
            By deductive closure $B \in \primeset[1]$ and by the Truth Lemma~\ref{lem:modtruth} we get $\satisfy[{\canmodel[M]}]{\primeset[1]}{B}$ for all $B \in \qrel{\T{H}}$.
            Hence, by (IH) on 1., $\satisfy[\canmodmodel]{\primeset[1]}{H}$ and so $\satisfy[\canmodmodel]{\primeset}{\DIA[n] H}$.

            \item Suppose for all $A \in \qrel{\F{G}}$, 
            $\notsatisfy[\canmodel]{\primeset}{A}$, that is, 
            for all $B \in \qrel{\F{H}}$ and $\tm m \in \satterms$, $\notsatisfy[\canmodel]{\primeset}{\sat{m}{B}}$,
            and by the Truth Lemma~\ref{lem:modtruth}, $\sat{\tm m}{B} \notin \primeset$.
            By definition of $\canrel$ (Definition~\ref{def:canmodmod}), 
            it means, for all $\primeset[1]\in\canworlds$ with $\primeset \canrel \primeset[1]$, that $B \notin \primeset[1]$, 
            and by the Truth Lemma~\ref{lem:modtruth} again, $\notsatisfy[\canmodel]{\primeset[1]}{B}$ for all $B \in \qrel{\F{H}}$.
            Hence, by (IH) on 2., $\notsatisfy[\canmodmodel]{\primeset[1]}{H}$ and so $\notsatisfy[\canmodmodel]{\primeset}{\DIA[n] H}$.
            \qedhere
        \end{enumerate}
    \end{proof}

Now we can prove the Pre-realisation Theorem~\ref{thm:quasreal}.
    \begin{proof}[Proof of Theorem~\ref{thm:quasreal}]
        Suppose $G$ is not pre-realisable.
        Then $\notproves{\JIKCS}{\qrel{\F{G}}}$.
        By the Prime Lemma~\ref{lem:prime}, there exists a prime set $\primeset$ such that $\notproves[\primeset]{\JIKCS}{\qrel{\F{G}}}$.
        Hence, for all $A \in \qrel{\F{G}}$, $A \notin\primeset$, so
        by the Truth Lemma~\ref{lem:modtruth}, $\notsatisfy[\canmodel]{\primeset}{A}$.
        By Theorem~\ref{thm:qrelvalid}, $\notsatisfy[\canmodmodel]{\primeset}{G}$, so $G$ is not a theorem of~$\IK$.
    \end{proof}
\subsection{From Pre-realisation to Realisation}
This subsection performs the second step of the realisation proof sketched above. 
From any finite subset of pre-realisers for a given modal formula $G$, through a process called \emph{condensing}~\cite{fitting_modal_2016} which syntactically 
removes the extra disjunctions and conjunctions, 
we construct a justification formula that \emph{potentially} realises $G$, i.e.~respects its syntax but not necessarily its semantics.
Eventually, the realisation map, will choose a valid justification formula from the set of potential realisers, guided by the Pre-realisation theorem~\ref{thm:quasreal} from the previous section.

    \begin{definition}[Potential realiser]
        The \emph{potential realiser} is a map
        $\function{\prel{\cdot}}{\langmodann}{\pset{\langjust}}$
        defined as:

        $\begin{array}{c@{\colonequals}l@{\quad}c@{\colonequals}l}
            \prel{\T{p}} &  \set{p} &
            \prel{\F{p}} & \set{p}
            \\
            \prel{\T{G \AND H}} & \set[A \AND B]{A \in \prel{\T{G}}, B \in \prel{\T{H}}} &
            \prel{\F{G \AND H}} & \set[A \AND B]{A \in \prel{\F{G}}, B \in \prel{\F{H}}}
            \\
            \prel{\T{G \OR H}} & \set[A \OR B]{A \in \prel{\T{G}}, B \in \prel{\T{H}}} &
            \prel{\F{G \OR H}} & \set[A \OR B]{A \in \prel{\F{G}}, B \in \prel{\F{H}}}
             \\
             \prel{\T{G \IMP H}} & \set[A \IMP B]{A \in \prel{\F{G}}, B \in \prel{\T{H}}} &
             \prel{\F{G \IMP H}} & \set[A \IMP B]{A \in \prel{\F{G}}, B \in \prel{\T{H}}}
             \\
             \prel{\T{\BOX[n] G}} & \set[\just{\pvar[n]}{A}]{A \in \prel{\T{G}}} &
             \prel{\F{\BOX[n] G}} & \set[\just{t}{A}]{A \in \prel{\F{G}}, \tm t \in \prfterms}
             \\
             \prel{\T{\DIA[n] G}} & \set[\sat{\svar[n]}{A}]{A \in \prel{\T{G}}} &
             \prel{\F{\DIA[n] G}} & \set[\sat{\tm m}{A}]{A \in \prel{\F{G}}, \tm m \in \satterms}
        \end{array}$
    \end{definition}

The idea is that a realisation function on a modal formula $G$ would choose a suitable justification formula $A \in \prel{\F{G}}$.
The choice of a formula from $\prel{\F{G}}$, rather than $\prel{\T{G}}$, ensures that for any negative subformula of $A \in \prel{\F{G}}$ of the shape $\just{t}{B}$ (or~$\sat{m}{B}$), we have $\tm t \in \prfvars$ (or $\tm m \in \satvars$ respectively), which is required to reach a normal realisation.
The difference in the definition of $\prel{\F{G \IMP H}}$, $\prel{F{\BOX[n] G}}$ and $\prel{\T{\DIA[n] G}}$ compared to the pre-realiser ensures the justification formula does not contain additional conjunctions and disjunctions and exactly matches the structure of the modal formula.
More formally:
\begin{proposition}\label{prop:prelforget}
    Let $G$ be an annotated modal formula.
    Then for every $A \in \prel{\T{G}} \UNI \prel{\F{G}}$, $\forget{A} = G$,
    where $\function{\forget{(\cdot)}}{\langjust}{\langmod}$ is the forgetful projection from Definition~\ref{def:forgetful}.
\end{proposition}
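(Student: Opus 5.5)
The plan is a structural induction on the annotated modal formula $G$, proving the claim simultaneously for both signs. That is, we show that every $A \in \prel{\T{G}}$ and every $A \in \prel{\F{G}}$ satisfies $\forget{A} = G$. Here $G$ on the right is read with its annotations erased, in line with the convention that annotations are mere bookkeeping. The two signs must be treated together because the clauses for $\T{G \IMP H}$ and $\F{G \IMP H}$ draw their antecedent from $\prel{\F{G}}$, so the induction hypothesis is needed for both polarities of the immediate subformulas.

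\emph{Base case.} If $G = p$, then $\prel{\T{p}} = \prel{\F{p}} = \set{p}$, and $\forget{p} = p$ by Definition~\ref{def:forgetful}.

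\emph{Propositional connectives.} For $G = H \ast I$ with $\ast \in \set{\AND, \OR, \IMP}$ and either sign, every element of $\prel{\cdot}$ has the form $B \ast C$. Here $B$ lies in $\prel{\T{H}}$ or $\prel{\F{H}}$, and $C$ lies in $\prel{\T{I}}$ or $\prel{\F{I}}$. The induction hypothesis gives $\forget{B} = H$ and $\forget{C} = I$. Hence $\forget{(B \ast C)} = \forget{B} \ast \forget{C} = H \ast I$. Unlike the pre-realiser, the potential realiser introduces no extra conjunctions or disjunctions in the $\F{G \IMP H}$ case, so nothing further needs checking here.

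\emph{Modal cases.} For $G = \BOX[n] H$, the elements are $\just{\pvar[n]}{B}$ (sign $\tsign$) or $\just{t}{B}$ (sign $\fsign$) with $B$ in the corresponding set for $H$. In either case $\forget{(\just{t}{B})} = \BOX \forget{B} = \BOX H$, regardless of which term is used. The $\DIA[n] H$ case is identical, using $\forget{(\sat{m}{B})} = \DIA \forget{B}$.

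There is no real obstacle; the argument is routine. The only point that needs care is to state the induction for both signs at once, and to note that the index $n$ on the modality is forgotten, since the forgetful projection ignores terms and so produces an unannotated modality.
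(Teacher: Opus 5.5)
Your proof is correct and is the same routine structural induction on $G$ that the paper invokes; proving both signs simultaneously is exactly what the $\IMP$ clauses require. The only case that neither you nor the paper's definition of the potential realiser covers is $\BOT$, and it is immediate once you adopt the evident clause assigning $\set{\BOT}$ to both signs.
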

\begin{proof}
    This is a routine proof by induction on $G$.
\end{proof}

The pre-realiser from Definition~\ref{def:quasi-realiser} is converted into a potential realiser using substitutions and syntactic reasoning -- this process is called \emph{condensing} by Fitting~\cite[p.~640]{fitting_modal_2016}.

    To construct adequate substitutions we use the following auxiliary definitions.

\begin{definition}[No new variable condition]
        Let $\sub$ be a substitution.
		The \emph{domain} of~$\sub$ is the set
        $\dom{\sub} \colonequals 
        \set[\pvar \in \prfvars]{\subst{\pvar} \neq \pvar}
        \UNI
        \set[\svar \in \satvars]{\subst{\svar} \neq \svar}$.

        A substitution $\sub$ \emph{meets the no new variable condition} if for each variable $\pvar\in \dom{\sub}$ (or $\svar\in \dom{\sub}$), $\subst{\pvar}$ (or $\subst{\svar}$) contains no other variables than $\pvar$ (or $\svar$).
\end{definition}

\begin{definition}[Lives on]
        Let $G$ be an annotated modal formula
        and $\sub$ be a substitution.
        We say that 
        $\sub$ \emph{lives on} $G$ if for every $\pvar[k] \in \dom{\sub}$, $\BOX[k]$ occurs in $G$ and for every $\svar[k] \in \dom{\sub}$, $\DIA[k]$ occurs in $G$.
    \end{definition}

With these definitions in hand, we tackle the condensing theorem.

\begin{theorem}[Condensing Theorem]\label{thm:condensing}
        Let $G\in\langmodann$.
        \begin{enumerate}
            \item For each collection $A_1, \dots, A_n \in \qrel{\T{G}}$, there exist $A \in \prel{\T{G}}$ and $\sub$ that lives on $G$ and meets the no new variable condition such that $A \proves{\JIKCS}{\subst{(A_1 \AND \dots \AND A_n)}}$.

            \item For each collection $A_1, \dots, A_n \in \qrel{\F{G}}$, there exist $A \in \prel{\F{G}}$ and $\sub$ that lives on $G$ and meets the no new variable condition such that $\subst{(A_1 \OR \dots \OR A_n)} \proves{\JIKCS}{A}$.
        \end{enumerate}
    \end{theorem}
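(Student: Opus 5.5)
We argue by simultaneous induction on the structure of the annotated formula $G$, proving 1.\ and 2.\ together, since the implication case of each needs the other. Throughout we use that the Substitution Lemma~\ref{lem:substitution} is available ($\CS$ schematic). We also use that substitutions living on disjoint sets of indices can be composed freely: because every index occurs at most once in $G$, the substitutions produced for distinct subformulas have disjoint domains. Their union therefore again lives on $G$ and meets the no new variable condition.

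The base case $G=p$ is trivial: take $A=p$ and the identity substitution, since $p\AND\dots\AND p$ and $p\OR\dots\OR p$ are equivalent to $p$.

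For $\AND$ and $\OR$ we split each $A_i$ into its two components and apply the induction hypothesis to each side. This gives $\sub_1,\sub_2$ and representatives $B\in\prel{\T{H}}$, $C\in\prel{\T{I}}$. We take $\sub=\sub_1\sub_2$, which is disjoint, and then:
\begin{itemize}
\item for $\T{\AND}$, we use $B\AND C\vdash \bigwedge_i \subst{(B_i\AND C_i)}$;
\item for $\F{\OR}$, we use $\bigvee_i\subst{(B_i\OR C_i)}\vdash B\OR C$;
\item for $\T{\OR}$, we use $B\OR C\vdash\bigwedge_i\subst{(B_i\OR C_i)}$, because $B\vdash\bigwedge\subst B_i$ and $C\vdash\bigwedge\subst C_i$;
\item $\F{\AND}$ is handled symmetrically.
\end{itemize}
A point to check is that applying $\sub_2$ after the induction hypothesis for $H$ does not destroy that consequence. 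This holds because $\sub_2$ lives on $I$, and the variables of $H$-pre-realisers are disjoint from it; where $\sub_2$ does touch a formula, we invoke the Substitution Lemma.

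For implication, $\T{G\IMP H}$ has $A_i=B_i\IMP C_i$ with $B_i\in\qrel{\F{G}}$ and $C_i\in\qrel{\T{H}}$. The induction hypothesis (2) on $G$ gives $B$ with $\bigvee\subst B_i\vdash B$, and (1) on $H$ gives $C$ with $C\vdash\bigwedge\subst C_i$. Then $B\IMP C\vdash \subst{(B_i\IMP C_i)}$ for each $i$. For $\F{G\IMP H}$ we have $A_i=\bigwedge_k B_{ik}\IMP\bigvee_l C_{il}$. We apply (1) on $G$ to the whole finite collection of all $B_{ik}$, getting $B\vdash\bigwedge\subst B_{ik}$, and (2) on $H$ to all $C_{il}$, getting $\bigvee\subst C_{il}\vdash C$. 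Then each $\subst{A_i}\vdash B\IMP C$, so their disjunction does too.

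The modal cases are where real work and the substitutions arise, and I expect the $\T{\BOX}$ and $\T{\DIA}$ cases to be the main obstacle.

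\textbf{$\F{\BOX[n]}$.} Here $A_i=\just{t_i}{(\bigvee_j B_{ij})}$. Apply (2) to the collection of all $B_{ij}$ to get $B$ and $\sub$ with $\bigvee\subst B_{ij}\vdash B$. By the Lifting Lemma~\ref{lem:lifting}, there are ground terms $s_i$ with $\vdash \just{s_i}{(\bigvee_j\subst B_{ij}\IMP B)}$. Then $\jkax[1]$ and sum give $\subst{A_i}\vdash\just{t}{B}$ for the single term $t=\sum_i \jappl{s_i}{\subst t_i}$.

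\textbf{$\F{\DIA[n]}$.} This is analogous, using $\jkax[2]$ and $\suni{}{}$.

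\textbf{$\T{\BOX[n]}$.} Here all $A_i=\just{\pvar[n]}{B_i}$ share the variable $\pvar[n]$. First, by induction we get $B$ with $B\vdash\bigwedge\subst B_i$. Lifting then yields ground $s_i$ with $\just{\pvar[n]}{B}\vdash\just{\jappl{s_i}{\pvar[n]}}{\subst B_i}$. We now need the single term $\pvar[n]$ on the right, so we extend the substitution with $\pvar[n]\mapsto\jappl{s}{\pvar[n]}$, where $s=\sum_i s_i$ (using $+$ for each $i$). This respects the no new variable condition and lives on $G$.

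The obstacle is that this new substitution also acts on $\subst B_i$ if $\pvar[n]$ occurs there. It cannot, since index $n$ is unique in $G$. But composing it with $\sub$ requires care that $s_i$, computed after $\sub$, is still correct. We handle this with the Substitution Lemma and by fixing the order: first $\sub$, then $[\pvar[n]\mapsto \jappl{s}{\pvar[n]}]$.

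\textbf{$\T{\DIA[n]}$.} Here $A_i=\sat{\svar[n]}{\bigwedge_j B_{ij}}$. Take $B\vdash\bigwedge_{i,j}\subst B_{ij}$ and lift to ground $s_i$ with $\sat{\svar[n]}{B}\vdash\sat{\sappl{s_i}{\svar[n]}}{\bigwedge_j\subst B_{ij}}$ via $\jkax[2]$. Since $\sappl{s_i}{\svar[n]}$ differ for different $i$, we use one proof term $s=\sum_i s_i$ with $\just{s}{(B\IMP\bigwedge_j\subst B_{ij})}$ for every $i$, obtained via $+$ from the lifted ones, and substitute $\svar[n]\mapsto\sappl{s}{\svar[n]}$. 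Then $\sat{\svar[n]}{B}\vdash \sat{\sappl{s}{\svar[n]}}{\bigwedge_j\subst B_{ij}}$ for all $i$, which is exactly the required conjunction after substitution.
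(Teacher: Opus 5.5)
Your strategy is the paper's. You use simultaneous induction on $G$, and in the $\T{\DIA[n]}$ case you lift each implication $B\IMP\bigwedge_j\subst{B_{ij}}$ to a ground term, sum these with $+$ into one ground $\tm s$, and substitute $\svar[n]\mapsto\sappl{s}{\svar[n]}$. Your $\F{\DIA[n]}$ case lifts each disjunct, applies $\jkax[2]$ and joins with $\sqcup$; this is a harmless variant of the paper's use of Lemma~\ref{prop:orterms} followed by the second part of the Lifting Lemma.

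However, your $\T{\BOX[n]}$ and $\T{\DIA[n]}$ cases, and the merging of $\sub_1,\sub_2$ in the binary cases, rest on a false claim. Uniqueness of the index $n$ in $G$ does \emph{not} prevent $\pvar[n]$ or $\svar[n]$ from occurring in the (pre-)realisers of the immediate subformula. The reason is that $\qrel{\cdot}$ and $\prel{\cdot}$ place arbitrary terms $\tm t\in\prfterms$, $\tm m\in\satterms$ at $\F{}$-signed modalities. For example, $\sat{\svar[1]}{(\sat{\svar[1]}{p}\IMP q)}\in\qrel{\T{\DIA[1](\DIA[2]p\IMP q)}}$ and $\just{\pvar[1]}{(\just{\pvar[1]}{p}\IMP q)}\in\qrel{\T{\BOX[1](\BOX[2]p\IMP q)}}$. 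Let $\sub'$ send $\svar[n]$ to $\sappl{s}{\svar[n]}$. What you actually derive is $\sat{\svar[n]}{B}\proves{\JIKCS}{\sat{\sappl{s}{\svar[n]}}{\bigwedge_j\subst{B_{ij}}}}$. The target, however, is $\sat{\sappl{s}{\svar[n]}}{\bigwedge_j\subst[\sub']{\subst{B_{ij}}}}$, where the inner occurrences of $\svar[n]$ are replaced as well. For the same reason, $\just{\pvar[n]}{B}$ and $B\AND C$ need not be correct realisers.

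The repair is exactly what the paper does. Because $\tm s$ is ground, the Substitution Lemma turns $\proves{\JIKCS}{\just{s}{(B\IMP\bigwedge_j\subst{B_{ij}})}}$ into $\proves{\JIKCS}{\just{s}{(\subst[\sub']{B}\IMP\bigwedge_j\subst[\sub']{\subst{B_{ij}}})}}$. Then $\jkax[2]$ yields the target from $\sat{\svar[n]}{(\subst[\sub']{B})}$. This formula lies in $\prel{\T{\DIA[n]H}}$, because $\sub'$ only moves a variable whose index does not occur in $H$. Hence the variable positions of $\prel{\T{H}}$ are untouched, and its arbitrary-term positions stay arbitrary.

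The same adjustment gives the realiser $\just{\pvar[n]}{(\subst[\sub']{B})}$ in the $\T{\BOX[n]}$ case. In the binary cases, each component must be replaced by its image under the other side's substitution, e.g.\ $\subst[\sub_2]{B}\AND\subst[\sub_1]{C}$. This is how your remark about invoking the Substitution Lemma has to be made precise. With these realisers your proof goes through.
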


\begin{proof}
    We prove both simultaneously by induction on $G$.
    Except for diamonds, all cases are covered by Pischke~\cite{pischke_intermediate_2023}.
    So, we are left to consider the case when $G = \DIA[k] H$.

    1.~Let $\sat{\svar[k]}{(A_1^1 \AND \dots \AND A_{n_1}^1)}, \dots, \sat{\svar[k]}{(A_1^m \AND \dots \AND A_{n_m}^m)} \in \qrel{\T{\DIA[k]H}}$ with $A_1^1, \dots, A_{n_1}^1, \dots, A_1^m, \dots, A_{n_m}^m \in \qrel{\T{H}}$.
    By the inductive hypothesis, there exist $A \in \prel{\T{H}}$ and $\sub$ living on $H$ which meets the no new variable condition such that
    $A\proves{\JIKCS}{\subst{(A_1^1 \AND \dots \AND A_{n_1}^1 \AND \dots \AND A_1^m \AND \dots \AND A_{n_m}^m )}}$.
    So, for each $i \le m$,
    $\proves{\JIKCS}{A \IMP \subst{(A_1^i \AND \dots \AND A_{n_i}^i)}}$.
    By the Lifting Lemma~\ref{lem:lifting}, there exist ground terms $\tm t_1, \dots, \tm t_n \in \prfterms$ such that
    $\proves{\JIKCS}{\just{t_i}{(A \IMP \subst{(A_1^i \AND \dots \AND A_{n_i}^i)})}}$.
    Set $\tm t \colonequals \jsum{t_1}{\jsum{\dots}{t_m}}$ which is also a ground term, and by repetitive use of $\jsumaxl$ and $\jsumaxr$, we have
    $\proves{\JIKCS}{\just{t}{(A \IMP \subst{(A_1^i \AND \dots \AND A_{n_i}^i)})}}$
    for all $i \le m$.

    Now, define $\sub'$
    as $\sub'(\svar[k]) = \sappl{t}{\svar[k]}$ and as the identity otherwise -- note as $\tm t$ is a ground term, $\sub'$~meets the no new variable condition.
    By the Substitution Lemma~\ref{lem:substitution}, Definition~\ref{def:substitution} and the fact that $\tm t$ is a ground term, so $\subst[\sub']{\tm t} = \tm t$, we have
    $\proves{\JIKCS}{\just{t}{(\subst[\sub']{A} \IMP \subst[\sub']{\subst{(A_1^i \AND \dots \AND A_{n_i}^i)}})}}$
    for all $i = 1, \dots, m$.
    Using the $\jkax[2]$ axiom and $\modus$:
    $\proves{\JIKCS}{\sat{\svar[k]}{\subst[\sub']{A}} \IMP \sat{\sappl{t}{\svar[k]}}{\subst[\sub']{\subst{(A_1^i \AND \dots \AND A_{n_i}^i)}}}}$.
    Which can be rewritten as
    $\sat{\svar[k]}{(\subst[\sub']{A})}\proves{\JIKCS}{\subst[\sub']{\subst{(\sat{\svar[k]}{(A_1^i \AND \dots \AND A_{n_i}^i)})}}}$
    since $\subst[\sub']{\subst{\svar[k]}}= \subst[\sub']{\svar[k]} = \sappl{t}{\svar[k]}$.
    Propositional reasoning gives
    $\sat{\svar[k]}{(\subst[\sub']{A})}\proves{\JIKCS}{\subst[\sub']{\subst{(\sat{\svar[k]}{(A_1^1 \AND \dots \AND A_{n_1}^1)} \AND \dots \AND \sat{\svar[k]}{(A_1^m \AND \dots \AND A_{n_m}^m)})}}}$
    where indeed $\sub \sub'$ is a substitution which lives on~$\DIA[k] H$, meets the no new variable condition and $\sat{\svar[k]}{(\subst[\sub']{A})} \in \prel{\T{\DIA[k] H}}$.

    2.~Let $\sat{\tm m_1}{A_1}, \dots, \sat{\tm m_n}{A_n} \in \qrel{\F{\DIA[k] H}}$ with $A_1, \dots, A_n \in \qrel{\F{H}}$ and $\tm m_1, \dots, \tm m_n \in \satterms$.
    By Proposition~\ref{prop:orterms}, there exists a satisfier $\tm m \in \satterms$ such that
    $(*)$ $\sat{\tm m_1}{A_1} \OR \dots \OR \sat{\tm m_n}{A_n} \proves{\JIKCS}{\sat{\tm m}{(A_1 \OR \dots \OR A_n)}}$.
    By the inductive hypothesis, there exist $A \in \prel{\F{H}}$ and $\sub$ that lives on $H$ (and meets the no new variable condition) such that $(**)$
    $\subst{(A_1 \OR \dots \OR A_n)} \proves{\JIKCS}{A}$.
    Note that by definition $\sub$ also lives on~$\DIA[k] H$ trivially.
    By the Substitution Lemma~\ref{lem:substitution} applied to $(*)$, we have
    $\subst{(\sat{\tm m_1}{A_1} \OR \dots \OR \sat{\tm m_n}{A_n})}\proves{\JIKCS}{\subst{(\sat{\tm m}{(A_1 \OR \dots \OR A_n)})}}$.
    By the Lifting Lemma~\ref{lem:lifting} applied to $(**)$, there exists $\tm n \in \satterms$
    such that
    $\sat{\subst{\tm m}}{\subst{(A_1 \OR \dots \OR A_n)}} \proves{\JIKCS}{\sat{\tm n}{A}}$, 
    equivalently (by Definition~\ref{def:substitution})
    $\subst{(\sat{\tm m}{(A_1 \OR \dots \OR A_n)})}\proves{\JIKCS}{\sat{\tm n}{A}}$.
    Hence, $\subst{(\sat{\tm m_1}{A_1} \OR \dots \OR \sat{\tm m_n}{A_n})}\proves{\JIKCS}{\sat{\tm n}{A}}$ where indeed $\sat{\tm n}{A}\in\prel{\F{\DIA[k]H}}$ and $\sub$ is a substitution living on $\DIA[n] H$ and meets the no new variable condition.
\end{proof}

Finally, our central Realisation Theorem falls as a corollary of the results harvested this far:
from pre-realisers which are semantically justified by the Pre-realisation theorem~\ref{thm:quasreal} we extract potential realisers which are also syntactically designed through the Condensing theorem~\ref{thm:condensing} to follow precisely the original structure of the modal formula.
\begin{proof}[Proof of Theorem~\ref{thm:realisation}]
        Suppose $\proves{\IK}{G}$.
        By the Pre-realisation Theorem~\ref{thm:quasreal}, there exist $A_1, \dots, A_n \in \qrel{\F{G}}$ 
        such that 
        $\proves{\JIKCS}{A_1 \OR \dots \OR A_n}$.
        By the Condensing Theorem~\ref{thm:condensing}, there exists $A \in \prel{\F{G}}$ and a substitution $\sub$ such that
        $\proves{\JIKCS}{\subst{(A_1 \OR \dots \OR A_n)} \IMP A}$.
        By the Substitution Lemma~\ref{lem:substitution},
        $\proves{\JIKCS}{\subst{(A_1 \OR \dots \OR A_n)}}$
        and so by modus ponens
        $\proves{\JIKCS}{A}$.
        Let $\real{G} \colonequals A$ and 
        we have $\forget{\real{G}} = G$ by Proposition~\ref{prop:prelforget}.
\end{proof}

\section{Conclusion}\label{sec:concl}

We have presented basic modular and modular models for the justification counterparts to~$\IK$ which let us derive an alternative semantic proof of the realisation theorem from $\IK$ to $\JIK$.
This method could be further adapted on the one hand to 
sublogics of $\IK$ such as the constructive modal logic $\CK$ and on the other to
extensions of $\IK$
such as the other logics in the modal $\ISfive$ cube by following in the classical case~\cite{artemov_logic_1999,rubtsova_evidence_2006,fitting_realization_2011,kuznets_justifications_2012,goetschi_realization_2012,borg_realization_2015},
or the family of Geach/Scott-Lemmon logics~\cite{fitting_modal_2016}.
For example, the modal logic $\ISfour$ extends $\IK$ with the following axioms.
\begin{center}
\begin{tabular}{r @{\hspace{0cm}} c @{\hspace{0.1cm}} l @{\hspace{1cm}} r @{\hspace{0cm}} c @{\hspace{0.1cm}} l @{\hspace{1cm}}  r @{\hspace{0cm}} c @{\hspace{0.1cm}} l @{\hspace{1cm}} r @{\hspace{0cm}} c @{\hspace{0.1cm}} l}
   $\taxb$ & :  &  $\BOX G \IMP G$ & 
   $\taxd$ & :  &  $G \IMP \DIA G$ & 
   $\faxb$ & :  &  $\BOX G \IMP \BOX \BOX G$ & 
   $\faxd$ & :  &  $\DIA \DIA G \IMP \DIA G$
\end{tabular}
\end{center}

%
\noindent Semantically, this means imposing $\ikrel$ is reflexive and transitive in any model $\tuple{\ikworlds, \ikirel, \ikrel,\basicval}$.
In the corresponding justification logic~\cite{marin_justification_2025}, we add the following axioms to $\BaseAx$.
\begin{center}
\begin{tabular}{r @{\hspace{0cm}} c @{\hspace{0.1cm}} l @{\hspace{1cm}} r @{\hspace{0cm}} c @{\hspace{0.1cm}} l @{\hspace{1cm}}  r @{\hspace{0cm}} c @{\hspace{0.1cm}} l @{\hspace{1cm}} r @{\hspace{0cm}} c @{\hspace{0.1cm}} l}
   $\jtaxb$ & :  &  $\just{t}{A} \IMP A$ & 
   $\jtaxd$ & :  &  $A \IMP \sat{m}{A}$ & 
   $\jfaxb$ & :  &  $\just{t}{A} \IMP \just{\jbang{t}}{\just{t}{A}}$ & 
   $\jfaxd$ & :  &  $\sat{m}{\sat{n}{A}} \IMP \sat{n}{A}$
\end{tabular}
\end{center}
%
An adaptation to the basic modular models in Section~\ref{sec:basic} in line with models for the justification counterpart to $\logic{S4}$~\cite{mkrtychev_models_1997,artemov_ontology_2012}, and imposing $\rel$ is reflexive and transitive, should be sufficient.
Similarly, we conjecture that one can find justification counterparts to the family of logics defined by extending $\IK$ with $\DIA^k \BOX^l G \IMP \BOX^m \DIA^n G$ where $k,l,m,n \in \nat$,
by using the frame conditions given in~\cite{plotkin_framework_1986}.
The challenge could be defining additional operations on proof terms and satisfier terms.

Regarding sublogics such as $\CK$, the canonical model construction for modular models would need to be generalised to build over not only prime sets, but pairs of prime sets and \emph{segments} following the methodology introduced by~\cite{wijesekera_constructive_1990} and developed for all the logics between $\CK$ and $\IK$ by~\cite{GroShiClo25}.


Some variants of intuitionistic justification logic can be seen as extending the typed $\lambda$-calculus with reflective capabilities~\cite{artemov_unified_2002}. 
Applications of intuitionistic justification logic could extend to $\lambda$-calculi and type theory~\cite{bellin_extended_2001},
and lead to more expressive modal $\lambda$-calculi, notably for type constructors corresponding to the $\DIA$ modality~\cite{valliappan2025lax}.

%

\bibliographystyle{eptcs}
\bibliography{bibliography}

\newpage
\appendix

\end{document}